\documentclass[11pt, oneside]{amsart}   	
\usepackage[letterpaper, margin=2cm]{geometry}
\usepackage[parfill]{parskip}    						
\usepackage{amssymb, amsthm, amsmath, mathtools, bm, bbm}
\usepackage{enumitem}
\usepackage[obeyspaces,hyphens,spaces]{url}
\usepackage{mdframed}
\usepackage{mathrsfs}
\usepackage{accents}
%\usepackage {todonotes}

% centered alignment for roman numbers
\SetLabelAlign{Center}{\hfil#1\hfil}
\SetLabelAlign{CenterWithParen}{\hfil(\makebox[1.0em]{#1})\hfil}

%--------Showkeys-------------------

%\usepackage[notref, notcite]{showkeys}
%\renewcommand*{\showkeyslabelformat}[1]{                          % This bit of code gets longer references to fit in the margins better
%   \fbox{\vbox{\hsize=1.1cm\normalfont\small\url{#1}\par}}
%   }

%------------Graphics Setup----------------------

\usepackage{graphicx, overpic}
\usepackage[usenames,dvipsnames]{xcolor}

\graphicspath{{./Figures/}}

%----------Page Setup----------------------------

\allowdisplaybreaks
\numberwithin{equation}{section}

%--------Hyperref setup--------------------------

\usepackage{hyperref}
\hypersetup{
    bookmarks=true,         	% show bookmarks bar?
    unicode=false,          		% non-Latin characters in Acrobat’s bookmarks
    pdftoolbar=true,        		% show Acrobat’s toolbar?
    pdfmenubar=true,        	% show Acrobat’s menu?
    pdffitwindow=false,     		% window fit to page when opened
    pdfstartview={FitH},    		% fits the width of the page to the window
   pdftitle={title},    					% title
   	pdfauthor={people},     	% author
    linktocpage=true,			% link to page not section title in TOC
    pdfnewwindow=true,      	% links in new PDF window
    colorlinks=true,       			% false: boxed links; true: colored links
    linkcolor=blue,          		% color of internal links (change box color with linkbordercolor)
    citecolor=PineGreen,    	% color of links to bibliography
    filecolor=magenta,      		% color of file links
    urlcolor=cyan           		% color of external links
}

%------------TIKZ packages-----------------------
\usepackage{tikz} 
\usetikzlibrary{shapes,snakes,calc,fit}
\usepackage{pgf}
\usepackage{pgfplots}
\usepackage{pgflibraryarrows}
\usepackage{pgflibrarysnakes}
\usetikzlibrary{decorations.text}
\usepgfmodule{shapes}
\usetikzlibrary{decorations.pathmorphing}
\usetikzlibrary{decorations.markings}
\usetikzlibrary{patterns}
\usetikzlibrary{automata}
\usetikzlibrary{positioning}
\usetikzlibrary{spy}
\usetikzlibrary{backgrounds}
\tikzset{partial ellipse/.style args={#1:#2:#3}{insert path={+ (#1:#3) arc (#1:#2:#3)} }}
\tikzset{->-/.style={decoration={ markings, mark=at position #1 with {\arrow{>}}},postaction={decorate}}}
\tikzset{-<-/.style={decoration={ markings, mark=at position #1 with {\arrow{<}}},postaction={decorate}}}

%-----------Theorem Stylings------------------
\theoremstyle{theorem}
\newtheorem{theorem}{Theorem}[section]
\newtheorem{lemma}[theorem]{Lemma}
\newtheorem{corollary}[theorem]{Corollary}
\newtheorem{remark}[theorem]{Remark}

\newtheorem{assumption}[theorem]{Assumption}

\newtheorem{prop}[theorem]{Proposition}
\newtheorem{thm}[theorem]{Theorem}

\newtheorem{note}[theorem]{Note}

\theoremstyle{definition}

\newtheorem{RHP}{Riemann--Hilbert problem}

%------------Delimeters---------------------

\renewcommand{\(}{\left(}
\renewcommand{\)}{\right)}

\def\le{\left}
\def\ri{\right}
\def\d{{\rm d}}
\newcommand{\rvline}{\hspace*{-\arraycolsep}\vline\hspace*{-\arraycolsep}}

%-----------Calculus shorthands-----------

\newcommand{\od} [3] [ ] {\frac{\mathrm{d}^{#1}#2}{\mathrm{d}{#3}^{#1}} }
\newcommand{\pd} [3] [ ] {\frac{\partial^{#1}#2}{\partial{#3}^{#1}} }

%-----------Color-----------------------------

%-----------Boldface Letters----------
\newcommand{\R}{\mathbb{R}}
\newcommand{\Z}{\mathbb{Z}}
\newcommand{\C}{\mathbb{C}}
\newcommand{\Id}{\operatorname{Id}}

\newcommand{\bigo}[1]{\mathcal{O} \left( #1 \right) }

%---------Complex Analysis notation----------
\def\Re{\operatorname{Re}}  
\def\Im{\operatorname{Im}}
\newcommand{\Res}{\operatorname*{Res}}

%--------Jacobi elliptic functions ---------------
\DeclareMathOperator{\dn}{dn}
\DeclareMathOperator{\nd}{nd}
\DeclareMathOperator{\sn}{sn}

%---------Useful Function Names --------------
\newcommand{\sgn}{\operatorname{sgn}}
\newcommand{\sech}{\operatorname{sech}}
\newcommand{\ol}{\overline}

\def\Tr{\mathop{{\rm Tr}}}
\DeclareMathOperator{\inside}{int}

%--------Bibliography setup ---------------------

\usepackage[backend=bibtex,maxbibnames=99,style=alphabetic]{biblatex}
\bibliography{SolitonGas.bib}

%%%%%%%%%%%%%%%%%%%%%%%%%%%%%%
%
%               START  OF  THE PAPER 
%
%%%%%%%%%%%%%%%%%%%%%%%%%%%%%%
\begin{document}

\title[Soliton versus the gas]{Soliton versus the gas: \\
Fredholm determinants, analysis,  and the rapid oscillations behind the kinetic equation}

\author{Manuela Girotti}
\author{Tamara Grava}
\author{Robert Jenkins}
\author{Ken T-R McLaughlin}
\author{Alexander Minakov}

\address[Girotti]{Department of Mathematics and Computing Sciences, Saint Mary's University, 923 Robie St, Halifax, NS B3H 3C3}
\email{manuela.girotti@smu.ca}
\address[Grava]{SISSA, via Bonomea 265, 34136 Trieste, Italy and School of Mathematics, University of Bristol, UK}
\email{grava@sissa.it}
\address[Jenkins]{Department of Mathematics, University of Central Florida, 4393 Andromeda Loop N, Orlando, FL 32816}
\email{robert.jenkins@ucf.edu}
\address[McLaughlin]{Department of Mathematics, Tulane University, 6823 St Charles Ave, New Orleans, LA 70118}
\email{kmclaughlin@tulane.edu}
\address[Minakov]{Department of Mathematical Analysis, Univerzita Karlova,  Sokolovsk\'a 83 Praha 8, 186 75 Czech Republic}
\email{minakov@karlin.mff.cuni.cz}

\date{\today}

\begin{abstract}
We analyze the case of a {dense} modified Korteweg de Vries (mKdV) soliton gas and its large time behaviour in the presence of a single trial soliton. We show that the solution can be expressed in terms of Fredholm determinants as well as in terms of a Riemann--Hilbert problem.  We then show that the solution can be decomposed as the sum of the background gas solution (a modulated elliptic wave), plus a soliton solution: the individual expressions are however quite convoluted due to the interaction dynamics.  Additionally, we are able to derive the  local phase shift of the gas after the passage of the soliton, and we can trace the location of the soliton peak as the dynamics evolves.
Finally we show that the soliton peak, while interacting with the soliton gas, has an oscillatory velocity whose leading order average value satisfies  the  kinetic   velocity equation analogous to the one posited by V. Zakharov and G. El. 
\end{abstract}

\maketitle

\tableofcontents

\section{Introduction}
\label{sec-intro}

Solitons are fundamental, localized solutions of nonlinear evolution equations.  They can appear as single entities; traveling wave solutions that propagate without deformation.  They can also appear as ensembles, evolving as a collective and asymptotically decomposing into isolated solitons and (possibly) sub-ensembles of solitons.  Since the discovery of ensembles of solitons, the interpretation of them as particles has been a source of novel investigations.

This paper concerns the interaction of a single soliton with a dense ``gas" of solitons in the setting of the (focusing) modified KdV equation.  The mKdV equation, 
\begin{eqnarray}\label{mkdv}
q_{t} + 6 q^{2} q_{x} + q_{xxx} = 0 \ ,\quad x\in\mathbb{R},\;t\in\mathbb{R}_+,
\end{eqnarray}
has a soliton solution of the form
\begin{equation}
\label{soliton}
q(x,t) =\pm 2 \kappa \sech [2\kappa ( x  - 4 \kappa^2 t - x_0 ) ]
\end{equation}
where the quantity $4\kappa^2$ is the wave velocity, $\kappa \in \R_+$, and $x_0 $ is a phase shift. The solution with the positive hump  is the standard  \emph{soliton}, while the negative amplitude solution is called an \emph{anti-soliton}.

For general solutions on the line which decay  sufficiently quickly at infinity, it is well known that $q$ solves the mKdV equation if and only if there is a simultaneous solution ${\bm \Phi}={\bm \Phi}(k;x,t)$, $k\in\mathbb{C}$, to the following Lax pair equations:
\begin{equation}
\label{eq:MKdVLaxPair}
\begin{aligned}
{\bm \Phi}_{x} &= \begin{pmatrix}
- i k & q(x,t) \\
- q(x,t) & i k \\
\end{pmatrix}
{\bm \Phi}, \\
{\bm \Phi}_{t} &= 
\begin{pmatrix} - 4 i k^{3} + 2 i k q^{2} & 4 k^{2} q + 2 i k q_{x} - 2 q^{3} - q_{xx} \\
- 4 k^{2} q + 2 i k q_{x} + 2 q^{3} + q_{xx} & 4 i k^{3} - 2 i k q^{2} \\
\end{pmatrix} {\bm \Phi} \ . 
\end{aligned}
\end{equation}
The spectral data associated to \eqref{eq:MKdVLaxPair} is comprised of the continuous spectrum and its reflection coefficient $r(k)$ for $ k \in \mathbb{R}$ and the discrete spectrum.
Generic solitons and breathers correspond to simple eigenvalues,  $\pm i\kappa_j$,  $\kappa_j>0$ and $\pm \kappa_j$, $\pm \overline{\kappa_j}, $, $\Im \kappa_j>0$,  respectively.  In this manuscript we consider only the case of discrete spectrum associated to solitons  $\pm i \kappa_{j}, \ \kappa_{j} >0$, $j=1,\dots,N$, which, together with  the associated norming constants $\chi_{j}\in\R\backslash\{0\}$ (one for each eigenvalue) and  reflection coefficient $r(k)$, completely determines the function $q$.  

A one soliton solution with velocity $4\kappa^2$ is reflectionless with a discrete spectrum consisting of a single eigenvalue $i\kappa$. The norming constant $\chi$ and eigenvalue $i\kappa$ determine the phase shift $x_0$ in \eqref{soliton} via the formula
\begin{equation}
\label{x0}
x_0=\frac{1}{2\kappa}\ln\frac{2\kappa}{|\chi|}\in\R. 
\end{equation}
The sign of $\chi$ determines whether the solution is a soliton ($\chi>0$) or an anti-soliton ($\chi<0$).

Finally, as $q(x,t)$ evolves according to the mKdV equation, the spectral data  have a simple and integrable evolution.  The problem of reconstructing  the solution $q(x,t)$ to the mKdV equation at any time $t>0$, from the evolved spectral data, is referred to as the inverse problem.

The characterization of the spectral data and the formulation of the inverse problem are achieved via a detailed consideration of fundamental sets of solutions to equation (\ref{eq:MKdVLaxPair}).  This has been described in \cite{Wadati72,BealsCoifman84,ItsNovokshenov86} for initial data decaying at infinity and in \cite{GravaMinakov20} for step-like initial data. The inverse problem can be formulated as a Riemann--Hilbert (RH) problem.

The solution of the mKdV equation that represents an ensemble of  $N$ solitons (without radiative component) can be written in terms  of a Fredholm determinant  
\begin{equation}
q_{N}(x,t)= i \dfrac{\partial }{\partial x}\ln\det \le(\Id_{L^2(-\infty,x)} + \mathcal K_N\ri) - i \dfrac{\partial }{\partial x}\ln\det \le(\Id_{L^2(-\infty,x)} - \mathcal K_N\ri)\ , \label{intro:mKdV_sol_FD_N}
\end{equation}
where $  \mathcal K_N$ is an integral operator of finite rank, with kernel
\[
F_N(s,t)=-i \sum_{j=1}^N \chi_j e^{-i \theta_j( s,t)}\ , \quad  \theta_j(s,t) = x \kappa_j - 4t \kappa_j^3 \ . 
\]
Here, each soliton is characterized by the norming constant $\chi_j$ and the  point spectrum $i \kappa_j$, for $j=1,\dots, N$. 
The above formula is  derived in Section~\ref{sec-fredholm} formul\ae \  \eqref{q_fredh1}--\eqref{q_fredh2} from the work of Wadati \cite{Wadati72}.  

\vskip .2in

\paragraph{\bf Convergence to a soliton gas.}
In the present paper we are interested in purely solitonic solutions where the number of solitons $N$ goes to infinity, while $(x,t)$ lies in (arbitrarily large) compact sets of $\R\times \R_+$, and  their spectrum is confined in an interval $[i\eta_1,i\eta_2]$: we call such types of solution a \emph{dense} soliton gas.

Once the limit is taken, we consider the soliton gas on the whole real line $x\in \R$. Similarly to the analysis conducted in \cite{GirGraJenMcL}, at fixed values of $t\in \R_+$, such a soliton gas converges slowly  to an elliptic wave as $x \to +\infty$, while it converges rapidly to zero as $x\to -\infty$. 

The analysis establishing the existence of a solution in the $N \to \infty$ limit is fairly straightforward and could be carried out in a manner entirely similar to what was done for the case of the KdV equation in \cite{GirGraJenMcL}.  Here we take a different approach, and characterize the $N$-soliton solution in terms of Fredholm determinants, for which the $N \to \infty$ limit can be established. The additional characterization of the solution in terms of a RH problem follows from techniques from \cite{BertolaCafasso11}. 

Our main goals in this paper are: (1) to prove that the kinetic theory applies to our soliton gas;  and (2) to provide a very detailed description of the highly oscillatory interaction between a large trial soliton travelling through a soliton gas from which the averaged velocity of the trial soliton can be determined.  These two results were not considered in \cite{GirGraJenMcL}.

Although we do not consider any randomness in the initial configuration of solitons, it is certainly possible to introduce randomness into the $N$-soliton data, in such a way that the average behaviour of the soliton gas is captured by our analysis.

The concept of an infinite ensemble of solitons was already analysed by Zaitsev \cite{Zaitsev} and Boyd \cite{Boyd}. In these papers, it is shown that the sum of an infinite number of equally spaced and identical solitons coincides with the elliptic solution of the KdV equation. Furthermore, Gesztesy, Karwowski and Zhao  showed the existence of the infinite soliton limit for KdV when the point spectrum is bounded and the norming constants have sufficient decay \cite{Gesztesy92}.

The notion of a soliton gas was first put forth by Zakharov in 1971, for the case of the KdV equation,
\begin{eqnarray}
q_{t} + q q_{x} + q_{xxx} = 0. \ 
\end{eqnarray}
The fundamental calculation is to   prepare a two-soliton solution so that for $t \to - \infty$, one has
\begin{eqnarray*}
q(x,t) \approx 12 \kappa_{1}^{2}\, \mbox{sech}^{2}\left(\kappa_{1}(x-4 \kappa_{1}^{2}t)\right)+ 12 \kappa_{2}^{2}\, \mbox{sech}^{2}\left(\kappa_{2}(x-4 \kappa_{2}^{2}t)
\right) \ ,
\end{eqnarray*}
with $\kappa_{1} < \kappa_{2}$, and there are two isolated peaks, a taller one at (roughly) $x = 4 \kappa_{2}^{2} t$ and a smaller one at $x = 4 \kappa_{1}^{2}t$, with the taller one significantly further to the {left} of the smaller one.
Then, for $t \to + \infty$, one finds
\begin{eqnarray*}&&
q(x,t) \approx 12 \kappa_{1}^{2} \, \mbox{sech}^{2}\left[\kappa_{1}\left(x-4 \kappa_{1}^{2}t + \frac{1}{\kappa_{1}}\ln\left|\frac{\kappa_{2}+\kappa_{1}}{\kappa_{2}-\kappa_{1}}
\right|\right)\right]
+
 \\&&
\qquad \qquad \qquad 
+12 \kappa_{2}^{2}\, \mbox{sech}^{2}\left[\kappa_{2}\left(x-4 \kappa_{2}^{2}t-\frac{1}{\kappa_{2}}\ln\left|\frac{\kappa_{2}+\kappa_{1}}{\kappa_{2}-\kappa_{1}}\right|\right)\right] ,
\end{eqnarray*}
and one sees that the two peaks are again isolated, and travelling to the right, but now the taller peak has overtaken the smaller one, with both peaks having received a shift in the position of their centers (manifested by the logarithmic terms appearing above).  
\begin{figure}
\hspace*{\stretch{1}}
	\begin{overpic}[scale=0.4]{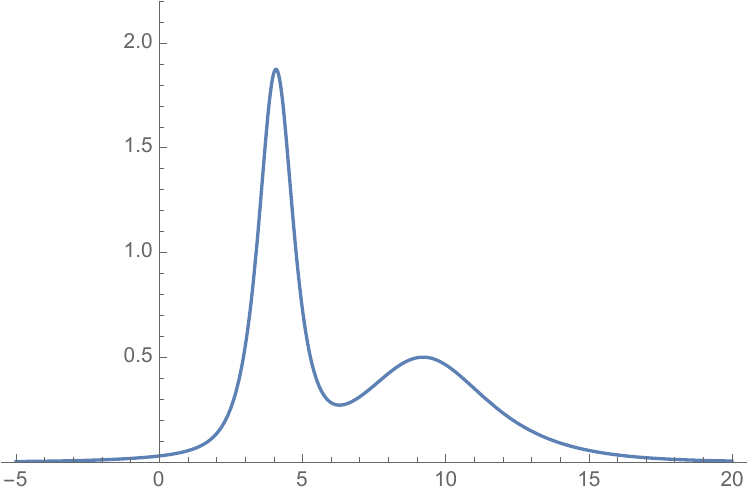}
	   \put(40,65){$t=2.358$}
	   \put(8,70){$\tiny {}_{q(x,t)}$}
	   \put(102,5){$\tiny {}_{x}$}
	\end{overpic}
\hspace*{\stretch{1}}
	\begin{overpic}[scale=.4]{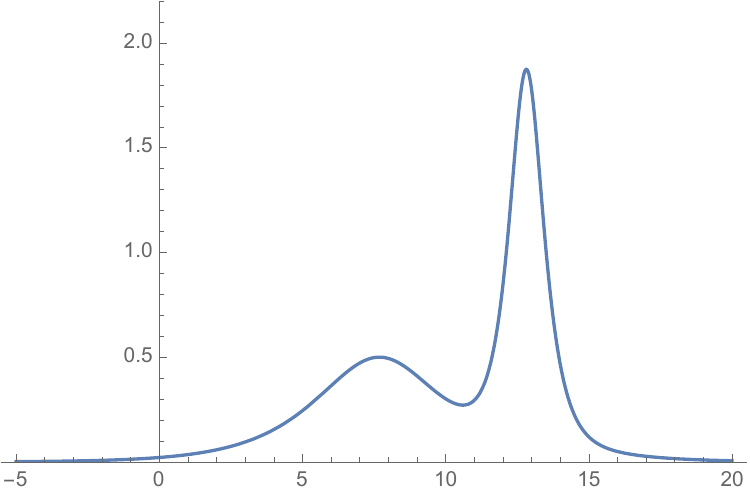}
	   \put(34,65){$t= 7.073$}
	   \put(8,70){$\tiny {}_{q(x,t)}$}
	   \put(102,5){$\tiny {}_{x}$}
	\end{overpic}
\hspace*{\stretch{1}}
\caption{
A two soliton solution to the mKdV equation, determined by RH problem \eqref{rhp:M}, with $N=2$, $\kappa_{1} = 1/4, \kappa_{2} = 1$, $\chi_{2} = 2$, and $\chi_{1} = \frac{25}{2^{1/4} 3^2 e^{5}}$.  The left plot is at $t=2.358$, and the right plot is at $t=7.073$.  Note that the smallest peak emerges shifted to the left, though the solitons themselves, if they were each propagating independent of the other, would move to the right with constant velocities.}
\label{fig:phase}
\end{figure}

Zakharov considered ``the propagation of an individual soliton in a `gas' " \cite[p.540]{Z71}.  Presuming the soliton gas to be \emph{dilute}, the trial soliton interacts with each member of the gas in sequence, and continually accumulates  shifts in its position, which effectively alter its velocity.  He extracted the kinetic equation for the trial soliton velocity, $v=v(\kappa;x,t)$, where the soliton's initial speed (before interaction with the gas) was $4 \kappa^{2}$.  
\begin{eqnarray}\label{eq:00}
v(\kappa) = 4 \kappa^{2} + \frac{1}{\kappa} \int_{0}^{\infty} \ln\left|
\frac{s+\kappa}{s-\kappa}
\right| \left(\kappa^2-s^2\right) f(s)\,  \d s \ .
\end{eqnarray}
The quantity  $f=f(\kappa; x,t)$ is the distribution function for the soliton gas at time $t$,  with respect to the spectral parameter $\kappa$ and the spatial coordinate $x$, i.e. it is the density in the phase space, giving the number of solitons per unit interval of the spectrum and per unit interval of space.
Equation (\ref{eq:00}) was derived under the assumption that $f$ is small (i.e. the gas is dilute) relative to the characteristic value of $s$. 

\vskip .2in

\paragraph{\bf The interaction of a trial soliton with a dense soliton gas.}

In 2003 \cite{El03}, El derived an integral equation for the velocity   $v=v(\kappa; x,t)$  of the trial soliton with point spectrum $\kappa$  propagating in a \emph{dense} KdV soliton gas:
\begin{equation}
\label{eq:01}
v(\kappa) = 4 \kappa^{2} + \frac{1}{\kappa} \int_{0}^{\infty} \ln\left|
\frac{s+\kappa}{s-\kappa}
\right| \left(v(\kappa)-v(s)\right) f(s)\,  \d s \ ,
\end{equation}
where the soliton density  $f=f(\kappa; x,t)$  satisfies the continuity equation
\begin{equation}\label{eq:02}
f_{t} + \left( v\, f \right)_{x} = 0.
\end{equation}
Analytical developments concerning the small dispersion limit of the KdV equation in the '80s, as it turns out, are connected to the same velocity equations \eqref{eq:01}--\eqref{eq:02}. In \cite{LaxLevI,LaxLevII,LaxLevIII}, Lax and Levermore considered the behaviour of solutions of the KdV equation in the small dispersion limit, 
\[
q_{t} - 6 q q_{x} + \epsilon^{2} q_{xxx} = 0,  \qquad \ q(x,0) = q_{0}(x) \ ,
\]
in which $\epsilon \to 0$, while the initial data is assumed to be independent of $\epsilon$.  They showed that the (infinite) solitonic component of the spectral data associated to the initial data drives the dynamics, both before and after the formation of a shock in the underlying Hopf equation, which suggests that the small dispersion limit of the KdV equation is an instance of a regular soliton gas.  Indeed, the methods presented in this paper can be used to show that the Lax--Levermore minimizer yields densities that encode all aspects of the kinetic theory.

The kinetic theory for solitons  has been extended to the cases of focusing NLS equation \cite{ElKa05}, defocusing and resonant NLS equations \cite{CongyElRoberti21}, and to the case of breather gasses in the NLS equation  \cite{Tovbis2}.  Such configurations have been shown to exist and their properties studied in experimental and theoretical investigations in  \cite{LowmanHoefer,LowmanHoeferEl,MaidenFrancoWebbHoefer}. 
From these considerations, a soliton gas is thought of as a continuum accumulation of large numbers of solitons that are in some sense random, and the kinetic equations provide a description of the velocity of a trial soliton (averaged over realizations of the soliton gas).

Computational simulations for ensembles of well-separated solitons in the KdV equation were carried out in \cite{PelinovskySergeeva,SergeevaPelinovskyTalipova,DutPel}, and in the mKdV equation in \cite{PelShu}. In these works, asymptotic methods and Monte Carlo simulations were used to study statistical properties of solutions beyond the mean behaviour.  In particular in \cite{PelShu} for the case of ``bipolar soliton fields" for the modified KdV equation, the likelihood of large-amplitude waves increases during the dynamical evolution.

The above mentioned computational works are stochastic in nature.  In our work, once the $N \to \infty$ limit is taken, the entire evolution is completely determined by a spectral function $r(k)$, so there is no randomness in the soliton gas.  Nonetheless, surprisingly, our analysis proves that the same Zakharov--El kinetic equations are satisfied.  More precisely, we explicitly describe the oscillations of the soliton while interacting with the gas and the phase shift effect on the gas, and  we show that the soliton velocity \emph{averaged over one period} satisfies such kinetic equations. Incidentally, our gas appears to be in the same class of solutions considered in \cite{Tovbis2} as \emph{condensate gasses}.

Looking to the near future, our methods should permit the introduction of randomness into the finite-$N$ soliton setup. The analysis as $N \to \infty$ to prove that the average behavior of the soliton gas is as described in this paper, would be very interesting.  Of course, an analytical description of the statistical fluctuations is a major challenge, which (to the best of our knowledge) has not been  rigorously considered in the literature.

At this stage it is useful to provide some precision to the notion of what constitutes a soliton gas.  A natural definition (it seems to us) is that a solution to a nonlinear wave equation should be considered a \emph{regular} soliton gas if (1) it arises as the $N \to \infty$ limit of a solution containing $N$ solitons (or other collections of soliton-like solutions naturally indexed by $N$), and (2) the averaged velocity of a trial soliton interacting with the background solution satisfies the above kinetic equation \eqref{eq:01}.  Although it may appear to be a ``fait accompli", by this definition our solution constructed as $N \to \infty$ is indeed a soliton gas.

\vskip .2in

\paragraph{\bf Interpretation as the interaction of a particle with a complex medium.}

In this paper we also prove that the soliton itself has a nontrivial effect on the (background) soliton gas solution, and this provides an interesting connection to the interaction between a particle and a complex medium.  Indeed, one may interpret the soliton gas (the background into which the trial soliton evolves and interacts) as a prepared complex medium.  As the trial soliton (interpreted as a particle) interacts with this medium, it's own velocity becomes highly oscillatory, providing an effective acceleration. Meanwhile, as the trial soliton passes through the soliton gas (which is also highly oscillatory), it leaves in its wake a phase shift in the medium's oscillations, which we explicitly calculate (see \figurename~\ref{fig:density}, and formula \eqref{shift} below). 

In this light, it is interesting to compare our work to the Toda shock problem as considered by \cite{DeiftVenakidesOba} (motivated by numerical simulations in \cite{HolianFlaschkaMcLaughlin}), and further studied in \cite{BlochKodama1}, \cite{BlochKodama2}, and very recently in \cite{EgorovaMichorTeschl}.  In the Toda shock problem, the complex medium is a quiescent Toda lattice excited by a single particle driven with fixed velocity into the lattice. The dynamics reveals the generation of a two-periodic solution near the driven particle, and an evolving elliptic wave region further into the medium. By contrast, in our work the trial soliton (particle) is not driven with fixed velocity and hence, while it affects the medium, the medium as well affects its evolution.  For the Toda shock problem, this would correspond to giving to one particle a large initial velocity and let it interact with the quiescent medium. 

The interactions between solitons and  a dispersive shock wave or rarefaction wave 
 can be treated via inverse scattering, or also on a more physical approach via  soliton-mean field interactions.
In particular, the interaction between  solitons,  breathers   and  a dispersive shock wave  was considered in the present context of the mKdV equation   in  \cite{GravaMinakov20}  and on a physical level in \cite{vdSEH21}.  Similar settings have been considered at the physical level for the defocusing NLS in \cite{SHE18} and  for the focusing NLS equation in \cite{BNL19} and on a mathematical level in \cite{BM21}. A detailed account of the available literature can be found in \cite{ACEHL22}.

\section{Statement of the results}
In this paper we consider the mKdV equation \eqref{mkdv} for a collection of solutions analogous to those constructed for the KdV equation in \cite{ZakZakDya} and analyzed in \cite{GirGraJenMcL}.  The ``gas" of solitons is produced from a continuum of poles accumulating on the intervals $(-i\eta_{2}, - i\eta_{1}) \cup(i \eta_{1}, i \eta_{2})$, with $0 < \eta_{1} < \eta_{2}$, and with positive norming constants.  
We study the behaviour of this soliton gas while it interacts with a single soliton   with spectral data $\pm i \kappa_{0}$ with $\kappa_{0} > \eta_{2}$, so that the trial soliton's velocity is greater than the velocity of any member of the gas of solitons. Our main results are  the following.  
\begin{itemize}
\item  The  derivation of an expression for the solution of the soliton gas plus the trial soliton in terms of a Fredholm determinant (Theorem~\ref{theorem1} and Corollary~\ref{corollary1.2}).
\item The asymptotic analysis of the behaviour of the solution $q(x,t)$  for large times  (Theorem~\ref{theorem1.3}).
\item The derivation of the dynamical   properties of the trial soliton   (Theorem~\ref{theorem1.4}  or  Theorem~\ref{thm:xpeak}). In particular we  determine  the peak position  $x_{\rm peak}(t)$ of the trial soliton  and we show that its velocity 
$\dot x_{\rm peak}$ has an {\it oscillatory} behaviour   while it interacts with the soliton gas. The   leading order average velocity   $\bar{v}_{\rm sol} (\kappa_0)$  of the soliton peak satisfies the kinetic equation
\begin{eqnarray*}
\bar{v}_{\rm sol} (\kappa_0)=  4 \kappa_0^{2} + \frac{1}{ \kappa_0} \int_{\eta_1}^{\alpha} \ln{
\left|
\frac{\kappa_0-s}{\kappa_0+s}
\right|
}(v_{{\rm group}}(s) - \bar{v}_{\rm sol} (\kappa_0) ) \, \partial_x \rho(is) \,  \d s \ ,
\end{eqnarray*}
where the density $\rho$ is defined  below  in \eqref{rho.intro} and the group velocity $v_{\rm group} := -\tfrac{\rho_t}{\rho_x}$ is defined in \eqref{vgroup}. Here the parameter  $\alpha$ with $\eta_1<\alpha\leq \eta_2$ depends on $x$ and $t$ and it is defined in  equation \eqref{Whitham01} and \eqref{Whitham02}.
\end{itemize}
We note that this last equation is the kinetic velocity equation for the mKdV equation analogous  to the one posited by El and co-authors  in \cite{El03, ElKa05, Tovbis2},  thus showing that the solution \eqref{intro:mKdV_sol_FD}--\eqref{intro:mKdV_sol1_FD} represents indeed a soliton gas.   In this case, the continuity equation \eqref{eq:02} is automatically satisfied for $v = v_{\rm group} =  -\tfrac{\rho_t}{\rho_x}$ and $f = \rho_x$. However, we emphasize that the true peak velocity $\dot x_{\rm peak}$ does \emph{not} satisfy Zakharov--El's kinetic equation, due to the presence of the oscillatory terms.

In order to derive the soliton gas solution,  rather than repeating the RH construction employed in  \cite[Section 2]{GirGraJenMcL}, in which an $N$-soliton solution is characterized by the meromorphic RH problem   and then the singular limit $N \to \infty$ is taken through Riemann--Hilbert gymnastics, we start from the equivalent representation in terms of Fredholm determinants \eqref{intro:mKdV_sol_FD_N}.

We want to consider the limit $N\to\infty$ under the additional assumptions:
\begin{itemize}
\item The poles $\{i \kappa _{j}^{(N)} \}_{j=1}^{N}$ are sampled from a smooth positive density function $\varrho( k )$ so that $\int_{\eta_{1}}^{ \kappa _{j}} \varrho( \eta ) \, \d \eta = j / N $, for $j = 1, \ldots, N$.
\item The coefficients $\{\chi_j\}_{j=1}^N$ are real and positive, such that $N \varrho(\kappa_j) \chi_j$ are discretizations of a given function:
\begin{gather}
N \varrho(\kappa_j) \chi_j = \frac{ r( i \kappa_j)}{2\pi} \qquad j=1,\ldots, N, \label{cjasR0}
\end{gather}
where  $r(k)$ is a real-valued,  continuous,  non-vanishing function of $k$ for $k\in  (i \eta_1, i\eta_2)$.
\end{itemize}

\begin{thm}
\label{theorem1}
Let $\Sigma_1=(i\eta_1,i\eta_2)$, $\eta_2>\eta_1$,  be the interval where the solitons accumulate and let $r(k)$ be a real-valued  positive and  continuous  function on $\Sigma_1$ whose discretization gives the  norming constants $\chi_j$  of the finite $N$-soliton solution according to  \eqref{cjasR0}. 
Let $\mathcal{K}:L^2(\Sigma_1)\to L^2(\Sigma_1)$ be the integral operator $\mathcal{K}[f](k)=\int_{\Sigma_1}K(k,y)f(y)\d y$ with kernel 
\begin{gather}\label{thm2.6_kernel}
{K}(k,y) =\frac{\sqrt{r(k)}e^{-i \, \theta(k;x,t)} \, \sqrt{r(z)} e^{-i \, \theta(z; x,t)}}{2\pi i(k+z)}, \qquad k,z \in \Sigma_1
\end{gather}
where $ \theta(k;x,t) = 4t k^3 + x k$.
Then the   function
\begin{equation}
q(x,t) = i  \dfrac{\partial }{\partial x}\ln\det \le(\Id_{L^2(\Sigma_1)} + {\mathcal K}\ri) - i \dfrac{\partial }{\partial x}\ln\det \le(\Id_{L^2(\Sigma_1)} - {\mathcal K}\ri) ,\label{intro:mKdV_sol_FD}
\end{equation}
is the soliton gas solution of the mKdV equation \eqref{mkdv}. 
\end{thm}

The expression \eqref{intro:mKdV_sol_FD} for the mKdV soliton gas solution is strikingly similar to the Tracy--Widom Fredholm determinant formula \cite{TracyWidom96} for the solution of the integrated version of the \emph{defocusing} mKdV: $q_t - q_{xxx} + 6q^2q_x=0$.  This expression is also similar to the one considered in \cite{BB}  and in \cite{KL}  for solving the weak noise theory of the Kardar--Parisi--Zhang equation and more generally in the study of Fredholm determinants of a class of Hankel composition operators \cite{Bothner2022}.
For example 
\begin{equation}
\le(q(x,t)\ri)^2 = -  \dfrac{\partial }{\partial x}\ln\det \le(\Id_{L^2(\Sigma_1)} -{\mathcal K}^2\ri),
\label{intro:q2}
\end{equation}
 for $t=0$  corresponds to  the square of a Hankel operator.
 
In addition to the gas of solitons, the potentials we will consider in this paper also have an additional soliton that will interact with the gas. The Fredholm determinant derivation is analogous: the spectral parameters $i\kappa_j$'s will accumulate within the interval $\Sigma_1$ except for one point ($i\kappa_0$, with corresponding norming constant $\chi\in\R\backslash\{0\}$), which will lie on the imaginary axis, outside $\Sigma_1$.
\begin{corollary} 
\label{corollary1.2}
The function
\begin{equation}
q(x,t) = i  \dfrac{\partial }{\partial x}\ln\det \le(\Id_{L^2(\mathcal C)} + {\mathcal K}\ri) - i \dfrac{\partial }{\partial x}\ln\det \le(\Id_{L^2(\mathcal C)} - {\mathcal K}\ri) \label{intro:mKdV_sol1_FD}
\end{equation}
is the solution of the mKdV equation representing a soliton gas plus a regular soliton, where $\mathcal C = \Sigma_1 \cup \mathcal C_0$, $\mathcal C_0$ being a small circle around the pole $i\kappa_0$, not intersecting the real line, nor $\Sigma_1$; the operator $\mathcal K$ has kernel
\[
{K}(k,z) =\frac{\sqrt{\widetilde r(k)}e^{-i \, \theta(k;x,t)} \, \sqrt{\widetilde r(z)} e^{-i \, \theta(z; x,t)}}{2\pi i(k+z)}, \qquad k,z \in \mathcal C\ ,
\]
with $\widetilde r(k)$ is defined as $\widetilde r(k) = r(k)$ for $k \in \Sigma_1$ and $\widetilde r(k) = \frac{\chi}{k - i\kappa_0}$ for $k \in \mathcal C_0$ and $\chi\in\R\backslash\{0\}$.
\end{corollary}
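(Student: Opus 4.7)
The plan is to deduce Corollary~\ref{corollary1.2} from Theorem~\ref{theorem1} by treating the isolated trial soliton at $i\kappa_0$ as a discrete contribution arising from a residue calculation on the small circle $\mathcal{C}_0$. I would begin from the $(N+1)$-soliton Fredholm representation \eqref{intro:mKdV_sol_FD_N} and split the kernel as $F_{N+1}(s,t) = F_N(s,t) - i\chi\, e^{\theta(i\kappa_0;s,t)}$, where $F_N$ encodes the $N$ solitons that will form the gas. The extra soliton term is a rank-one pointwise evaluation at $k=i\kappa_0$, and by Cauchy's residue theorem it equals the contour integral over $\mathcal{C}_0$ of the meromorphic weight $\widetilde r(k) = \chi/(k-i\kappa_0)$ against the remaining (analytic) factors. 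This immediately suggests the combined operator on $L^2(\mathcal{C})$ with $\mathcal{C}=\Sigma_1\cup\mathcal{C}_0$ and the weight $\widetilde r$ specified in the statement.

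\emph{Fredholm expansion and residue reduction.} To verify the formula intrinsically, I would expand $\det(\mathrm{Id}\pm\mathcal{K})$ on $L^2(\mathcal{C})$ as a series of principal minors and use the factorization
\[
\det\bigl[K(k_i,k_j)\bigr]_{i,j=1}^n = \prod_{i=1}^n \widetilde r(k_i)\,e^{-2i\theta(k_i;x,t)}\cdot \det\!\left[\frac{1}{2\pi i(k_i+k_j)}\right]_{i,j=1}^n,
\]
obtained by pulling the rank-one factors $\sqrt{\widetilde r}\,e^{-i\theta}$ out of each row and column. This simultaneously eliminates the $\sqrt{\widetilde r}$ branch ambiguity on $\mathcal{C}_0$. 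Splitting each integration variable between $\Sigma_1$ and $\mathcal{C}_0$: whenever two or more of the $k_i$'s lie on $\mathcal{C}_0$, the residues collapse them to the common point $i\kappa_0$, which forces two identical rows in the Cauchy-type determinant and kills the term. Hence only configurations with at most one variable on $\mathcal{C}_0$ survive, and in that case the residue theorem applied to $\widetilde r$ contributes a single additive rank-one modification corresponding to a discrete eigenvalue at $i\kappa_0$ with norming constant $\chi$.

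\emph{Recognition as the $(N+1)$-soliton limit.} Comparing the resulting bipartite series with the $(N+1)$-soliton formula, where the kernel splits as $F_{N+1}=F_N+F_{\mathrm{sol}}$, one sees that passing to the limit $N\to\infty$ for the gas block (exactly as in the proof of Theorem~\ref{theorem1}) while leaving $F_{\mathrm{sol}}$ intact reproduces termwise the two Fredholm determinants on $L^2(\mathcal{C})$ appearing in \eqref{intro:mKdV_sol1_FD}. Taking logarithmic $x$-derivatives and subtracting then yields the mKdV potential $q(x,t)$, which solves \eqref{mkdv} as a pointwise limit of bona fide $(N+1)$-soliton solutions.

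\emph{Main obstacle.} The delicate point is to justify simultaneously the trace-class membership of $\mathcal{K}$ on $L^2(\mathcal{C})$ and the termwise application of the residue theorem. Trace class follows from the estimates that gave Theorem~\ref{theorem1} on $\Sigma_1$, combined with the smoothness of $\widetilde r$ on $\mathcal{C}_0$ and the fact that $\mathcal{C}_0$ is uniformly bounded away from $-\mathcal{C}_0$ and $-\Sigma_1$, which keeps the Cauchy factor $1/(k+z)$ analytic near the contours. The termwise residue calculation is legitimized by absolute convergence of the Fredholm expansion via Hadamard's inequality, uniform on a fixed neighborhood of $\mathcal{C}_0$; here the assumption $\kappa_0>\eta_2$ together with the geometric choice of $\mathcal{C}_0$ (disjoint from $\Sigma_1$ and from the real axis) is essential.
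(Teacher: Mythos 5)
Your core route is the same as the paper's: you convert the discrete soliton into a contour integral over the small circle $\mathcal{C}_0$ with meromorphic weight $\widetilde r(\zeta)=\chi/(\zeta-i\kappa_0)$ via the residue theorem, append $\mathcal{C}_0$ to the gas contour, and then run the machinery of Theorem~\ref{theorem1} ($N\to\infty$ on the gas block, factorization through $L^2(-\infty,x)$ and the conjugation identity $\det(\Id\pm\mathcal{B}\circ\mathcal{A})=\det(\Id\pm\mathcal{A}\circ\mathcal{B})$). This is exactly what the paper does in the ``Extension to the soliton+gas setting'' paragraph of Section~\ref{sec-fredholm}. Your additional minor-expansion argument --- that any configuration with two or more integration variables on $\mathcal{C}_0$ collapses to $i\kappa_0$ and kills the Cauchy-type determinant by repeated rows, so the pole contributes only a rank-one piece --- is correct, is not in the paper, and has the side benefit of resolving the $\sqrt{\widetilde r}$ branch ambiguity on $\mathcal{C}_0$; it is a useful consistency check rather than a different proof. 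One small correction: the condition $\kappa_0>\eta_2$ is not what legitimizes the termwise residue computation; only the disjointness of $\mathcal{C}_0$ from $\Sigma_1$, from $\R$, and from $-\mathcal{C}_0$ (keeping $k+z$ bounded away from zero) is needed there. The inequality $\kappa_0>\eta_2$ enters only later, in the asymptotic analysis.

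There is one genuine gap: your closing claim that $q(x,t)$ ``solves \eqref{mkdv} as a pointwise limit of bona fide $(N+1)$-soliton solutions.'' Pointwise convergence of $q_N$ --- indeed even the trace-norm convergence of the determinants and of their first $x$-derivatives that you establish --- gives no control over $\partial_t q_N$ or $\partial_x^3 q_N$, so one cannot pass to the limit in a third-order PDE this way. The paper closes this step differently: it converts the limiting Fredholm determinants into the Riemann--Hilbert problem~\ref{rhp:X} via the Bertola--Cafasso correspondence (the jumps on $\mathcal{C}_0$ fold back into the residue conditions \eqref{Xres}), and then invokes Proposition~\ref{theorem_existence} (proved in Appendix~\ref{appB_solvability}), which yields unique solvability of the RH problem for all $(x,t)$, $C^\infty(\R_x\times\R_t)$ regularity, and a Lax-pair verification that $q$ defined by \eqref{q_via_X} solves mKdV. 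You need this step, or an equivalent argument giving uniform convergence of enough derivatives, to justify that the limiting expression is actually a solution rather than merely a limit of solutions.
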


From the Fredholm determinant formula, we can derive the following meromorphic RH problem, where in addition to jumps across the intervals $\Sigma_{1}$ and $\Sigma_{2}:= - \overline{\Sigma}_1 = (-i \eta_{2}, -i \eta_{1})$ oriented upwards (the gas), we are accounting for the presence of an extra pair of poles, at $\pm i \kappa_{0}$ (the soliton). 
\begin{RHP}[\bf Soliton + gas]\label{rhp:X}
Let $r(k)$ be a positive real valued  function defined on $\Sigma_1$. 
 Find a $2\times 2$ matrix valued function $\bm{X}(\, \cdot\,; x,t)$ with the following properties: 
	\begin{enumerate}[label=\arabic*.]
		\item $\bm{X}(k;x,t)$ is  analytic for $k \in \C \setminus (\Sigma_1 \cup \Sigma_2)\cup\{\pm i\kappa_0\}.$
		\item $\bm{X}(k;x,t) = \bm I + \bigo{k^{-1}}$ as $k \to \infty$,
		where $ \bm I $ is the $2\times 2$ matrix identity.
		\item For $k \in \Sigma_1 \cup \Sigma_2$, the boundary values $\bm{X}_\pm(k;x,t) = \bm{X}(k \mp 0; x,t)$ satisfy the jump relation
		\begin{gather}\label{Xjumps}
		  \bm{X}_+(k) = \bm{X}_-(k) 
		  \begin{dcases} 
		    \begin{bmatrix} 1 & 0 \\ i r(k) e^{-2i \theta(k;x,t)} & 1 \end{bmatrix}, & k \in \Sigma_1, \\
		    \begin{bmatrix} 1 &  i \,{\overline{r(\bar{k})}} e^{2i \theta(k;x,t)} \\ 0 & 1 \end{bmatrix}, & k \in \Sigma_2, \\
		  \end{dcases}
		  \shortintertext{where}
		  \theta(k;x,t) = 4t k^3 + x k.
		\end{gather}
		\item $\bm{X} (k;x,t)$ has simple poles at $k = \pm i \kappa_0$, with $\kappa_0 > \eta_2$, satisfying
		\begin{equation}\label{Xres}
		  \begin{gathered}
		    \Res_{k = i \kappa_0} \bm{X}(k;x,t) = \lim_{k \to i \kappa_0} \bm{X}(k;x,t) \begin{bmatrix} 0 & 0 \\ -i \chi e^{-2i \theta(k;x,t)} & 0 \end{bmatrix}, \\
		    \Res_{k = -i \kappa_0} \bm{X}(k;x,t) = \lim_{k \to -i \kappa_0} \bm{X}(k;x,t) \begin{bmatrix} 0 & -i \chi e^{2i \theta(k;x,t)} \\ 0 & 0 \end{bmatrix}.
		  \end{gathered}
		\end{equation}
	\end{enumerate}
\end{RHP}
The solution \eqref{intro:mKdV_sol1_FD} to the mKdV equation can be extracted via 
\begin{equation}\label{q_via_X}q(x,t) = 2i \lim_{k \to \infty} k \, \bm{X}(k;x,t)_{12}.
\end{equation}

\begin{prop}
\label{theorem_existence}
Given a  real  function $r \in L^2(\Sigma_1)$, the  Riemann--Hilbert problem \ref{rhp:X} is uniquely solvable for all $(x, t) \in \R^2$.
Moreover, the function $q(x,t)$ defined in \eqref{q_via_X} is a classical solution to the mKdV equation \eqref{mkdv}, which belongs to the class $C^\infty(\R_x \times \R_t)$. 
At time $t=0$ the initial data $q(x,0)$ has the following properties:
\begin{itemize}
\item $q(x,0)=\mathcal{O}(e^{-c|x|})$ as $x\to\, -\infty$ with $c>0$.
\item $q(x,0)= (\eta_2 + \eta_1) \dn \left( (\eta_2+\eta_1)(x-2(\eta_1^2+\eta_2^2)t-x^0),  m_1 \right) +\mathcal{O}(x^{-1}),	$
as $x\to\infty$, where $\dn(z,m_1)$ is the Jacobi elliptic function with modulus $m_1 = \frac{4\eta_2 \eta_1}{(\eta_2+\eta_1)^2}$ and the phase $x^0$ depends explicitly on $r(k)$.
\end{itemize}
\end{prop}

The proof of  the first part of the  above theorem is given in the Appendix~\ref{appB_solvability}. The second part of the theorem about the asymptotic properties of the initial data as $x\to\pm\infty$, follows the steps in \cite{GirGraJenMcL} and for this reason we omit it.  We remark that for $\eta_1\to0$  the initial data  becomes asymptotically step-like since $q(x,0)\to \eta_2+\mathcal{O}(x^{-1})$  as $x\to \infty$.

\begin{figure}
\centering
\hspace*{\stretch{1}}
	\begin{overpic}[scale=0.5]{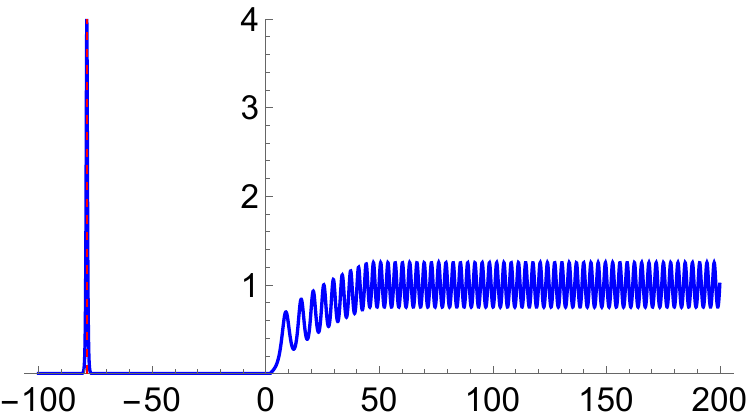}
	   \put(101,4){$\small x$}
	\end{overpic}
\hspace*{\stretch{1}}
	\begin{overpic}[scale=.5]{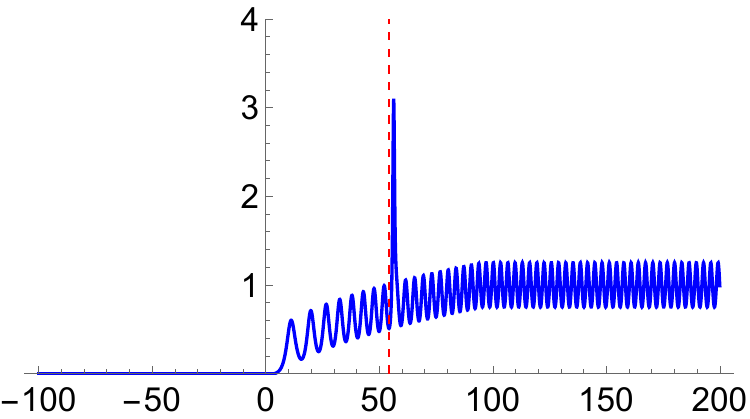}
	   \put(101,4){$\small x$}
	\end{overpic}
\hspace*{\stretch{1}} 
\\[15pt]
\hspace*{\stretch{1}}
	\begin{overpic}[scale=.5]{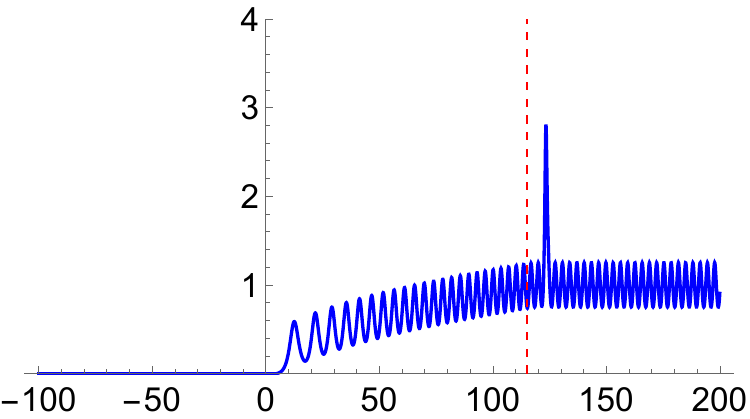}
	   \put(101,4){$\small x$}
	\end{overpic}
\hspace*{\stretch{1}}
	\begin{overpic}[scale=.5]{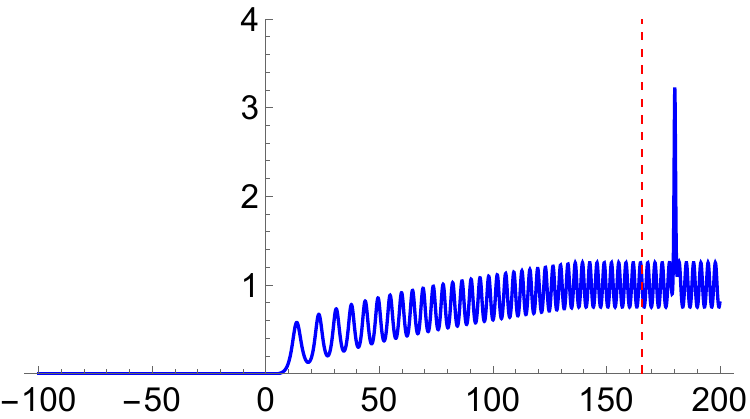}
	   \put(101,4){$\small x$}
	\end{overpic}
\hspace*{\stretch{1}}	
\caption{Four plots showing a soliton interacting with a soliton gas at different times. The blue curve is the  
leading order asymptotic behavior of the potential $q(x,t)$ determined by RH problem \ref{rhp:X}, and the dashed red line is the position of the global maximum of the same soliton if there were no soliton gas to interact with. 
Here $r\equiv 1$, $\eta_{1} = 0.25, \eta_{2}=1, \kappa_{0}=2$, and $\chi=4 e^{-800}$.}
\label{fig:video}
\end{figure}

\begin{figure}
\centering
	\hspace*{\stretch{1}}
	\begin{overpic}[scale=0.4]{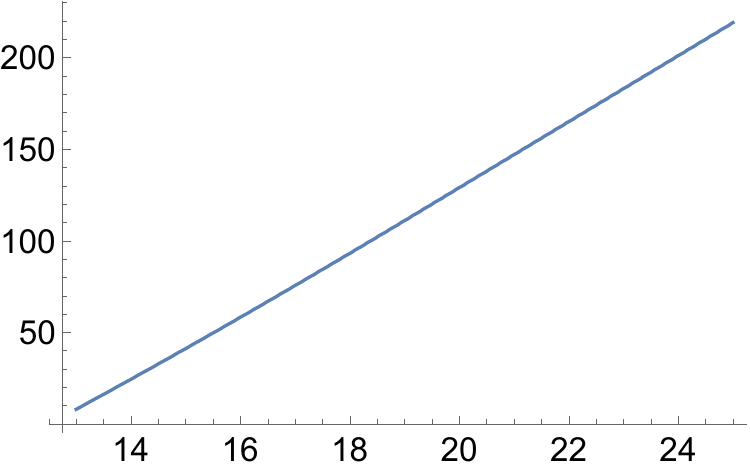}
	   \put(98,2){$\small {}_t$}
	   \put(-3,64){$\small {}^{x_{\mathrm{peak}}}$}
	\end{overpic}
	\hspace*{\stretch{1}}
	\begin{overpic}[scale=.4]{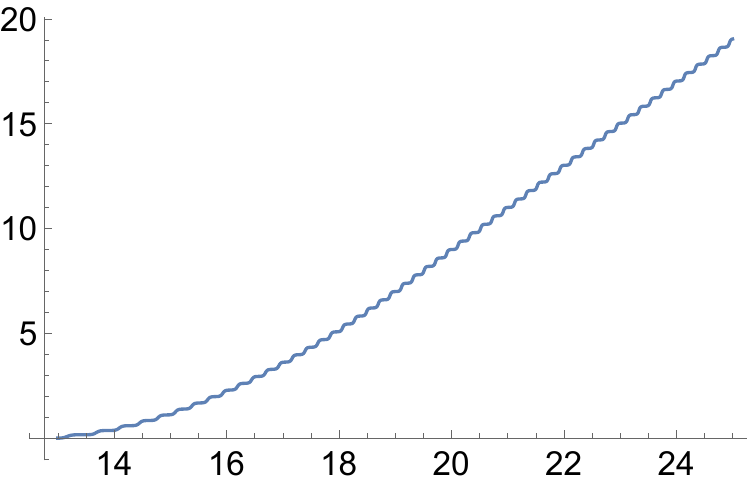}
	   \put(98,2){$\small {}_t$}
	   \put(-5,64){$\small {}^{x_{\mathrm{peak}}-x_{\mathrm{free}}}$}	
	\end{overpic}
	\hspace*{\stretch{1}}
	\begin{overpic}[scale=.4]{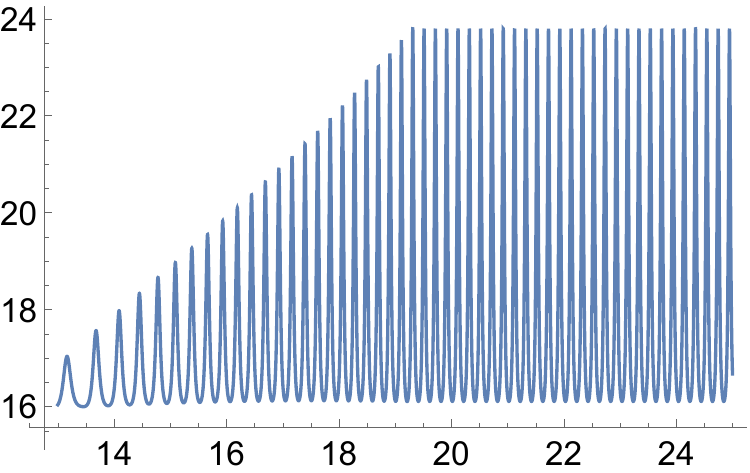}
	    \put(98,2){$\small {}_t$}
	    \put(-2,64){$\small {}^{ \od{x_{\mathrlap{\mathrm{peak}}}}{t} }$}
	\end{overpic}
	\hspace*{\stretch{1}}
\caption{Space-time plots of the position and velocity of the soliton peak.  The first plot shows the numerically computed location of the peak of the leading order asymptotic description.  The middle plot shows the difference $x_{{\rm max}}(t) - x_{{\rm free}}(t)$, where $x_{{\rm free}}$ is the position of the same soliton propagating in the vacuum (i.e. in the absence of the gas).  The third plot shows the velocity of the peak as a function of time.  The effective velocity of the soliton is the asymptotic average of this highly oscillatory velocity profile.  These plots were created in Mathematica using the leading order asymptotic behavior of the potential determined by the RH problem \ref{rhp:X}, with $r\equiv 1$, $\eta_{1} = 0.25, \eta_{2}=1, \kappa_{0}=2$, and $\chi=4 e^{-800}$.}
\label{fig:position_velocity}
\end{figure}

Next we analyse the  asymptotic behaviour of the solution $q(x,t)$  that depends on which direction in the $(x,t)$ plane one looks at: there is a region with exponential decay (when $x<4\eta_1^2 t$), a rarefaction wave region and
a   periodic travelling wave region (when $x>4\eta_1^2 t$)  up to terms of order $\mathcal{O}(1/t)$. To distinguish the latter two regions we need to introduce the speed $v_2$  of the leading front of the rarefaction region and one of the speeds of the   Whitham modulation equations  that describe the rarefraction wave connecting the zero solution to the elliptic solution
\begin{equation}
\label{v2}
v_2=\eta_2^2W\le(\tfrac{\eta^2_1}{\eta_2^2}\ri),\quad W(m) = \frac{4(1-m) K(m)}{E(m)} + 2(1+m),
\end{equation}
where  $K(m)=\int_{0}^{\frac{\pi}{2}}\tfrac{\d \theta}{\sqrt{1-m\sin^2\theta}}$  and $E(m)=\int_0^{\frac{\pi}{2}}\sqrt{1-m\sin^2\theta}\, \d\theta$  are  the complete integral of the first and second kind respectively.

\begin{theorem}
\label{theorem1.3}
Suppose that $r:\Sigma_1 \to \R$ is a continuous, positive function on $\Sigma_1$ with analytic extension to a neighbourhood of $\Sigma_1$ (cf. Assumption~\ref{Assumption:r}). Then the large-time asymptotics of the  soliton gas mKdV solution  with spectrum in the interval $[i \eta_1, i\eta_2] $  and with a larger soliton  with point spectrum $i\kappa_0>i\eta_2$ 
 (RH problem \ref{rhp:X})  is given by
\begin{equation}
\label{eq:qrep}
	q(x,t) =  q_{\mathrm{bg}}(x,t) + q_{\mathrm{sol}}(x,t) + \mathcal{O}(t^{-1})\ , 
\end{equation}
for $\frac{x}{t}>4\eta_1^2$.
The background gas $q_{\rm bg}$ is equal, at leading order,  to the periodic  travelling wave solution of the mKdV equation, namely
\begin{equation}
\label{eq:bgwps}
	q_{\rm bg}(x,t)	= (\alpha + \eta_1) \dn \left( (\alpha+\eta_1)(x-2(\eta_1^2+\alpha^2)t-x^{^{(\pm)}}),  m_1 \right) ,	
\end{equation}
where $\dn(z,m_1)$ is the Jacobi elliptic function with modulus $m_1 = \frac{4\alpha \eta_1}{(\alpha+\eta_1)^2}$, the phase shifts $x^{^{(\pm)}}$ are explicitly determined according to the position of the soliton (see (\ref{x.shifts.background})), and $\alpha = \alpha(x/t)$ is   such that
\begin{itemize}
\item for $ \tfrac{x}{t} \in (  4\eta_1^2,  v_2) $, $\alpha(x/t) \in (\eta_1, \eta_2)$ and satisfies the Whitham modulation equation
\begin{equation}\label{Whitham01}
	\frac{x}{t} = \alpha^2 W(\eta_1^2/\alpha^2),
\end{equation}
with  $v_2$ and  $W(m)$ as in \eqref{v2}. The above equation in uniquely solvable for $\alpha=\alpha(x/t)$;
\item for $\tfrac{x}{t} \geq v_2$, 
\begin{equation}\label{Whitham02}\alpha(x/t) \equiv \eta_2.
\end{equation}
\end{itemize}
The expression of the soliton  $q_{\mathrm{sol}}$ on the elliptic background   is  given in \eqref{q.sol.1}. When $\kappa_0 \gg \eta_2$, $q_{\mathrm{sol}}$ agrees at leading order with the soliton solution \eqref{soliton} on the zero background.
\end{theorem}

A separate fact that emerges from our analysis is that the  phase shifts, $x^{^{(\pm)}}$, in the background wave (\ref{eq:bgwps}) are different if the soliton is behind ($x^{^{(+)}}$) or in front of the wave ($x^{^{(-)}}$) when looking at a fixed direction $x/t$.
The phase shift that the  periodic background  $ q_{\mathrm{bg}}(x,t) $  experiences from the soliton interaction is (see Proposition~\ref{prop51} and Figure~\ref{fig:density})
\begin{equation}
\label{shift}
\begin{aligned}
	&x^{^{(+)}} - x^{^{(-)}} = \frac{2K(m)}{\alpha} \le( 1+   \int_{i\alpha}^{i\kappa_0}  \frac{\alpha}{i K(m)} \frac{\d k}{R(k)} \ri) \ , \\
	& m = \frac{\eta_1^2}{\alpha^2}\ ,\ \quad R(k)=\sqrt{(k^2+\alpha^2)(k^2+\eta_1^2)} \ .
\end{aligned}
\end{equation}

\begin{remark}
Theorem~\ref{theorem1.3} is proven under the assumption that $r(k)$ is positive, bounded, and nonvanishing on $\Sigma_1$, and it admits an analytic extension to a lens-shaped neighborhood of $\Sigma_1$. If instead, one assumes that $r(k) =|z-i\eta_k|^{\pm1/2} \tilde r(k)$, $k=1,2$ with $\tilde r$  positive, bounded and non-vanishing such that $r$ has analytic extension to an open neighborhood of $\Sigma_1$ except for square root branch cuts of the extension (cf. Assumption~\ref{Assumption:r2}) then the error rate for $x/t > v_2$ is much improved
\[	
	q(x,t) =  q_{\mathrm{bg}}(x,t) + q_{\mathrm{sol}}(x,t) + \mathcal{O}( e^{-c t}) 
\]
for some fixed $c>0$. 
\end{remark}

\begin{remark}
The Whitham modulation equations are the modulation equations for  the  wave parameters  $\beta_3>\beta_2>\beta_1$  of the 
elliptic solution of the mKdV equation
\begin{eqnarray*}&&
q_{\rm ell}(x,t)=-\beta_1-\beta_2-\beta_3
+
 \\&&
\qquad \qquad  
+\frac{2(\beta_2+\beta_3)(\beta_1+\beta_3)}{\beta_2+\beta_3-(\beta_2-\beta_1)\mathrm{cn}^2(\sqrt{\beta_3^2-\beta_1^2} (x-2(\beta_1^2+\beta_2^2+\beta_3^2)t)+x_0 \mid m)} \ ,
\end{eqnarray*}
where $\mathrm{cn}(u\mid m)$ is the Jacobi elliptic cosine function. The expression of $q_{\rm ell}(x,t)$  can be reduced to  $q_{\rm bg}(x,t)	$ in \eqref{eq:bgwps} performing the Landen transformation as in \eqref{q.bg1} and \eqref{Landen}.\\
 The Whitham modulation equations  take the form
\begin{equation*}
\dfrac{\partial}{\partial t}\beta_j+W_j(\beta_1,\beta_2,\beta_3)\dfrac{\partial}{\partial x}\beta_j=0,\quad j=1,2,3,
\end{equation*}
where 
the speeds $W_j=W_j(\beta_1,\beta_2,\beta_3)$ are 
\begin{align}
\label{speed}
&W_j(\beta_1,\beta_2,\beta_3)=2(\beta_1^2+\beta_2^2+\beta_3^2)+4\dfrac{\prod_{k\neq j}(\beta_j^2-\beta_k^2)}{\beta_j^2+\gamma}\ ,\\
\label{alphan}
&\gamma=-\beta^2_3+(\beta_3^2-\beta_1^2)\dfrac{E(m)}{K(m)}\ ,\quad m=\dfrac{\beta_2^2 -\beta_1^2}{\beta_3^2-\beta_1^2}\ .
\end{align}
In the case considered in  Theorem~\ref{theorem1.3}, one  has $\beta_1=0$, $\beta_2=\eta_1$ and $\beta_3=\alpha$ and one is looking for a self-similar solution in the form  $\alpha=\alpha(\frac{x}{t})$, which gives the equation $\frac{x}{t}=W_3(0,\eta_1,\alpha)$ that coincides with \eqref{Whitham01}.  
This modulation problem is different from the  Gurevich and Pitaevsky problem \cite{GP74}    or the rarefaction wave problem  \cite{Leach} that connect two constant backgrounds of different amplitude.  It is a generalization of the Riemann problem 
with a  zero background for $x<0$ and an elliptic background for $x>0$.
When  $\eta_1=0$, then  the initial data  is step-like, namely $q(x,0)= \eta_2+\mathcal{O}(x^{-1})$  as $x\to+\infty$ and one recovers the  standard  rarefaction wave $q(x,t)\simeq \sqrt{x/t}$. 
\end{remark}

The first stage of our analysis is quite similar to the asymptotic analysis in \cite{GirGraJenMcL}, relying on the construction of a $g$-function to control the exponentially large off-diagonal factors appearing in the jump relationships for $k \in \Sigma_{1} \cup \Sigma_{2}$.  The function $g$ is determined uniquely by a suitable collection of conditions,  and there are a number of different representations  \cite{G1,GravaTian} of such a function that are useful for different purposes.  
In connection with the kinetic theory, a representation  of the $g$ function  in terms of a logarithmic transform is important.  One represents $g$ in terms of a measure supported on an evolving set $ \Sigma_{1,\alpha} \cup \Sigma_{2,\alpha} := ( i \eta_{1}, i \alpha(x/t)) \cup ( - i \alpha(x/t), -i \eta_{1} ) $ as follows:
\begin{eqnarray}
g(k;x,t) = \int_{\Sigma_{1,\alpha} \cup \Sigma_{2,\alpha}} \ln{(k-s)} \rho_+(s;x,t) \, \d s \ .
\end{eqnarray}
The measure $\rho(s) \, \d s$ is given explicitly by
\begin{eqnarray} \label{rho.intro}
\rho (s;x,t)=-\frac{1}{\pi i} \left\{
12 t \frac{ s^4 + \frac{1}{2}( \eta_{1}^{2} + \alpha^{2})s^2 + c_{2}}{R(s)} + x \frac{ s^{2} + c_{0} }{R(s)}
\right\} \ , 
\end{eqnarray}
where the quantity $R(s)=\sqrt{(s^2+\alpha^2)(s^2+\eta_1^2)} $  is analytic in $\mathbb{C} \setminus \{\Sigma_{1,\alpha} \cup \Sigma_{2,\alpha}\}$ and the constants $c_{0}$ and $c_{2}$ depend on $x$ and $t$, and are uniquely determined by
\[
\int_{-i\eta_1}^{i\eta_1}\rho (s;x,t)\d s=0.
\]
In the definition of $g(k;x,t)$, the quantity $\rho_+(s;x,t)$  refers to the  left  boundary value $R_+(s)$ of the function $R(s)$ on the oriented contour $\Sigma_{1,\alpha} \cup \Sigma_{2,\alpha}$.

The quantity $q_{\mathrm{sol}}(x,t)$ in (\ref{eq:qrep}) above represents the contribution to the solution $q(x,t)$ from the poles in the Riemann-Hilbert problem \ref{rhp:X} (the soliton component).  To describe the dynamical evolution of the trial soliton and its interaction with the gas, we take the poles of the soliton to be located at $\pm i \kappa_{0}$ with $\kappa_{0} > \eta_{2}$, so that the trial soliton's velocity is greater than the velocity of any member of the gas of solitons. Additionally, we initiate the trial soliton's location $x_{0}$ to be in the quiescent region of the soliton gas solution, separated from the modulated cnoidal wave region by a large distance, so that $x_{0}$ is  essentially the asymptotic parameter.  This scenario can be visualized in Figures~\ref{fig:video} and \ref{fig:density}.

{
Within this setting, at large times, we are able to compute a collection of quantities to better understand the dynamics. From the $g$-function and the corresponding \emph{wave phase} $\varphi (k;x,t) := g(k;x,t) + kx + 4k^3t$ (see formula \eqref{phi.def}), we introduce 
\[ \kappa(k) := \frac{\partial \varphi (k)}{\partial x} \quad \text{(wave number)}\ , \qquad
 \omega(k) := -\frac{\partial \varphi(k)}{\partial t} \quad \text{(frequency)} \ , \]
 following an analogy with the classical theory of the wave equation solved with the help of the Fourier transform.
We then define the \emph{phase velocity} of the wave
\begin{gather}
v_{\rm phase}(k) := \frac{\omega(k)}{\kappa(k)} = - \frac{\varphi_t(k)}{\varphi_x(k)}\ ,
\end{gather}
which will allow us to calculate the velocity of the background elliptic wave $q_{\rm bg}$ (i.e. the velocity of the envelope)
\begin{gather}
v_{\rm bg} = v_{\rm phase}(i \eta_1) = 2(\eta_1^2+\alpha^2)  \ , 
\end{gather}
and we define the \emph{ group velocity} of the wave packet as
\begin{gather}
\label{vgroup}
v_{\rm group}(k) = - \frac{\rho_t(k)}{\rho_x(k)}  = -12 \frac{k^4 + \frac{1}{2}(\eta_1^2 +\alpha^2)k^2 + c_2}{k^2 + c_0}\ .
\end{gather}

We consider the region of space-time such that
\begin{equation}\label{into_modulated_elliptic_region0}
\begin{aligned}
&4 \eta_{1}^{2} + \varepsilon < \frac{x}{t} < v_{2} - \varepsilon, \\
& 4 \eta_{1}^{2} + \varepsilon < \frac{x_{0}}{t} + 4 \kappa_{0}^{2} ,
\end{aligned}
\end{equation}
for some $\varepsilon >0$, where we recall that $v_2$ is the   velocity  of the leading front of the rarefaction region  defined in \eqref{v2} and $x_0$ is the phase of the trial soliton.
As discussed in detail in Section \ref{sec-gas}, the first condition defines the region of space-time in which the gas behaves as a modulated elliptic wave, while the second ensures that enough time has passed so that the soliton initially at position $x_0\ll -1$ has traversed the quiescent region and has entered the modulated wave region of the soliton gas.  Then we identify two times $t_1$ and $t_2$ characterized by the soliton peak entering and leaving the modulated wave region respectively.  The first time, $t_{1}$, is  explicit:
\begin{equation}
\label{t1}
 t_1 := \frac{-x_0}{4(\kappa_0^2-\eta_1^2)},
\end{equation}
while $t_{2}$ is quite implicit (see Section \ref{subsec:soliton.peak}).

In Theorem \ref{thm:xpeak} we establish a number of dynamical properties regarding the location $x_{\rm peak}(t)$ of the peak of the soliton, which we summarize here.

In Theorem \ref{theorem1.4} we establish a number of dynamical properties regarding the location $x_{\rm peak}(t)$ of the peak of the soliton. 
\begin{theorem}[\bf Dynamical properties of the soliton peak]
\label{theorem1.4}
For $\chi>0$ and  $x_0 \ll -1$, there exists a  $\tilde{\kappa}>  \kappa_{\rm crit}$ (see \eqref{kappa.c.Q-1}) , such that for all $\kappa_0 > \tilde{\kappa}$ there exists a unique global maximum $x_\mathrm{peak}(t)$ of the  mKdV  solution $q(x,t)$ which identifies the position of the soliton peak for all $t>0$.

Moreover, $x_{\rm peak}(t)$ is strictly increasing, and satisfies (for some small positive $\varepsilon$): 
\begin{itemize}
\item[(i)] for $t\in (0, t_1(1-\varepsilon))$ with $t_1$ as in \eqref{t1},  $x_{\rm peak}(t) = x_0 + 4\kappa_0^2 t$; 
\item[(ii)] for $t >t_1(1 + \varepsilon)$,
\begin{equation}
\dot x_{\rm peak}(t) =  - \frac{2\varphi_{t}(i \kappa_{0}) - \partial_t \ln\Psi(x,t; \kappa_0, \eta_1) } { 2\varphi_{x}(i \kappa_{0})- \partial_x \ln\Psi(x,t; \kappa_0, \eta_1) }\Big\vert_{x = x_{\rm peak}(t)} + \bigo{ t^{-1}} \ ,
\end{equation}
 where $\Psi$  is defined by \eqref{Psi.speed}, and is an oscillatory function.
\end{itemize}

For $(x,t)$ such that $x>4\eta_1^2t$ and $t >t_1(1 + \varepsilon)$,  the average velocity of the trial soliton's position $x_{\rm peak}(t)$ as it traverses one period $T$ of the background oscillatory wave   is 
\[
\frac{ x_{\rm peak}(t+T) - x_{\rm peak}(t) }{T}= \bar{v}_{\rm sol} (\kappa_0) + \bigo{t^{-1}}\ ,
\]
where 
\begin{gather}
\bar{v}_{\rm sol} (\kappa_0)= -\frac{ \varphi_{t}(i\kappa_0)}{\varphi_{x}(i\kappa_0)} = v_{\rm phase}(i \kappa_0) =  4\kappa_0^2 \frac{K\le(\tfrac{\eta_1^2}{\alpha^2}\ri)}{\Pi \le(\tfrac{\eta_1^2}{\kappa_0^2}, \tfrac{\eta_1^2}{\alpha^2}\ri)}  + 2(\eta_1^2+\alpha^2)  
\end{gather}
  where  $\Pi(n,m) = \int_0^{\frac{\pi}{2}} \frac{\d \theta}{(1-n\sin^2\theta)\sqrt{1-m\sin^2\theta}}$ is the complete elliptic integrals of third kind.

Moreover, this average soliton velocity $\bar{v}_{\rm sol} (\kappa_0)$ satisfies  the integral equation
\[
\bar{v}_{\rm sol} (\kappa_0)=  4 \kappa_0^{2} + \frac{1}{ \kappa_0} \int_{\eta_1}^{\alpha} \ln{
\left|
\frac{\kappa_0-s}{\kappa_0+s}
\right|
}(v_{{\rm group}}(s) - \bar{v}_{\rm sol} (\kappa_0) ) \, \partial_x \rho(is) \,  \d s \ .
\]
where $v_{{\rm group}}$ is defined in \eqref{vgroup} and $\rho$ is the density in \eqref{rho.intro}.
\end{theorem}

In Figure~\ref{vsolVSvpeak} the velocity $\bar{v}_{\rm sol} (\kappa_0)$ and the  actual peak velocity  $\dot x_{\rm peak} $ of the soliton  are plotted. 
 One consequence of our analysis is to clarify that the solution of the Zakharov--El   kinetic equations  represents the leading order average velocity of the soliton peak.

\begin{figure}
\centering
\begin{tikzpicture}[
spy using outlines={magnification=7, rectangle, width=4.5cm, height=2.5cm, red, connect spies}
]
\node (n1) at (0,0) {
\begin{overpic}[scale=0.5]{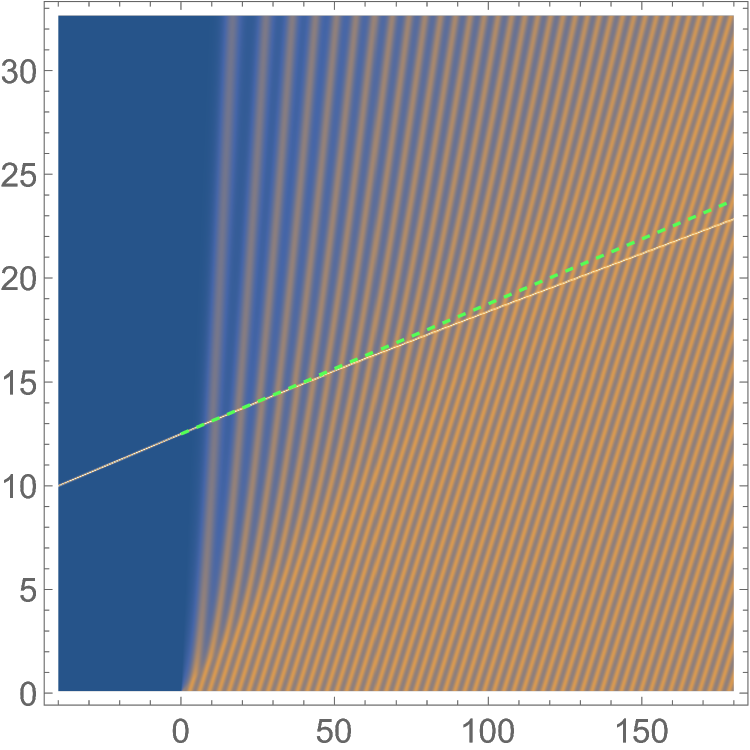}
	\put(2,99) {$\small t$ }
	\put(99,2) {$\small x$ }
\end{overpic}
};
\node (n1) at (-4,0) {
\begin{overpic}[scale=0.5]{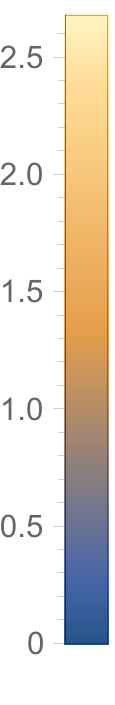}
\end{overpic}
};
\spy on (1,0.55) in node at (7,0.5);
\end{tikzpicture}
\caption{Color map plotting the leading order behavior of a trial soliton traveling through a soliton gas. Color indicates the amplitude of the solution $q(x,t)$ at a given point $(x,t)$. The soliton is accelerated by interaction; the dashed green line shows the position the trial soliton in a vacuum. The zoomed in region is included to show the affect of the soliton on the gas. Interaction with the trial soliton induces a phase shift in the soliton gas. This shift is given by \eqref{shift}. For this plot $r\equiv 1$, $\eta_1 = 0.25$, $\eta_2 = 1$, $\kappa_0 = 2$, and $\chi = 4e^{-800}$.  
}
\label{fig:density}
\end{figure}

\vskip .2in

\begin{figure}
\centering
\begin{overpic}[scale=.6]{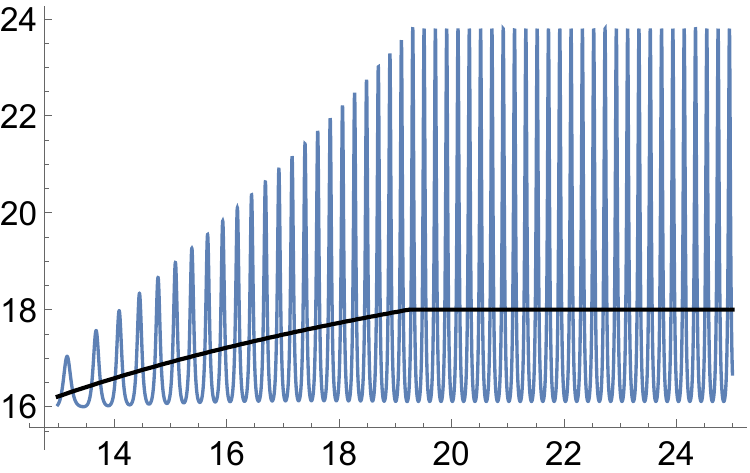}
	\put(99,2) {$\small t$ }
\end{overpic}
\caption{Comparison between $\bar{v}_{\rm sol}(t)$ (black curve) and $\dot x_{\rm peak}(t)$ (blue) as the soliton traverses the soliton gas.  The parameters are with $r\equiv 1$, $\eta_{1} = 0.25, \eta_{2}=1, \kappa_{0}=2$, and $\chi=4e^{-800}$. \label{vsolVSvpeak}}
\end{figure}

\vskip .2in

\paragraph{\bf Outline of the paper.}
In Section \ref{sec-fredholm} we present the soliton gas solution in terms of the Fredholm determinant, also in the presence of a trial soliton, and derive its corresponding RH problem and we prove Theorem~\ref{theorem1}
The large $t$ and $x$ asymptotic analysis of the RH problem~\ref{rhp:X}  is carried out in the subsequent sections: in Section~\ref{sec-setup} we introduce the $g$-function and perform the necessary preparations in order to perform a steepest descent analysis, which is then presented in Section~\ref{sec-model} with the introduction of the model problem and the proof of Theorem~\ref{theorem1.3}. 
The study of the interaction dynamics is conducted in Section~\ref{sec-gas} where we prove Theorem~\ref{theorem1.4}.
The solvability of the  RH problem~\ref{rhp:X} is proved in Appendix \ref{appB_solvability}, thus guaranteeing that the Fredholm determinant representation of the solution is well defined.   In Appendix \ref{app:KdV_Fredholm} we provide the analogous formula for the KdV Fredholm determinant solution of the soliton gas.

}

\section{Soliton gas limit using Fredholm determinants  and proof of Theorem~\ref{theorem1}}
\label{sec-fredholm}

In this section we will derive  the (free) soliton gas solution as a continuum limit of a finite number of solitons, in terms of Fredholm determinant,  thus proving Theorem~\ref{theorem1}. We will also briefly explain how to extend the procedure to a more general setting where a gas and a finite number of solitons are simultaneously considered
so that we prove Corollary~\ref{corollary1.2}.  Then we explain the derivation of the Riemann--Hilbert problem 1 from the Fredholm determinant solution. 

\vskip .2in

\paragraph{ \bf The mKdV $N$-soliton solution.}
We recall here the RH problem for an mKdV $N$-soliton solution:  find a $2\times 2$ matrix  $\bm M(k;x,t)$ such that
\begin{enumerate}[label=\arabic*.]
\item $\bm M$ is meromorphic in $\mathbb{C}$, with simple poles at $\{ i\kappa_j \}_{j=1}^{N}$ in $i\mathbb{R}_{+}$, and at the corresponding conjugate points $\{ -i \kappa_j \}_{j=1}^{N}$ in $i\mathbb{R}_{-}$;
\item $\bm M$ satisfies the residue conditions
\begin{equation}\label{rhp:M}
\begin{gathered}
\Res_{k= i\kappa _{j}} \bm M(k) = \lim_{k \to  i\kappa _{j}} \bm M(k) \begin{bmatrix} 0 & 0 \\ \displaystyle {-i\chi_j } e^{-2 i \theta(k;x,t) } & 0 \end{bmatrix} \\ \Res_{k=-i\kappa_{j}} \bm M(k) = \lim_{k \to -i\kappa_{j} } \bm M(k) \begin{bmatrix} 0 & \displaystyle {-i\chi_j}  e^{2 i \theta(k;x,t) }\\ 0 & 0\end{bmatrix} \ ,
\end{gathered}
\end{equation}
where $\theta(k,x,t) = 4tk^3 + xk$ and $\chi_j$ are  nonzero real constants;
\item $\displaystyle \bm M(k) = \bm I + \mathcal{O}\le(\frac{1}{k}\ri)$ as $k \rightarrow \infty$.
\end{enumerate}
The $N$-soliton potential $q_N(x,t)$ is determined from $\bm M$ via
\begin{equation}
\label{sol01}
q_N(x,t)=2i\lim\limits_{k\to\infty}k \bm M(k;x,t)_{12},
\end{equation}
or alternatively
\begin{equation}
\label{sol02}
\le( q_N(x,t) \ri)^2=2i \, \partial_x\le (\lim_{k\to\infty}k (\bm M(k;x,t)_{11}-1)\ri).
\end{equation}
We are looking for a solution in the form
\begin{gather}\label{Xsol}
\bm M(k;x,t)=
\begin{pmatrix}
 1+\sum_{\ell=1}^N\frac{i \alpha_\ell(x,t)}{k-i \kappa_\ell} & \sum_{\ell=1}^N\frac{i \beta_\ell(x,t)}{k+i \kappa_\ell}
\\
\sum_{\ell=1}^N\frac{i \beta_\ell(x,t)}{k-i \kappa_\ell} & 1-\sum_{\ell=1}^N\frac{i \alpha_\ell(x,t)}{k+i \kappa_\ell}
\end{pmatrix},
\end{gather}
that respects the symmetry
\[
\bm M(k)=\overline{\bm M(-\overline{k})}=\begin{pmatrix}
0&-1\\1&0\end{pmatrix}\bm M(-k)\begin{pmatrix}
0&1\\-1&0\end{pmatrix}\ .
\]
Plugging the above ansatz into the residue conditions gives the system of equations
\begin{equation}
\label{linear_eq1}
\begin{pmatrix}
\bm I_N  & \rvline & \bm A\\
\hline
 -\bm A & \rvline &
\bm I_N
\end{pmatrix} 
\begin{bmatrix}
\tilde{\boldsymbol{\alpha}}\\
\tilde{\boldsymbol{\beta}}
\end{bmatrix}
=
\begin{bmatrix}
\boldsymbol{0}\\
-\tilde{\boldsymbol{\chi}}
\end{bmatrix}
\end{equation}
where  $\bm I_N$ is the $N\times N$ dimensional matrix and 
\begin{equation}
\label{def_A}
\bm A_{j\ell}=\sgn(\chi_j)\frac{\sqrt{|\chi_j|}\sqrt{|\chi_\ell|}}{\kappa_j+\kappa_\ell}e^{-i (\theta_j(x,t)+\theta_\ell(x,t))},\quad \tilde{\boldsymbol{\alpha}}_j=\frac{\alpha_j}{\sqrt{|\chi_j|}}e^{i\, \theta_j(x,t)}, \quad 
\end{equation}
and 
\[
\tilde{\boldsymbol{\beta}}_j=\frac{\beta_j}{\sqrt{|\chi_j|}}e^{i \,\theta_j(x,t)}, \quad \tilde{{\bm \chi}}_j=\sgn(\chi_j)\sqrt{|\chi_j|} e^{-i\, \theta_j (x,t)}\,,
\]
with $\theta_j(x,t) := \theta(i \kappa_j; x,t)$, $\forall \ j=1,\ldots, N$.

\begin{note}\label{solVSantisol}
Notice that  $\sgn(\chi_j)<0$  for the  anti-soliton while   $\sgn(\chi_j)>0$ for the  soliton.
 The case where there are only anti-solitons can be recovered  from the case with only solitons
  by sending  the matrix $\bm{A}\to-\bm{A}$ and $\boldsymbol{\chi}\to -\boldsymbol{\chi}$.
\end{note}
From the solvability of the RH problem for $\bm M$ (see \cite{Wadati72}) we can conclude that the matrix ${\bm I}_N+\bm A^2$ is invertible, and we can recover the solution of mKdV by solving the above linear system of equations \eqref{linear_eq1}:
\[
\begin{bmatrix}
\tilde{\boldsymbol{\alpha}}\\
\tilde{\boldsymbol{\beta}}
\end{bmatrix}
=
\begin{pmatrix}
\bm I_N -\bm A(\bm I_N + \bm A^2)^{-1} \bm A  & \rvline &-\bm A(\bm I_N+\bm A^2)^{-1}\\
\hline
 (\bm I_N+\bm A^2)^{-1}\bm A & \rvline &
 (\bm I_N+\bm A^2)^{-1}
\end{pmatrix} 
\begin{bmatrix}
\boldsymbol{0}\\
-\tilde{\boldsymbol{\chi}}
\end{bmatrix}
\]
that implies
\begin{align*}
\alpha_j&= \sum_{\ell=1}^N(\bm A (\bm I_N+\bm A^2)^{-1})_{j\ell }\sgn(\chi_\ell)\sqrt{|\chi_\ell|} \sqrt{|\chi_j|}e^{-i(\theta_\ell(x,t)+\theta_j(x,t)) },\\
 \beta_j&=-\sum_{\ell=1}^N(\bm I_N+\bm A^2)^{-1}_{j\ell }\sgn(\chi_\ell)\sqrt{|\chi_\ell|} \sqrt{|\chi_j|}e^{-i(\theta_\ell(x,t)+\theta_j(x,t)) }.
\end{align*}

From the expression of $\bm M(k;x,t)$ in formula \eqref{sol02} we have
\begin{equation}
\label{Sol}
q_N(x,t)=-2\sum_{j=1}^N \beta_j(x,t),\qquad \le( q_N(x,t)\ri) ^2=-2\, \partial_x\le(\sum_{j=1}^N\alpha_j(x,t)\ri).
\end{equation}

When there are only solitons,  namely $\chi_j>0$ for $j=1,\dots, N$  we can write the solution \eqref{Sol} in terms of logarithmic derivatives of matrix determinants.
\begin{prop}
The $N$-soliton solution of the mKdV equation takes the form
\begin{equation}
\label{sol1}
q_N(x,t)=i \dfrac{\partial }{\partial x}\ln\det \le(\bm I_N -i \bm A\ri)-i \dfrac{\partial }{\partial x}\ln\det \le(\bm I_N +i \bm A\ri)
\end{equation}
or alternatively
\begin{equation}
\label{sol2}
\le( q_N(x,t) \ri)^2=- \dfrac{\partial^2}{\partial x^2}\ln\det \le(\bm I_N +\bm A^2\ri),
\end{equation}
where the matrix $\bm A$ is defined in \eqref{def_A} with $\chi_j>0$ for $j=1,\dots, N$.
\end{prop}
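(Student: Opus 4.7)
My strategy is to start from the explicit solution of the linear system \eqref{linear_eq1} (already laid out in the excerpt) and reorganize the resulting expressions for $\sum_j \beta_j$ and $\sum_j \alpha_j$ into logarithmic derivatives of determinants via Jacobi's formula. Since every $\chi_j>0$, I introduce the column vector $\bm v = \bm v(x,t)$ with components $\bm v_j = \sqrt{\chi_j}\, e^{-i\theta_j(x,t)}$, so that the symmetric matrix $\bm A$ of \eqref{def_A} factors as $\bm A = D \bm C D$ with $D = \operatorname{diag}(\bm v_j/\sqrt{\chi_j})$ and $\bm C_{j\ell} = (\kappa_j+\kappa_\ell)^{-1}$; more importantly, a direct differentiation using $\partial_x \theta_j = i\kappa_j$ gives the crucial \emph{rank-one identity}
\begin{equation*}
  \partial_x \bm A = \bm v \bm v^T.
\end{equation*}

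Next I would read off from \eqref{linear_eq1} the formulas $\tilde{\bm\beta} = -(\bm I_N + \bm A^2)^{-1}\tilde{\bm\chi}$ and $\tilde{\bm\alpha} = \bm A(\bm I_N+\bm A^2)^{-1}\tilde{\bm\chi}$. Unpacking the tildes (recalling $\sgn\chi_\ell = 1$) turns these into the quadratic forms
\begin{equation*}
  \sum_{j=1}^N \beta_j = -\bm v^T (\bm I_N+\bm A^2)^{-1}\bm v, \qquad
  \sum_{j=1}^N \alpha_j = \bm v^T \bm A (\bm I_N+\bm A^2)^{-1}\bm v,
\end{equation*}
so that $q_N = 2\, \bm v^T (\bm I_N + \bm A^2)^{-1}\bm v$ and $q_N^2 = -2\,\partial_x\bigl[\bm v^T \bm A (\bm I_N+\bm A^2)^{-1}\bm v\bigr]$ by \eqref{Sol}.

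The final step is to identify these quadratic forms with logarithmic derivatives. By Jacobi's formula and the rank-one identity,
\begin{equation*}
  \partial_x \log\det(\bm I_N \pm i \bm A) = \pm i\,\mathrm{tr}\bigl[(\bm I_N \pm i\bm A)^{-1}\bm v\bm v^T\bigr] = \pm i\, \bm v^T (\bm I_N \pm i \bm A)^{-1}\bm v.
\end{equation*}
Adding the two $\pm$ choices and using the algebraic identity
\begin{equation*}
  (\bm I_N-i\bm A)^{-1} + (\bm I_N+i\bm A)^{-1} = 2(\bm I_N + \bm A^2)^{-1},
\end{equation*}
which follows from $(\bm I_N-i\bm A)(\bm I_N+i\bm A) = \bm I_N+\bm A^2$, yields the combination $i\partial_x\log\det(\bm I_N-i\bm A) - i\partial_x\log\det(\bm I_N+i\bm A) = 2\bm v^T(\bm I_N+\bm A^2)^{-1}\bm v = q_N$, which is exactly \eqref{sol1}. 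A parallel calculation, again via Jacobi's formula (and using symmetry $\bm A^T = \bm A$ together with $\partial_x\bm A = \bm v\bm v^T$), produces
\begin{equation*}
  \partial_x \log\det(\bm I_N+\bm A^2) = \mathrm{tr}\bigl[(\bm I_N+\bm A^2)^{-1}(\bm v\bm v^T\bm A + \bm A\bm v\bm v^T)\bigr] = 2\, \bm v^T \bm A(\bm I_N+\bm A^2)^{-1}\bm v = 2\sum_j \alpha_j,
\end{equation*}
and applying $-\partial_x$ gives \eqref{sol2}.

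The only potentially delicate point is the rank-one observation $\partial_x \bm A = \bm v\bm v^T$; without it, one would be stuck with a genuinely rank-two derivative $\partial_x \bm A^2$ and no clean way to collapse the trace into a scalar quadratic form. Once that identity is in hand, the rest is a short exercise in Jacobi's formula and the resolvent identity for $(\bm I_N\pm i\bm A)^{-1}$, both of which are insensitive to the size $N$ and therefore pass to the $N\to\infty$ Fredholm limit needed for Theorem~\ref{theorem1}.
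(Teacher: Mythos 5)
Your proof is correct and follows essentially the same route as the paper's: both rest on the rank-one identity $\partial_x\bm A = \bm v\bm v^T$, Jacobi's formula for $\partial_x\log\det$, and the factorization $\bm I_N+\bm A^2=(\bm I_N-i\bm A)(\bm I_N+i\bm A)$. The only cosmetic difference is that you invoke the symmetric partial-fraction identity $(\bm I_N-i\bm A)^{-1}+(\bm I_N+i\bm A)^{-1}=2(\bm I_N+\bm A^2)^{-1}$ where the paper uses an asymmetric decomposition of $(\bm I_N+\bm A^2)^{-1}$ and then recombines the determinants at the end; your version is marginally cleaner but not a different argument.
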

\begin{proof}

Using the relations
\[
 \dfrac{\partial }{\partial x} \bm A_{\ell  j}= \sqrt{\chi_\ell} \sqrt{\chi_j}e^{-i(\theta_\ell(x,t)+\theta_j(x,t)) } \ , \qquad
   \dfrac{\partial }{\partial x} \bm A^2 =   \bm A \le( \dfrac{\partial }{\partial x} \bm A\ri) +\le(\frac{\partial }{\partial x} \bm A \ri) \bm A\ ,
\]
we can write $\sum_{j=1}^N\alpha_j$ in the form
\[
\sum_{j=1}^N\alpha_j=\mbox{Tr}\left(\bm A (\bm I_N+\bm A^2)^{-1}\dfrac{\partial }{\partial x}\bm A\right)=\frac{1}{2}\dfrac{\partial }{\partial x}\ln\det \le(\bm I_N+\bm A^2 \ri)\ ,
\]
where we used the identity $\dfrac{\partial }{\partial x}\ln\det \bm M=\mbox{Tr}\left( \bm M^{-1}\dfrac{\partial }{\partial x}\bm M\right)$, valid for any invertible matrix $\bm M$ and the fact that 
$\bm A$ is symmetric.  Therefore, from \eqref{Sol} we obtain
\begin{equation}
\label{q2}
(q_N(x,t))^2=- \dfrac{\partial^2}{\partial x^2}\ln\det \le(\bm I_N +\bm A^2\ri) \ .
\end{equation}
Regarding the expression for $q_N(x,t)$,  we have
\[
q_N(x,t)=2\, \mbox{Tr}\left((\bm I_N +\bm A^2)^{-1} \dfrac{\partial }{\partial x} \bm A \right)
\]
and noticing that
\[
(\bm I_N +\bm A^2)^{-1}=(\bm I_N  +i \bm A)^{-1}(\bm I_N -i \bm A)^{-1}=(\bm I_N +i \bm A)^{-1}+i (\bm I_N +i \bm A)^{-1}(\bm I_N -i \bm A)^{-1}\bm A\ ,
\]
where we used the identity $(\bm I_N -i \bm A)^{-1}=\bm I_N +i (\bm I_N -i \bm A)^{-1}\bm A$, we can obtain
\begin{eqnarray*}&&
\mbox{Tr}\left((\bm I_N +\bm A^2)^{-1}\dfrac{\partial }{\partial x}\bm A\right)=\mbox{Tr}\left((\bm I_N +i \bm A)^{-1}\dfrac{\partial }{\partial x}\bm A\right)
+
\\&&
\qquad \qquad \qquad 
+i \mbox{Tr}\left((\bm I_N +i \bm A)^{-1}(\bm I_N -i \bm A)^{-1}\bm A\dfrac{\partial }{\partial x}\bm A\right) \ .
\end{eqnarray*}
In conclusion, we can write $q_N(x,t)$ in the following way
\begin{align*}
q_N(x,t)&=-2i \dfrac{\partial }{\partial x}\ln\det \le(\bm I_N +i \bm A\ri)+i \dfrac{\partial }{\partial x}\ln\det\le( \bm I_N +\bm A^2\ri) \\
&=i \dfrac{\partial }{\partial x}\ln\det \le(\bm I_N -i \bm A\ri)-i \dfrac{\partial }{\partial x}\ln\det \le(\bm I_N +i \bm A\ri).
\end{align*}
\end{proof}
We can now recast the matrix $-i \bm A:\C^N\to \C^N$  as the composition of two operators
$\mathcal{A}_N:L^2(-\infty,x )\to\C^N$ and $\mathcal{B}_N:\C^N\to L^2(-\infty, x)$, where
\[
\big(\mathcal{A}_N[f]\big)_j=\int^x_{-\infty}\sqrt{\chi_j} e^{-i\, \theta_j(s,{\color{black}t}) }f(s)\, \d s,\quad {\mathcal B}_N[v](s,t)=-i \sum_{j=1}^N \sqrt{\chi_j} v_je^{-i\, \theta_j(s,{\color{black}t}) }
\]
so that
\[
( {\mathcal A}_N\circ{ \mathcal B}_N)_{j\ell}= \frac{\sqrt{\chi_j} \sqrt{\chi_\ell}e^{-i(\theta_j(x,t)+\theta_\ell(x,t))}}{i (\kappa_j+\kappa_\ell)}\ .
\]
Using the identity $\det \le(\bm I_N  \pm \mathcal A_N \circ \mathcal B_N\ri) = \det \le(\Id_{L^2(-\infty,x)} \pm \mathcal B_N \circ \mathcal A_N\ri)$, we obtain that the pure $N$-soliton solution of the mKdV equation is equal to
\begin{equation}
\label{q_fredh1}
q_N(x,t)= i \dfrac{\partial }{\partial x}\ln\det \le(\Id_{L^2(-\infty,x)} + \mathcal K_N\ri) - i \dfrac{\partial }{\partial x}\ln\det \le(\Id_{L^2(-\infty,x)} - \mathcal K_N\ri)
\end{equation}
where $  \mathcal K_N:={\mathcal B}_N\circ{ \mathcal A}_N$ is an integral operator 
\begin{equation}
\label{q_fredh2}
{\mathcal K}_N[f](y,t)=\int_{-\infty}^x F_N(y+s,{\color{black}2t})f(s) \, \d s, \quad \text{with kernel }
F_N(s,t)=-i \sum_{j=1}^N \chi_j e^{-i\, \theta_j(s,t)}.
\end{equation}

\vskip .2in

\paragraph{\bf The infinite soliton limit with positive $\chi_j$'s} \label{infinitesolitonlimit}
We want to consider the limit $N\to\infty$ under the additional assumptions:
\begin{itemize}
\item The poles $\{i \kappa _{j}^{(N)} \}_{j=1}^{N}$ are sampled from a smooth positive density function $\varrho( k )$ so that $\int_{\eta_{1}}^{\kappa _{j}} \varrho( \eta ) \, \d \eta = j / N $, for $j = 1, \ldots, N$.
\item The coefficients $\{\chi_j\}_{j=1}^N$ are real and positive, such that $N \varrho(\kappa_j) \chi_j$ are discretizations of a given function:
\begin{gather}
N \varrho(\kappa_j) \chi_j = \frac{ r( i \kappa _j)}{2\pi}  \qquad j=1,\ldots, N, \label{cjasR1}
\end{gather}
where  $r(k)$ is a real-valued,  continuous, non-vanishing function of $k$ for $k\in  \Sigma_1 = (i \eta_1, i\eta_2)$.
\end{itemize}

\begin{prop}
The following limit holds uniformly for $(s,t)$ in compact subsets of $ \R\times \R_+$:
\begin{eqnarray*}
\lim_{N\to +\infty} \sum_{j=1}^{N}\chi_j e^{-i \, \theta_j(s,{2t})} 
&=&\lim_{\mathclap{N\to +\infty}}  \ \ \frac{1}{N} \sum_{j=1}^{N} r( i \kappa_j) e^{-i \, \theta(i\kappa_j; s,{2t})} \frac{ \kappa_j -\kappa_{j-1}}{2\pi}\\
&=& \int_{\Sigma_1} r(\zeta)e^{-i \, \theta(\zeta; s,{2t})} \frac{\d\zeta}{2\pi i } \ .
\end{eqnarray*}
\end{prop}
\begin{proof}
Using the relation \eqref{cjasR1}, the claim follows easily from the convergence of the Riemann sums to the Riemann integral.
\end{proof}

\begin{remark}
The anti-soliton gas case can be analyzed following the same arguments as above (up to a sign change, see Note \ref{solVSantisol}). The mixed (or bipolar) case is more involved and we do not consider it here. 
\end{remark}

Therefore, in the limit we obtain an integral operator $ \mathcal K:L^2(-\infty,x )\to L^2(-\infty,x )$
\begin{gather}
\mathcal  K [g](y,t)=\int_{-\infty}^x F(y+s,{\color{black}2t})g(s)\, \d s, \quad
\text{with kernel} \quad
F(s,t):=-i \int_{\Sigma_1} r(\zeta)e^{-i \, \theta(\zeta; s,t)} \frac{\d\zeta}{2\pi i}\ . \label{Fc}
\end{gather}

\begin{prop}
The finite-$N$ Fredholm determinant (and its derivatives  w.r.t. $x$) converge in the limit:
$$\det \le(\Id_{L^2(-\infty,x)} \pm \mathcal K_N \ri) \to \det \le(\Id_{L^2(-\infty,x)} \pm \mathcal K \ri), \qquad \text{as } N\to \infty.$$
\end{prop}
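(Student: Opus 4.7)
The plan is to establish trace-norm convergence $\mathcal{K}_N \to \mathcal{K}$ on $L^2(-\infty, x)$ (together with the analogous convergence for all derivatives in $x$), and then invoke Lipschitz continuity of the Fredholm determinant on trace-class operators. The key structural observation is that both operators admit a common rank-one decomposition: expanding out $e^{-i\theta_j(y+s,2t)}$ and $e^{-i\theta(\zeta;\,y+s,2t)}$ as products of factors depending separately on $y$ and $s$ yields
\[
\mathcal{K}_N = \sum_{j=1}^N c_j(t)\, h_{\kappa_j} \otimes h_{\kappa_j}, \qquad
\mathcal{K} = \int_{\eta_1}^{\eta_2} c(\kappa, t)\, h_\kappa \otimes h_\kappa \, \d\kappa,
\]
where $h_\kappa(y) := e^{y\kappa}$ lies in $L^2(-\infty, x)$ since $\kappa > 0$, $c(\kappa, t) := -\tfrac{i}{2\pi}\, r(i\kappa)\, e^{-8t\kappa^3}$ is continuous in $\kappa$, and $c_j(t) = c(\kappa_j, t)\, \Delta\kappa_j$ with $\Delta\kappa_j$ the quadrature weight induced by the sampling. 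In other words, $\mathcal{K}_N$ is literally a Riemann-sum approximation of $\mathcal{K}$ in the Banach space of trace-class operators on $L^2(-\infty, x)$.

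To pass from Riemann-sum convergence to trace-norm convergence I would exploit the identity $\|u \otimes v\|_1 = \|u\|\,\|v\|$: this reduces the continuity of the trace-class valued map $\Phi: \kappa \mapsto c(\kappa, t)\, h_\kappa \otimes h_\kappa$ to continuity of the scalar $c(\cdot, t)$ (given by the hypothesis on $r$) and continuity of $\kappa \mapsto h_\kappa$ in $L^2(-\infty, x)$ (which follows from dominated convergence together with the uniform bound $\|h_\kappa\|^2 = e^{2x\kappa}/(2\kappa)$ on $[\eta_1, \eta_2]$). Since $[\eta_1, \eta_2]$ is compact, $\Phi$ is uniformly continuous, so
\[
\|\mathcal{K}_N - \mathcal{K}\|_1 \leq (\eta_2 - \eta_1)\, \omega_\Phi(1/N) \longrightarrow 0,
\]
and the standard Lipschitz bound $|\det(\Id \pm A) - \det(\Id \pm B)| \leq \|A - B\|_1 \exp(\|A\|_1 + \|B\|_1 + 1)$ delivers convergence of the Fredholm determinants.

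The $x$-derivatives are handled by the unitary change of variables $y \mapsto y - x$, which identifies $L^2(-\infty, x)$ with the fixed space $L^2(-\infty, 0)$ and transfers the whole $x$-dependence into the coefficients of the rank-one expansion: the operators become $\widehat{\mathcal{K}}_N(x) = \sum_j c_j(t)\, e^{2x\kappa_j}\, h_{\kappa_j} \otimes h_{\kappa_j}$ and $\widehat{\mathcal{K}}(x) = \int c(\kappa, t)\, e^{2x\kappa}\, h_\kappa \otimes h_\kappa \, \d\kappa$ on $L^2(-\infty, 0)$, with $\|h_\kappa\|^2 = 1/(2\kappa)$ uniformly bounded on $[\eta_1, \eta_2]$. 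Each $x$-derivative only multiplies the rank-one factor by the bounded smooth weight $(2\kappa)^n$, so the previous argument applies verbatim to give $\|\partial_x^n \widehat{\mathcal{K}}_N - \partial_x^n \widehat{\mathcal{K}}\|_1 \to 0$ in trace norm, uniformly on compact $x$-sets. Combined with invertibility of $\Id \pm \widehat{\mathcal{K}}(x)$ (cf.\ Proposition~\ref{theorem_existence}) and the identity $\partial_x \log\det(\Id + \widehat{\mathcal{K}}) = \mathrm{tr}\bigl((\Id + \widehat{\mathcal{K}})^{-1} \partial_x \widehat{\mathcal{K}}\bigr)$, this yields convergence of every $x$-derivative. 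The only real technical subtlety is precisely the $x$-dependence of the domain of integration, and it is defused by this translation trick; everything else is a routine Riemann-sum convergence in the Banach space of trace-class operators.
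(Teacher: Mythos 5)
Your proof is correct, and it follows the same overall strategy as the paper --- trace-norm convergence of the operators followed by continuity of the Fredholm determinant in the trace norm --- but the mechanism you use to obtain trace-norm convergence is genuinely different and considerably more explicit. The paper establishes that $\mathcal K_N$ and $\mathcal K$ are trace class by factoring each as a product of two Hilbert--Schmidt operators and then appeals to ``uniform convergence of the kernel $F_N$ and standard arguments,'' citing Simon; it does not spell out how uniform convergence of kernels (which by itself only controls the Hilbert--Schmidt or operator norm) is upgraded to control in $\|\cdot\|_1$. Your rank-one decomposition does exactly that missing work: since $\|u\otimes v\|_1=\|u\|\,\|v\|$, the map $\kappa\mapsto c(\kappa,t)\,h_\kappa\otimes h_\kappa$ is (uniformly) continuous into the trace ideal over the compact interval $[\eta_1,\eta_2]$, and $\mathcal K_N-\mathcal K$ is then estimated as a Riemann-sum error directly in $\|\cdot\|_1$. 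Your translation trick $y\mapsto y-x$ is also a genuine addition: the paper never addresses the fact that the Hilbert space $L^2(-\infty,x)$ itself moves with $x$, which is the only real obstruction to differentiating under the determinant, and conjugating by the unitary shift to freeze the space at $L^2(-\infty,0)$ while absorbing the $x$-dependence into the smooth weight $e^{2x\kappa}$ resolves this cleanly. Two minor remarks: (i) writing $\Delta\kappa_j=(\eta_2-\eta_1)/N$ and $\omega_\Phi(1/N)$ tacitly assumes the sample points have mesh $\mathcal{O}(1/N)$ with weights consistent with the measure $\mathrm{d}\kappa$; this matches the paper's own (equally tacit) convention in \eqref{cjasR0}, so it is not a gap relative to the paper, but strictly the bound should involve the mesh of the partition induced by the sampling density; (ii) for the derivatives of the determinant you do not actually need invertibility of $\Id\pm\widehat{\mathcal K}(x)$, since the Plemelj--Smithies expansion shows $A\mapsto\det(\Id+A)$ is analytic on the trace class, so locally uniform trace-norm convergence of all $x$-derivatives of $\widehat{\mathcal K}_N$ already suffices --- though your route through $\partial_x\log\det$ is valid given Proposition~\ref{theorem_existence}.
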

\begin{proof}
Note that both the finite kernel $\mathcal K_N$ and the infinite kernel $\mathcal K$ are trace class, as they are products of two Hilbert-Schmidt operators, and that the finite kernel $F_N$ (and its $x$-derivative)  converges uniformly to the limiting kernel $F$. Then standard results on the convergence of operators in trace-class norm and continuity of the Fredholm determinant with respect to the trace-class norm topology (see \cite[Chapter 2]{Simon_trace}) imply the convergence of the respective Fredholm determinants as $N\to \infty$.  
\end{proof}

As in the $N$-soliton case, we notice that $\mathcal K$ can be written as composition of two operators $\mathcal K = \mathcal B \circ \mathcal A$, where $\mathcal A:L^2(-\infty,x)\to L^2(\Sigma_1)$ and $\mathcal B: L^2(\Sigma_1)\to L^2(-\infty,x)$,
\[
\mathcal A [g](\zeta)=\sqrt{r(\zeta)}\int_{-\infty}^xe^{-i \, \theta(\zeta; s,{\color{black}t})} g(s)\, \d s,\quad \mathcal B [f](x,t)=-\int_{\Sigma_1}\sqrt{r(k)}e^{-i \, \theta(k; x,{\color{black}t})}f(k)\dfrac{\d k}{2\pi} \ .
\]
Via the same identity $\det\le( \Id_{L^2(-\infty,x)} \pm \mathcal B \circ \mathcal A\ri)=\det \le(\Id_{L^2(\Sigma_1)} \pm \mathcal A \circ \mathcal B\ri)$, 
the  infinite-soliton solution can be written as in Theorem~\ref{theorem1} (with abuse of notation we denoted by $\mathcal K$ also the resulting operator $\mathcal A \circ \mathcal B$).

\begin{remark}
It is evident that the operator defined in \eqref{def_A} is the discretized version of \eqref{thm2.6_kernel}. However, convergence as $N\to \infty$ is not so straightforward, as the two operators are acting on different Hilbert spaces; we bypassed this difficulty by shifting the setting into integral operators acting on the same Hilbert space $L^2(-\infty,x)$.
\end{remark}

It remains to show that such expression still satisfies the mKdV equation.  For the purpose we use the  Riemann-Hilbert formulation of the problem.
We recognize in \eqref{thm2.6_kernel}--\eqref{intro:mKdV_sol_FD} the same setting as the one thoroughly analysed in \cite{BertolaCafasso11} and we will closely follow their arguments. 
It is possible to associate to the operators $\mathcal K$ and $\mathcal K^2$ the following two RH problems, with the same jump matrix $\bm J$, but different asymptotic behaviour at infinity. 
\begin{RHP}[\bf Integral operator ${\mathcal K}^2$]\label{RHPK2}
Find a meromorphic matrix-valued function ${ \bm \Xi} : \C \setminus \{\Sigma_1 \cup \Sigma_2\} \to \R^{2\times 2} $ such that
\begin{align*}
&{\bm \Xi}_+(k) = {\bm \Xi}_-(z) \bm J(k) \qquad k \in \Sigma_1 \cup \Sigma_2 \\
&{\bm \Xi}(k) = \bm I + \frac{{\bm \Xi}_1}{k} + \mathcal{O}\le(\frac{1}{k^2}\ri) \qquad k\to \infty
\end{align*}
with $\Sigma_1$ and $\Sigma_2 $ both oriented upwards.
\end{RHP}
The  jump matrix reads
\begin{gather}
\bm J (k)  = \begin{bmatrix} 1& - r(k) e^{-2i \, \theta(k; x,t)} \mathbbm{1}_{{\Sigma_1}}(k) \\  r(-k)e^{2i \, \theta(k; x,t)}\mathbbm{1}_{{ \Sigma_2}}(k) & 1 \end{bmatrix} 
\end{gather} 
and satisfies the symmetry $\bm J(-k) = { \sigma}_1 \bm J(k) { \sigma_1}$, with ${ \sigma}_1 =  \begin{bmatrix} 0&1\\1&0 \end{bmatrix}$. Here $\mathbbm{1}_A$ is the characteristic function of the set $A$.  We notice that $r(-k)=\overline{r(\bar{k})}$.

\begin{RHP}[\bf Integral operator $ {\mathcal K}$]
Find a meromorphic matrix-valued function ${\bm \Gamma} : \C \setminus \{\Sigma_1 \cup \Sigma_2\} \to \R^{2\times 2} $ such that
\begin{align*}
&{\bm \Gamma}_+(k) = {\bm \Gamma}_-(k) \bm J(k) \qquad k \in \Sigma_1 \cup \Sigma_2 \\
&{\bm \Gamma}(k) = \begin{bmatrix}1&1\\ -i k & i k \end{bmatrix} \le[ \bm I+ \frac{{ \bm \Gamma}_1}{k} + \mathcal{O}\le(\frac{1}{k^2}\ri) \ri] \qquad k\to \infty \\
& {\bm \Gamma}_1 = \begin{bmatrix} a_1&0\\0&-a_1\end{bmatrix} \\
&{\bm \Gamma}(-k) = { \bm \Gamma} (k) \begin{bmatrix} 0&1\\1&0 \end{bmatrix} 
\end{align*}
\end{RHP}

Furthermore, the following relationship holds between ${\bm \Gamma}$ and ${\bm \Xi}$:
\[
{ \bm \Gamma}(k) = \begin{bmatrix} 1 & 1 \\ -i k +2i \Xi_{1,12} & i k + 2i \Xi_{1,12} \end{bmatrix} {\bm \Xi}(k)\ ,
\]
which implies 
\begin{equation}
\label{a1_relation}
a_1={\bm \Xi}_{1,11}- {\bm \Xi}_{1,12}.
\end{equation}

The log-derivative of the respective Fredholm determinants for ${\mathcal K}$ and ${\mathcal K}^2$ can be written as (Theorem 4.1 and 4.2 in \cite{BertolaCafasso11})
\begin{align*}
& \frac{\partial}{\partial x} \ln \det \le(\Id_{L^2(\Sigma_1)} - {\mathcal K}^2\ri) = \int_{\Sigma_1\cup \Sigma_2} \Tr \le( { \bm \Xi}_-^{-1} {\bm \Xi}_-' \partial_x \bm J \bm J^{-1}\ri) \frac{\d k}{2\pi i} \\
& \frac{\partial}{\partial x} \ln \det \le( \Id_{L^2(\Sigma_1)} + {\mathcal K}\ri) = \frac{1}{2} \int_{\Sigma_1\cup \Sigma_2} \Tr \le( {\bm \Gamma}_-^{-1} {\bm \Gamma}_-' \partial_x \bm J \bm J^{-1}\ri) \frac{\d k}{2\pi i}, 
\end{align*}
thus yielding
 \begin{gather*}
q(x,t) =i \Res_{k=\infty} \Tr \Big( {\bm \Xi}^{-1}(k) {\bm \Xi}'(k)  \partial_x \bm T(k)\Big) \d k-i \Res_{k=\infty} \Tr \Big( {\bm \Gamma}^{-1}(k) {\bm \Gamma}'(k)  \partial_x \bm T(k)\Big) \d k
\end{gather*}
where $\bm T(k) = i \, \theta(k; x,t) { \sigma}_3$.  Straightforward calculations show that
\[
\Res_{k=\infty} \Tr \Big( {\bm \Gamma}^{-1}(k) {\bm \Gamma}'(k)  \partial_x \bm T(k)\Big) \d k = 2i a_1
\]
and  similarly
\[
\Res_{k=\infty} \Tr \Big( {\bm \Xi}^{-1}(k) {\bm \Xi}'(k)  \partial_x \bm T(k)\Big) \d k= 2i {\bm \Xi}_{1,11}
\]
so that according to \eqref{a1_relation} we obtain
\begin{equation}
q(x,t) = i\le(2i {\bm \Xi}_{1,11}-2i a_1\ri)= -2 {\bm \Xi}_{1,12}
\end{equation}

By setting
\begin{gather}
\bm X = \begin{bmatrix}0&e^{i \frac{\pi}{4}} \\ e^{-i \frac{\pi}{4}} &0 \end{bmatrix} \, {\bm \Xi} \,  \begin{bmatrix}0& e^{i \frac{\pi}{4}} \\ e^{-i \frac{\pi}{4}}&0 \end{bmatrix}\ , \label{X_Xi_RHP}
\end{gather}
we can recognize the RH problem \ref{rhp:X} (without the extra poles at $\pm i \kappa_0$)   with $r(-k)=\overline{r(\bar{k})}$.
 The solution to the RH problem \ref{rhp:X}  exists for any $x\in\R$ and positive time $t>0$, thanks to Proposition~\ref{theorem_existence}.  
We conclude that the soliton-gas solution of the mKdV equation can be calculated via 
\[q(x,t) = 2i \lim_{k\to \infty} k {\bm X}_{12}= i  \dfrac{\partial }{\partial x}\ln\det \le(\Id_{L^2(\Sigma_1)} + {\mathcal K}\ri) - i \dfrac{\partial }{\partial x}\ln\det \le(\Id_{L^2(\Sigma_1)} - {\mathcal K}\ri) ,\] 
thus we have concluded the proof of Theorem~\ref{theorem1}.
We also remark that the formula \eqref{intro:q2} can be easily obtained with the same reasoning from the expression \eqref{q2} for finite solitons.

\vskip .2in

\paragraph{\bf Extension to the soliton+gas setting.} The limiting procedure can be easily extended to the case of $N+M$ solitons, where $N$ of them are suitably rescaled to become a soliton gas and the remaining $M$ are \emph{not} rescaled. The linear algebra manipulations will still be the same, while the limiting operator $\mathcal K: L^2(-\infty,x) \to L^2(-\infty,x)$ will have the expression
\begin{align*}
\mathcal K [f](y,t)= \int_{-\infty}^x F(y+s,2t) f(s) \, \d s, \quad
\text{with kernel} \quad
 F(s,t) 
= -i \int_{\mathcal C} \widetilde r(\zeta) e^{-i \theta(\zeta; s,t)} \frac{\d\zeta}{2\pi i} ,
\end{align*}
where $\mathcal C = \Sigma_1 \cup \bigcup_{j=N+1}^{N+M} \mathcal C_j$, with $\mathcal C_j$ small disjoint circles (oriented counterclockwise), each one surrounding the poles $\{i \kappa_j\}_{j=N+1}^{N+M}$ and not intersecting the real line nor $\Sigma_1$, and the function $\widetilde r(\zeta)$ is defined as $\widetilde r(\zeta) = r_1(\zeta)$ for $\zeta \in \Sigma_1$ (described in \eqref{cjasR1}) and $\widetilde r(\zeta) = \frac{\chi_j}{\zeta - i\kappa_j}$ for $\zeta \in \mathcal C_j$ ($\forall \ j=N+1,\ldots, N+M$). The rest of the calculations remain almost unchanged. When deriving the limiting RH problem, the jumps on the $\mathcal C_j$'s can be easily converted back to residue conditions as in the RH problem~\ref{rhp:X} (case $M=1$).

\section{Setup of the asymptotic problem}
\label{sec-setup}

As previewed in Section \ref{sec-intro}, we consider here a gas of mKdV solitons with positive velocity initially supported on a right half-line interacting with a distinguished soliton traveling faster than the gas. 
Spectrally, the gas is described by a jump across a contour $\Sigma_1 $ and its complex conjugate $\Sigma_2 $. We choose to orient both contours upward. To the gas spectrum we add a pair of discrete spectral points $\pm i \kappa_0$, where the condition $\kappa_0 > \eta_2$ ensures the distinguished soliton travels faster than the gas. The solution of this problem is encoded into  RH problem~\ref{rhp:X} which will be our principle object of interest.

To see that  RH problem~\ref{rhp:X} initially encodes data supported on a right half-line, notice that when $t=0$, the phase function $\theta(k; x,t)$ reduces to $\theta(k; x,0) = x k$, and clearly for $x \ll -1$ we have $e^{-2i \theta(k; x,0)} = \bigo{ e^{2\eta_1 x}}$ for $k \in \Sigma_1$. As the jumps are exponentially near identity for $x \ll -1$, the solution of the Riemann-Hilbert problem, up to exponential corrections, is the one encoded by only the residues conditions at $\pm i\kappa_0$ (the distinguished soliton component). This justifies the claim that the gas is initially supported on a right half-line. 

\begin{remark}
Without the poles, this problem is similar to that previously studied in \cite{GirGraJenMcL}. In \cite{GirGraJenMcL} a soliton gas for the KdV equation was studied supported on a \text{left} half-line, whereas in this paper we study the modified KdV equation, and have constructed a soliton gas supported on a \text{right} half-line. It is straightforward in either the KdV or modified KdV equation to construct a soliton gas solution which is supported on ``\text{the other}" half-line.  The end result of those manipulations is that the signs of $x$ and $t$ in the phase function $\theta(k;x,t)$ appearing in the Riemann--Hilbert problem are flipped. This explains why the signs in the exponential factors containing $\theta(k;x,t)$ in the present paper are different from those in \cite{GirGraJenMcL}.  The fact that we can freely change the signs of $x$ and $t$ takes advantage of the fact that if $q(x,t)$ is a solution of mKdV (resp. KdV) then $\pm q(-x,-t)$ (resp. $q(-x,-t)$) generates a second solution.  Note, however, that the dynamics of the KdV equation for $t>0$ with initial data $q_{0}(-x)$ will be markedly different from $q_{0}(x)$.
\end{remark}

On its own, the discrete spectral data $( i \kappa_0, \chi) \in \C^+ \times \R \setminus \{0\}$ encodes a soliton solution \eqref{soliton} of \eqref{mkdv} with phase shift \eqref{x0}. 
In order to use asymptotic methods to study the interaction of the distinguished soliton and the soliton gas we choose $x_0 \ll -1$, and use this initial position as our large parameter in subsequent calculations. At the level of the scattering data that means we take 
\begin{equation}\label{chi0}
	\chi = \chi(x_0, \kappa_0) = 2\kappa_0 \varsigma  e^{-2\kappa_0 x_0}  \qquad \varsigma = \sgn(\chi) \in \{ -1, +1\}.
\end{equation}

In what follows we assume that $r(k)$ is real and non-vanishing on $\Sigma_1$. 
Note also that in contrast with analogous RH problems that appear in the long-time asymptotics for step-like problems \cite{KM2010,GravaMinakov20}, where the behavior of a function analogous to $r(k)$ at the edge points is of square-root type, here we do not have that restriction, and $r(k)$ might have any type of edge behavior.

\subsection{Steepest descent preparations}
\label{sec-g.function}

The first step in the asymptotic analysis of the RH problem~\ref{rhp:X} is to construct a scalar function $g(k) = g(k;x,t)$ which controls the terms with exponential growth in the jumps \eqref{Xjumps}.
Given $t\geq 0$, if $x < 4 \eta_1^2 t$, then $\Im \theta(k;x,t) <0$ for $k \in i \R_+$; therefore, the jumps are exponentially close to the identity matrix and the small norm theory guarantees that the solution is quiescent in this domain (unless the soliton is scaled to be present here).   Our analysis is inspired from the analysis in  \cite{GravaMinakov20}, \cite{EGKT}.

For $x > 4\eta_1^2 t$, two growing bands $(i\eta_1,i\alpha)\cup(-i\alpha,-i\eta_1)$ emerge from the endpoints $\pm i \eta_1$, subsets of $\Sigma_1$ and $\Sigma_2$ respectively, on which the exponential terms are asymptotically large. In this setting, we introduce a suitable scalar function $g(k;x,t)$ through the transformation:
\begin{equation}\label{T}
	\bm{T}(k) =  \bm{X}(k) e^{-i g(k; x,t) \sigma_3} f(k)^{\sigma_3} \ .
\end{equation}
Such a $g$-function will need to satisfy the following conditions:
\begin{enumerate}[label=\textrm{(\roman*}), align = Center] 
	\item \label{g1}$g$ is analytic in $\C \setminus [-i\alpha, i\alpha]$.
	\item $g_+(k) + g_-(k) + 8k^3 t + 2k x = 0$, within each band $k \in \Sigma_{1,\alpha} \cup \Sigma_{2,\alpha}$.
	\item \label{g3}$ g_+(k) - g_-(k) =- \Omega(x,t)$, in the gap $k \in (-i\eta_1, i\eta_1)$.
	\item \label{g4}$g(k) = g_0(x,t) + \bigo{k^{-1}}$ as $k \to \infty$. 
	\item \label{g5} near the endpoints $k=\pm i \eta$ (where $\eta$ is either $\eta_1, \alpha$ or $\eta_2$), 
	\[
		g(k) + 4k^3 t + k x = \bigo{ ( z \mp i \eta )^{p/2}}, \qquad k \to \pm i \eta,
	\] 
	with $p =1$ if the endpoint is stationary (i.e., for $\eta = \eta_j$ or $j=1,2$) and $p=3$ if the endpoint is allowed to vary with $(x,t)$ (i.e., for $\eta = \alpha$).
	\end{enumerate}
For now we will not specify the function $f(z)$ in \eqref{T}, except to remark that we assume it satisfies some jump conditions, to be chosen later, along the same contour $[-i\alpha, i\alpha]$.

With such a $g$-function, the transformation \eqref{T} results in a new RH problem for $\bm{T}(k)$ with jumps given by
	\begin{equation}\label{T jump}
	\begin{gathered}
		\bm{T}_+(k) = \bm{T}_-(k) \bm{J}_T(k), \\
		\bm{J}_T(k) = \begin{dcases} 
			\begin{bmatrix}
				e^{-i (g_+(k) - g_-(k) )} \frac{f_+(k)}{f_-(k)} & 0  \\
				i r(k) f_+(k) f_-(k) & e^{i (g_+(k) - g_-(k) )} \frac{f_-(k)}{f_+(k)}
			\end{bmatrix},
			& k \in \Sigma_{1,\alpha}, \\
			\begin{bmatrix}
				e^{-i (g_+(k) - g_-(k) )} \frac{f_+(k)}{f_-(k)} & i \frac{\overline{r(\bar k)} }{f_+(k) f_-(k)} \\
				0 & e^{i (g_+(k) - g_-(k) )} \frac{f_-(k)}{f_+(k)}
			\end{bmatrix},
			& k \in \Sigma_{2,\alpha}, \\	
			e^{i {\Omega}(x,t)  \sigma_3} \left(\frac{ f_+(k)}{f_-(k)} \right)^{\sigma_3}, & k \in i[-\eta_1, \eta_1].
		\end{dcases}
	\end{gathered}
	\end{equation}

\begin{figure}
\centering
\vspace*{5pt}
	\begin{overpic}[scale=0.4]{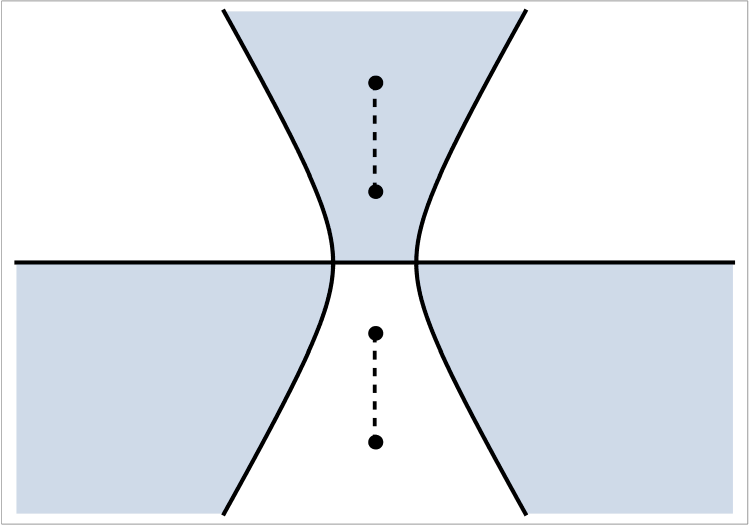}
	   \put(40,75){$v<0$}
	\end{overpic}
	\quad
	\begin{overpic}[scale=.4]{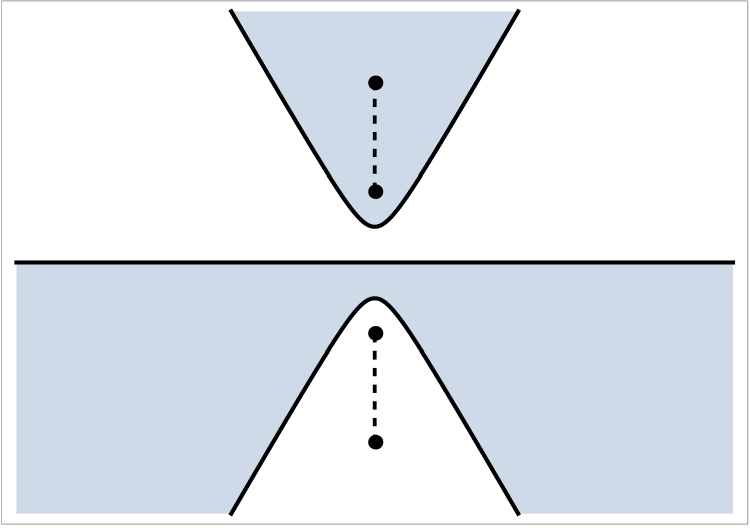}
	   \put(34,75){$0<v<4\eta_1^2$}
	\end{overpic}
	\quad
	\begin{overpic}[scale=.4]{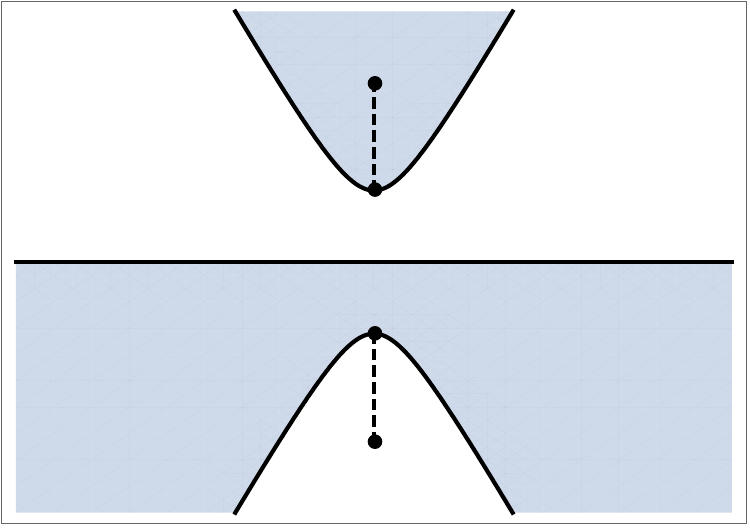}
	   \put(37,75){$v=4\eta_1^2$}
	\end{overpic}
	\\[5pt]
	\begin{overpic}[scale=.4]{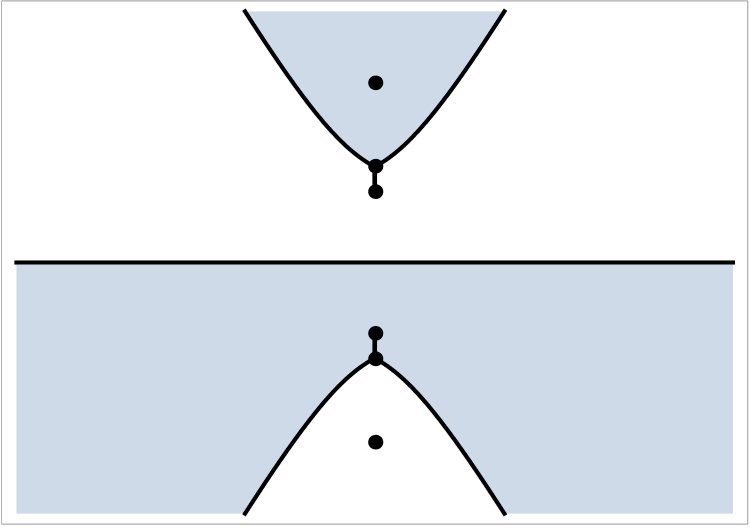}
	   \put(27,-6){$4\eta_1^2<v< v_2$}
	\end{overpic}
	\quad
	\begin{overpic}[scale=.4]{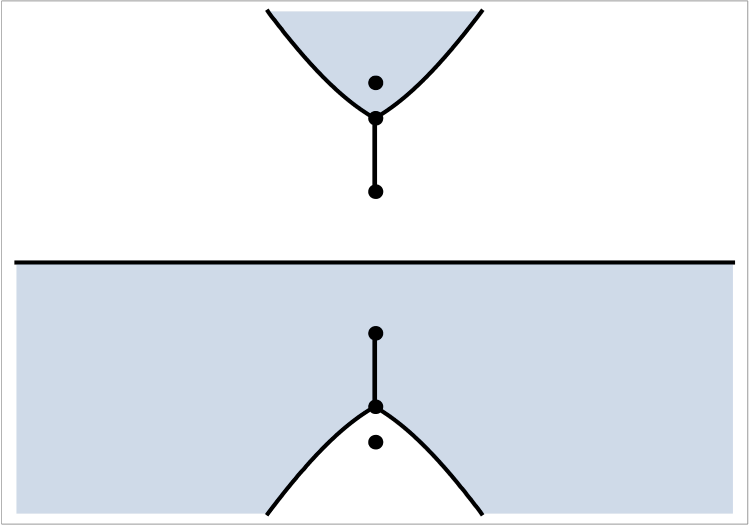}
	   \put(27,-6){$4\eta_1^2<v< v_2$}
	\end{overpic}
	\quad
	\begin{overpic}[scale=.4]{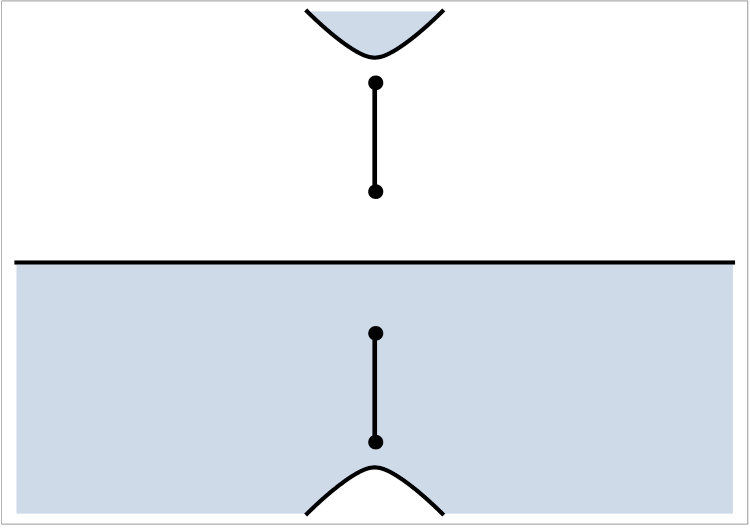}
	   \put(40,-6){$v> v_2$}
	\end{overpic}
\caption{
The signature table for $\Im \varphi$. White (resp. shaded) regions show where $\Im \varphi>0$ (resp. $\Im \varphi<0$). 
As $v = \tfrac{x}{t}$ increases past $4\eta_1^2$, a band of non-analyticity $(i\eta_1, i\alpha)$ emerges from $\eta_1$; the bands grow until $v$ reaches $v_2$ where $\alpha = \eta_2$. For $v > v_2$, the band  $(i\eta_1, i\eta_2)$ is fixed.
}
\label{fig:phase}
\end{figure}

With the following theorem we identify the exact value of the endpoints of the bands $\pm i \alpha$ and we prove the existence and explicit expression of the $g$ function.
\begin{theorem}\label{thm:g}
Let $\eta_1, \eta_2$ be the endpoints of the jump contour defining  RH problem~\ref{rhp:X}.  These values determine a unique positive quantity $v_2> 4\eta_1^2$ satisfying the equation
\begin{align}\label{Whitham W}
	v_2 = \eta_2^2 \, W \! \left( \eta_1^2/\eta_2^2  \right), \quad W(m) = \frac{4(1-m) K(m)}{E(m)} + 2(1+m),
\end{align}
where $K(m)$ and $E(m)$ are the complete elliptic integrals of the first and second kind respectively. 
Then for every $x > 4\eta_1^2 t$ there exists a function $g(k)$ satisfying conditions \ref{g1}-\ref{g5} above, with $\alpha = \alpha(x/t) \in (\eta_1, \eta_2]$, such that:
\begin{itemize}
\item For $ \tfrac{x}{t} \in (  4\eta_1^2,  v_2) $, $\alpha(x/t) \in (\eta_1, \eta_2)$ and it satisfies the self-similar Whitham evolution equation
\begin{equation}\label{Whitham}
	\frac{x}{t} = \alpha^2 W(\eta_1^2/\alpha^2) .
\end{equation}
Moreover, the evolution of $\alpha(v)$ is hyperbolic: $\frac{\d\alpha}{\d v} >0$, $\lim_{v \to 4\eta_1^2} \alpha(v)= \eta_1$ and $\lim_{v \to v_2} \alpha(v) = \eta_2$.

\item For $\tfrac{x}{t} \geq v_2$, $\alpha(x/t) \equiv \eta_2$, i.e., the jump of the function $g$ is supported along the entire length of the original jump contour $\Sigma_1 \cup \Sigma_2.$
\end{itemize}

Finally, let
\begin{equation}\label{phi.def}
	\varphi(k;x,t) := g(k;x,t) + k x + 4 k^3 t;
\end{equation}
then, for any $x > 4\eta_1^2 t$, we have 
\begin{equation}
	\varphi(k;x,t) = R(k) \left[ 4t k - \frac{{\Omega}}{2\pi i} \int\limits_{-i\eta_1}^{i\eta_1} \frac{\d s}{R(s)(s-k)} \right],
\end{equation}
where 
\begin{equation}\label{omega.1}
	\Omega = \Omega(x,t) = \frac{\pi \alpha }{K(m)} (x-2(\eta_1^2+ \alpha^2) t ), \quad m = \frac{\eta_1^2}{\alpha^2}
\end{equation}
and the function
\begin{equation}\label{Root}
	R(k) = \sqrt{ (k^2 + \eta_1^2)(k^2 + \alpha^2)}
\end{equation}
is analytic away from $\Sigma_{1,\alpha} \cup \Sigma_{2,\alpha}$ and normalized such that $R(k) = k^2 +\bigo{1}$ as $k\to \infty$. 
\end{theorem}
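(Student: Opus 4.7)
The plan is to pass from $g$ to its derivative, whose RH data is significantly simpler. Setting $\varphi(k) := g(k) + xk + 4k^3 t$ as in \eqref{phi.def} and differentiating the jump relations (ii)--(iii) tangentially, the constant jump $g_+ - g_- = -\Omega$ on the gap gives $\varphi'_+ = \varphi'_-$ there (so $\varphi'$ extends analytically across the gap), while $g_+ + g_- + 8 k^3 t + 2kx = 0$ on the bands yields $\varphi'_+ + \varphi'_- = 0$. Combined with the jump $R_+ = -R_-$ of $R(k)$ from \eqref{Root} across the bands, the ratio $h(k) := \varphi'(k)/R(k)$ extends analytically across the bands, so $h$ is a rational function. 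The endpoint conditions from (v) dictate its pole structure: at the stationary endpoints $\pm i\eta_1$, the behavior $\varphi = \bigo{(k \mp i\eta_1)^{1/2}}$ forces $h$ to have simple poles, while at the soft endpoints $\pm i\alpha$ the behavior $\varphi = \bigo{(k \mp i\alpha)^{3/2}}$ is automatic once $h$ is regular there. The symmetry $k \mapsto -k$ (under which the RH problem is invariant, forcing $\varphi$ odd and $\varphi'$ even) combined with the asymptotic $\varphi'(k) = 12tk^2 + x + \bigo{k^{-2}}$ at $\infty$ then uniquely determines
\[
	h(k) = 12t + \frac{c}{k^2 + \eta_1^2}, \qquad c = x - 6t(\eta_1^2 + \alpha^2),
\]
with $c$ obtained from matching the subleading constant coefficient at $\infty$.

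At this point $\alpha$ is still free. It is fixed by the remaining part of (v), namely $\varphi(i\alpha) = 0$, which, combined with $\varphi(i\eta_1) = 0$ from the stationary endpoint condition, reduces to the single moment equation
\[
	\int_{i\eta_1}^{i\alpha} \varphi'_+(k)\,dk = 0.
\]
Substituting $u^2 = \alpha^2 - (\alpha^2 - \eta_1^2)\sin^2\theta$ and applying standard identities for complete elliptic integrals converts this into the Whitham equation $x/t = \alpha^2 W(\eta_1^2/\alpha^2)$ with $W$ as in \eqref{Whitham W}. Existence, uniqueness, and hyperbolicity $\tfrac{d\alpha}{dv} > 0$ on $(4\eta_1^2, v_2)$ follow from strict monotonicity of the map $\alpha \mapsto \alpha^2 W(\eta_1^2/\alpha^2)$ on $(\eta_1, \eta_2]$. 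Using the derivative formulas $\tfrac{dK}{dm} = \tfrac{E-(1-m)K}{2m(1-m)}$ and $\tfrac{dE}{dm} = \tfrac{E-K}{2m}$ together with Legendre's relation, this monotonicity reduces to the positivity of an explicit elliptic expression; this is the principal computational step. The boundary limits $v \to 4\eta_1^2$ as $\alpha \to \eta_1^+$ (using $(1-m)K(m) \to 0$ and $E(m) \to 1$ as $m \to 1^-$) and $v = v_2$ at $\alpha = \eta_2$ pin down the range.

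For the explicit representation, I would integrate $\varphi' = R \cdot h$ and rewrite $R(k)/(k^2+\eta_1^2)$ and $R(k)\,k$ via Cauchy integrals over the gap to produce the form
\[
	\varphi(k) = R(k)\left[4tk - \frac{\Omega}{2\pi i}\int_{-i\eta_1}^{i\eta_1}\frac{ds}{R(s)(s-k)}\right].
\]
The value of $\Omega$ follows either from the Sokhotski--Plemelj jump of the integral across the gap, using $\int_{-i\eta_1}^{i\eta_1}\tfrac{ds}{R(s)} = \tfrac{2iK(m)}{\alpha}$, or from matching the subleading $xk$ coefficient in the asymptotic expansion of $\varphi$ at $\infty$; both routes give $\Omega = \tfrac{\pi\alpha}{K(m)}(x - 2t(\eta_1^2+\alpha^2))$. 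The case $x/t \geq v_2$ is obtained by specialization with $\alpha = \eta_2$ fixed, where no soft-edge moment condition is needed. The main obstacle will be verifying the monotonicity of the Whitham function $W(m)$: the classical elliptic identities needed are available, but the sign analysis requires careful bookkeeping. Secondary care is needed in the branch conventions of $R_\pm$ and the orientation of the integral contours, which enter the precise sign of $\Omega$.
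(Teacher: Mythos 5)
Your Liouville-type construction of $\varphi'$ is a legitimate alternative to the paper's route (the paper instead writes $g$ directly by Sokhotski--Plemelj as in \eqref{eq:gformula} with two free parameters $\Omega$ and $\alpha$, then fixes $\Omega$ from condition (iv) and $\alpha$ from condition (v) at $i\alpha$), and the rational form you land on, $\varphi'(k)=(k^2+\alpha^2)\bigl(12tk^2+x-6t(\alpha^2-\eta_1^2)\bigr)/R(k)$, agrees with \eqref{varphi.3b}. The genuine gap is in the condition you impose to determine $\alpha$. You read condition (v) at the stationary endpoint as $\varphi(i\eta_1)=0$; this is false. Conditions (ii) and (iii) together give $\varphi_+(i\eta_1)+\varphi_-(i\eta_1)=0$ and $\varphi_+(i\eta_1)-\varphi_-(i\eta_1)=-\Omega$, hence $\varphi_\pm(i\eta_1)=\mp\Omega/2\neq 0$ (the theorem's own formula \eqref{omega.1} makes $\Omega$ strictly positive throughout the modulation region, and the paper's closed form for $\varphi$ visibly tends to a nonzero multiple of $\Omega$ at $i\eta_1$); the $p=1$ clause in (v) only controls the square-root nature of the local expansion, not the limiting value. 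Consequently your moment condition $\int_{i\eta_1}^{i\alpha}\varphi'_+\,\d k=0$ is, up to a factor, precisely the vanishing of the $\mathcal{B}$-period of $\d\varphi$, i.e.\ the statement $\Omega=0$ --- it is not the Whitham equation. Concretely, evaluating your band integral yields
\begin{equation*}
	\frac{x}{t}=6(\alpha^2-\eta_1^2)+4\alpha^2\,\frac{(1+m_1)E(m_1)-(1-m_1)K(m_1)}{K(m_1)-E(m_1)},\qquad m_1=1-\frac{\eta_1^2}{\alpha^2},
\end{equation*}
which tends to $12\eta_1^2$ as $\alpha\to\eta_1^+$, not to the required trailing-edge value $4\eta_1^2$.

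The condition you are missing is the vanishing of the $\mathcal{A}$-period over the gap, $\int_{-i\eta_1}^{i\eta_1}\varphi'(k)\,\d k=0$ (the paper's normalization $\int_{-i\eta_1}^{i\eta_1}\rho(s)\,\d s=0$). It is what guarantees that the band relation holds with right-hand side exactly zero on \emph{both} bands simultaneously, given the symmetry; it is an independent constraint not implied by your asymptotic matching at infinity. With $d$ denoting the constant term of the numerator of $\varphi'$, the soft-edge condition at $i\alpha$ gives $d=\alpha^2\bigl(x-6t(\alpha^2-\eta_1^2)\bigr)$, and substituting this into the gap moment condition, written in terms of $M_j=\int_0^{\eta_1}u^{2j}\bigl[(\eta_1^2-u^2)(\alpha^2-u^2)\bigr]^{-1/2}\d u$ with $M_0=K(m)/\alpha$, $M_1=\alpha(K-E)$, $M_2=\tfrac{\alpha^3}{3}\bigl[(2+m)K-2(1+m)E\bigr]$, produces $xE=t\alpha^2\bigl[4(1-m)K+2(1+m)E\bigr]$, which is exactly \eqref{Whitham}. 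Once this substitution is made, the rest of your plan (monotonicity of $W$ via the elliptic derivative identities, the limits as $m\to1^-$, and the recovery of the integral representation and of $\Omega$ as a period) goes through.
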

%\ \\

\begin{remark}\label{rem:g.left}
	For $x < 4 \eta_1^2 t$, we have no need for a $g$-function ($g(k;x,t) \equiv 0$), so we set 
\[
\varphi(k;x,t) = \theta(k;x,t) \qquad  \text{for } x < 4 \eta_1^2 t\ .
\]
This allows us to use the same notation, $\varphi(k;x,t)$, for the phase function in all regions of spacetime. 
\end{remark}

\begin{proof}[Proof of Theorem~\ref{thm:g}]
Using the Sokhotski–Plemelj formula we can write $g$ as 
\begin{equation}
\label{eq:gformula}
	g(k) = \frac{R(k)}{2\pi i} \left[ \int\limits_{\smash{\ \Sigma_{1,\alpha} \cup \Sigma_{2,\alpha}}} \frac{-2x s - 8t s^3}{R_+(s)} \frac{\d s}{s-k} - \int\limits_{-i\eta_1}^{i\eta_1} \frac{{\Omega}}{R(s)} \frac{\d s}{s-k} \right],
\end{equation}
which satisfies conditions \ref{g1}-\ref{g3} for any choice of $\Omega$ and $\alpha$. The asymptotic expansion of $g(k;x,t)$ at $\infty$ gives 
\begin{equation}
\begin{aligned}
	g(k) &= g_1 k + g_0 + \bigo{k^{-1}}, \qquad k \to \infty, \\
	g_1 &= \frac{1}{2\pi i} \int\limits_{\mathclap{\ \Sigma_{1,\alpha} \cup \Sigma_{2,\alpha}}} \frac{2xs + 8ts^3}{R_+(s)} \, \d s + \frac{{\Omega}}{2\pi i} \int\limits_{-i\eta_1}^{i\eta_1}
	   \frac{\d s}{R(s)},  \\
	g_0 &= \frac{1}{2\pi i} \int\limits_{\mathclap{\ \Sigma_{1,\alpha} \cup \Sigma_{2,\alpha}}} \frac{s(2xs + 8ts^3)}{R_+(s)} \, \d s + \frac{{\Omega}}{2\pi i} \int_{-i\eta_1}^{i\eta_1} 
	\frac{s }{R(s)}\, \d s  \equiv 0\qquad \text{(by symmetry)}.
\end{aligned}
\end{equation}
We now  require that $g$ satisfies \ref{g4}, i.e. $g_1\equiv 0$. This yields one real equation which determines ${\Omega}$ as
\begin{equation}
	{\Omega} = \frac
	{x-2(\eta_1^2+ \alpha^2) t }	
	{ \frac{1}{2\pi i} \int_{-i\eta_1}^{i\eta_1} \frac{\d s}{R(s)} } ,
\end{equation} 
which is equivalent to \eqref{omega.1}.

If $\alpha$ is stationary, then the description is complete. 
In the modulation zone (i.e. $\alpha \neq \eta_1,\eta_2$),  condition \ref{g5} determines the motion of the moving branch points $\pm i \alpha$. 
\color{black} Fix $s_0 > \max \{ |k|, \alpha \}$. \color{black} Using the residue theorem we can write 
\begin{equation}\label{g.2}
	g(k) + xk + 4t k^3 = \frac{R(k)}{2\pi i} \left[ \, \oint\limits_{{\color{black} |s| = s_0}} \frac{x s + 4t s^3}{R(s)} \frac{\d s}{s-k} -  \int\limits_{-i\eta_1}^{i\eta_1} \frac{{\Omega}}{R(s)} \frac{\d s}{s-k} \right],
\end{equation}
where the loop integral is positively oriented. 
A necessary and sufficient condition for $g$ to satisfy \ref{g5} is then
\[
	\frac{1}{2\pi i}\oint\limits_{{\color{black} |s| = s_0}} \frac{x s + 4t s^3}{R(s)} \frac{\d s}{s-i \alpha} - \frac{1}{2\pi i} \int\limits_{-i\eta_1}^{i\eta_1} \frac{{\Omega}}{R(s)} \frac{\d s}{s-i \alpha} = 0\ ;
\]
the integrals can be evaluated exactly, giving
\[
	4t i \alpha - \frac{{\Omega}\cdot i \eta_1}{2\pi i \eta_1 \alpha\cdot (-i \alpha) }  \int_{-1}^1 \frac{\d u}{\sqrt{(1-u^2)(1-m\, u^2) } (1- \sqrt{m} u) } = 0,
	\]
	that simplifies to 
	\begin{equation}
	\label{whitham.00}
	4t\alpha - \frac{{\Omega}}{\pi \alpha^2} \frac{E(m)}{1-m} = 0
\end{equation}
where $m = \eta_1^2/\alpha^2$ and $E(m)$ is the complete elliptic integral of second kind:
\[
E(m) := \int_0^1 \frac{\sqrt{1- m s^2}}{\sqrt{ 1-s^2}} \d s\ .
\]
Using \eqref{omega.1}, equation \eqref{whitham.00} is equivalent to \eqref{Whitham},
which gives the modulation equations determining the motion of the branch point $\alpha$ as a function of $v=x/t$. We note that \eqref{Whitham} is equivalent to the condition that 
\[
	x - 2(\eta_1^2+\alpha^2) t = \frac{4t(\alpha^2-\eta_1^2) K(m)}{E(m)} > 0
\]
which is strictly positive since $\alpha > \eta_1$ (and thus $0 < m < 1$ and ${\Omega} \in \R_+$).

Notice that \eqref{Whitham} reduces to   
\begin{align*}
	\frac{x}{t} = 4 \eta_1^2
	\qquad \text{as} \quad
	\alpha \to \eta_1,
\end{align*}
which is consistent with our initial assumption that the bands emerge out of the points $\pm i \eta_1$ for $x> 4\eta_1^2 t$.  We can implicitly differentiate the Whitham equation \eqref{Whitham} with respect to $v = \tfrac{x}{t}$ to get 
\[
	2 \alpha \alpha_v = \left[ W(m) - m W'(m) \right]^{-1} \ .
\]
Since $\alpha>0$ by construction, this is enough to show that $\alpha_v>0$: indeed, $W(m)>0$ for $0<m<1$  and 
\[
	W'(m) = -2 \frac{(K(m)-E(m)) ((1+m) E(m) - (1-m) K(m))}{m E(m)^2} < 0\ , 
\]
where the last inequality follows from the known inequality $\frac{K(m)}{E(m)}< \frac{1}{\sqrt{1-m}}$ \cite[\S19.9.8]{DLMF}.

Therefore, there exists a critical value $v_2$, with $v_2 > 4 \eta_1^2$, such that 
\[
	v_2 = \eta_2^2 \, W(\eta_1/\eta_2),  \qquad {\text{i.e.,} \qquad \alpha(v_2) = \eta_2}.
\]
\end{proof}

\subsection{The $g$-function in terms of abelian integrals}\label{g_function_abelian}
The $g$-function defined above can also be expressed in terms of abelian integrals associated to the two-sheeted genus-one Riemann surface 
\begin{equation}\label{surface}
	\mathfrak{X} = \{ (k,\eta) \in \C^2\, :\, \eta^2 =R(k)^2 = (k^2 + \eta_1^2)(k^2 + \alpha^2) \},
\end{equation}
where the first sheet of $\mathfrak{X}$ is identified by the fact that $R(k) >0$ for $\Im k = 0$. 
Denote by $\infty^+$ ($\infty^-$) the pre-image of $k=\infty$ on the first (second) sheet of $\mathfrak{X}$. 
We fix a canonical homology basis on $\mathfrak{X}$ by choosing $\mathcal B$ to encircle $\Sigma_1$ clockwise on the first sheet, and $\mathcal A$ to pass from the positive side of $\Sigma_2$ to $\Sigma_1$ on sheet 1 and from the negative side of $\Sigma_1$ to $\Sigma_2$ on sheet 2. See \figurename~\ref{fig:homology}.

\begin{figure}[th]
\centering
\scalebox{.9}{
\begin{tikzpicture}[>=stealth]
\path (0,0) coordinate (O);

% Coordinates for the plane rectangles and points at infinity
\coordinate (TL) at (-2,5);
\coordinate (TR) at (7,5);
\coordinate (BL) at (-4,3);
\coordinate (BR) at (5,3);
\coordinate (INF1) at (5.4,4.3);

\coordinate (shift) at (0,-2.3);
\coordinate (TL2) at ($  (TL) + (shift)  $);
\coordinate (TR2) at ($  (TR) + (shift)  $);
\coordinate (BL2) at ($  (BL) + (shift)  $);
\coordinate (BR2) at ($  (BR) + (shift)  $);
\coordinate (INF2) at ($ (INF1) + (shift) $);

% Coordinates for the branch points
\coordinate (eta1) at (1.5,4);
\coordinate (eta2) at (4,4);
\coordinate (alpha) at (3.2,4);

\coordinate (-eta1) at (0.5,4);
\coordinate (-eta2) at (-2,4);
\coordinate (-alpha) at (-1.2,4);

% draw the two sheets 
\draw (TL) -- (TR) -- (BR) -- (BL) -- cycle;
\node[ label={[label distance= -0.3cm, below, xshift= -0.1cm]$\times$}]  at (INF1) {$\infty^+$};
\draw (TL2) -- (TR2) -- (BR2) -- (BL2) -- cycle;
\node[ label={[label distance= -0.3cm, below, xshift= -0.1cm]$\times$}]  at (INF2) {$\infty^-$};

% draw and label the branch cuts
\draw[->-=0.5] (eta1) -- (alpha);
\draw[->-=0.5] (-alpha) -- (-eta1);

\draw[->-=0.5] ($(eta1)+(shift)$) -- ($ (alpha) +(shift) $);
\draw[->-=0.5] ($(-eta1)+(shift)$) -- ($ (-alpha) +(shift) $);

\foreach \pos/\label in {eta1/i\eta_1, eta2/i\eta_2, alpha/i\alpha,-eta1/-i\eta_1,-eta2/-i\eta_2,-alpha/-i\alpha}{
\node[circle,fill=black, inner sep=0pt,minimum size=3pt,label=below:{\tiny $\label$}] at  (\pos) {};
\node[circle,fill=black, inner sep=0pt,minimum size=3pt,label=below:{\tiny $\label$}] at  ($ (\pos)+(shift) $)  {};
}

% draw connections between sheets
\foreach \pos in {eta1,alpha,-eta1,-alpha}{
\draw[dashed, black!30] (\pos) -- ($ (\pos) + (shift) $);
}

% draw loop contours 
\draw[->- = .25, red] ($ 0.3*(-eta1)+0.7*(-alpha) $) .. controls + (70:.7cm) and + (70:.7cm) .. ($ 0.3*(eta1)+0.7*(alpha)  $);
%\draw[->- = .75, red] ($ 0.3*(-eta1)+0.7*(-alpha)  + (shift) $)  .. controls + (-110:0.7cm) and + (-110:.7cm) .. ($ 0.3*(eta1)+0.7*(alpha) +(shift) $);
\draw[->- = .25, red]  ($ 0.3*(eta1)+0.7*(alpha) +(shift) $) .. controls + (-110:0.7cm) and + (-110:.7cm) .. ($ 0.3*(-eta1)+0.7*(-alpha)  + (shift) $);
\draw[red!30, dashed] ($ 0.3*(-eta1)+0.7*(-alpha)  $) -- ++ (shift);
\draw[red!30, dashed] ($ 0.3*(eta1)+0.7*(alpha)  $) -- ++ (shift);
\node[above, red] at (1,4.5) {\small $\mathcal A$};
\draw[-<- = .25, blue] ( $0.5*(eta1)+0.5*(alpha) $) ellipse (1.3cm and .6cm);
\node[blue, above] at (3.25,4.4) {\small $\mathcal B$};

\end{tikzpicture}
}
\caption{The homology basis for the Riemann surface 
$\mathfrak{X}$ associated with 
$R(k) = \sqrt{ (k^2+\eta_1^2)(k^2 +\alpha^2)}$.
}
\label{fig:homology}
\end{figure}
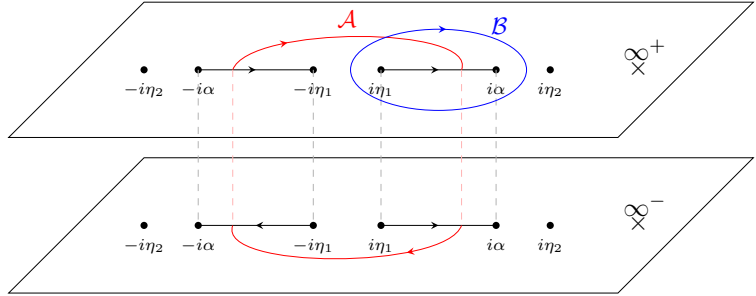

Using the representation \eqref{g.2} of $g(k;x,t)$,  we have that 
\begin{equation}\label{varphi}
	\varphi(k;x,t) = g(k;x,t) + x k + 4 t k^3 = \frac{R(k)}{2\pi i} \left[ \, \oint\limits_{{\color{black} |s| = s_0}} \frac{x s + 4t s^3}{R(s)} \frac{\d s}{s-k} -  \int\limits_{-i\eta_1}^{i\eta_1} \frac{{\Omega}}{R(s)} \frac{\d s}{s-k} \right]
\end{equation}
 satisfies the relations $\varphi_+(k) + \varphi_-(k) = 0$ for $k \in \Sigma_{1,\alpha} \cup \Sigma_{2,\alpha}$, $\varphi_+(k) - \varphi_-(k) = -{\Omega}<0$ for $k \in [-i\eta_1,  i\eta_1]$, and $\varphi(k) = 4tk^3 + x k + \bigo{k^{-1}}$ as $k \to \infty$. These relations show that $\varphi$ can also be represented in terms of abelian integrals 
\[
	\varphi(k;x,t) =   t \varphi_2(k) + x \varphi_0(k) .
\]
Here $\varphi_0$ and $\varphi_2$ are given by
\[
	\varphi_0(k) = \int_{i\alpha}^k \d\varphi_0, \qquad \varphi_2 = \int_{i\alpha}^k \d\varphi_2(k), 
\]
where
\begin{align*}
	\d \varphi_0(\zeta) &= \frac{ \zeta^2 + c_0}{R(\zeta)} \d\zeta = \left[ 1 + \bigo{\zeta^{-2}} \right] \d\zeta,  
	\\
	\d \varphi_2(\zeta) &= 12 \frac{ \zeta^4 + \tfrac{1}{2}(\eta_1^2 + \alpha^2) \zeta^2 + c_2}{R(\zeta)} \d\zeta = \left[ 12 \zeta^2 + \bigo{\zeta^{-2}} \right] \d\zeta,
\end{align*}
as $\zeta \to \infty^+$, and the constants $c_0$ and $c_2$ are chosen to ensure that $\oint_{\mathcal A} \d\varphi_j = 0$, $k=0,2$. 
This gives
\begin{equation}
\begin{aligned}
	&c_0 = - \left( \int_0^{i\eta_1} \frac{\d\zeta}{R(\zeta)} \right)^{-1} 
	\left(   \int_0^{i\eta_1} \frac{\zeta^2}{R(\zeta)} \, \d\zeta \right) 
	= \alpha^2 \left( 1 - \frac{ E(m)}{K(m)} \right) \\
	&c_2 = - 12 \left( \int_0^{i\eta_1} \frac{\d\zeta}{R(\zeta)} \right)^{-1} 
	\left(   \int_0^{i\eta_1} \frac{\zeta^4+\tfrac{1}{2}(\eta_1^2+\alpha^2)\zeta^2}{R(\zeta)} \, \d \zeta \right) 
	= \frac{\alpha^4}{6} \left( (1+m) \frac{E(m)}{K(m)} - (1-m) \right)
	\end{aligned}
\end{equation}
Therefore $\varphi(k;x,t)$ can be alternatively represented as
\begin{align}
	\label{varphi.3a}
	\varphi(k;x,t) = t \varphi_2(k) + x \varphi_0(k) 
	&=  \int_{i \alpha}^k \frac{12 t ( \zeta^4 + \tfrac{1}{2}(\eta_1^2 + \alpha^2) \zeta^2 + c_2) + x ( \zeta^2 + c_0)}{R(\zeta)} \, \d\zeta, \\
	\label{varphi.3b}
	&=  \int_{i \alpha}^k \frac{( \zeta^2 + \alpha^2) (12 t (\zeta^2 - \tfrac{1}{2}(\alpha^2-\eta_1^2) ) + x )}{R(\zeta)} \, \d\zeta \ .
\end{align} 
\begin{remark}\label{Dalpha=0}
Note that \eqref{Whitham} is equivalent to the condition that $\zeta^2+\alpha^2$ is a factor of the numerator of the integrand in \eqref{varphi.3a}; this additionally implies that $\partial_\alpha \varphi(k;x,t) \equiv 0$.
\end{remark}

For $k \in [-i\eta_1, i \eta_1]$ we have
\begin{equation}\label{omega.phi.ointB}
	\varphi_+(k) - \varphi_-(k) =  \left( - \oint_{\mathcal B}	 \d\varphi_0 \right) x + \left( -\oint_{\mathcal B} \d\varphi_2 \right) t = -{\Omega}\ ,
\end{equation}
therefore
\[
	\oint_{\mathcal B} \d\varphi_0 = \frac{\pi \alpha}{K(m)} \ ,
	\qquad 
	\oint_{\mathcal B} \d\varphi_2 = -\frac{2\pi \alpha}{K(m)}(\eta_1^2 +\alpha^2)\ ,
\]
by comparison with \eqref{omega.1}.

\subsection{Preparing the problem to open lenses}\label{sect:preparation_opening_lenses}
The structure of the $g$-function described in Theorem~\ref{thm:g} separates the $(x,t)$ half-plane into three sectors
\begin{equation}
	\label{sectors}
	\begin{gathered}
	   \mathcal{S}_L = \left\{ (x,t) \in \R \times \R_+ \, :\,  \frac{x}{t} < 4\eta_1^2 \right\}, \\
	   \mathcal{S}_M = \left\{ (x,t) \in \R \times \R_+ \, :\,  4\eta_1^2 < \frac{x}{t} < v_2  \right\}, \\ 
	   \mathcal{S}_R = \left\{ (x,t) \in \R \times \R_+ \, :\,  \frac{x}{t} > v_2  \right\}. \\ 
	\end{gathered}
\end{equation} 
We further subdivide these sectors by writing
\begin{gather}
	\begin{aligned} 
	   \mathcal{S}_j^{^{(+)}} &= \left\{ (x,t) \in \mathcal{S}_j \,:\, \ln \left| \frac{ \chi }{2\kappa_0} \right| + 2\Im \varphi(i\kappa_0;x,t) > 0 \right\},  \\
	   \mathcal{S}_j^{^{(-)}} &= \left\{ (x,t) \in \mathcal{S}_j \,:\, \ln \left| \frac{ \chi }{2\kappa_0} \right| + 2\Im \varphi(i\kappa_0;x,t) < 0 \right\},
	\end{aligned}
	\qquad j \in \{ L,M,R\},
\shortintertext{and}
	\label{before.after}
	\mathcal{S}^{^{(\pm)}} = \mathcal{S}_L^{^{(\pm)}} \cup \mathcal{S}_M^{^{(\pm)}} \cup \mathcal{S}_R^{^{(\pm)}} 	,
\end{gather}
which captures whether the coefficients $\chi e^{i \varphi(i\kappa_0;x,t)}$, which will appear in the residue condition \eqref{Tres}, are asymptotically large or small respectively. Physically, the set $\mathcal{S}^{^{(+)}}$ is the set of points in front of the soliton or before the soliton passes by, while $\mathcal{S}^{^{(-)}}$ corresponds to points in spacetime behind  the soliton or after the soliton has passed by. 

Our next step is to properly define the function $f(k)$ in the definition \eqref{T} of $\bm{T}(k;x,t)$. Motivated by the jump conditions \eqref{T jump} we choose $f$ as follows:
\begin{subequations}\label{f}
\begin{equation}
	f(k; x,t) = 
	\begin{dcases}
	  1, & (x,t) \in \mathcal{S}_L^{^{(-)}}, \\
	  \tfrac{k-i\kappa_0}{k+i\kappa_0}, & (x,t) \in \mathcal{S}_L^{^{(+)}}, \\
	  f^{^{(-)}}(k;x,t), & (x,t) \in \mathcal{S}_M^{^{(-)}} \cup \mathcal{S}_R^{^{(-)}},  \\
	  f^{^{(+)}}(k;x,t), & (x,t) \in \mathcal{S}_M^{^{(+)}} \cup \mathcal{S}_R^{^{(+)}}, 
	\end{dcases}
\end{equation}
where
\begin{gather}
\begin{aligned}
	&f^{^{(-)}}(k;x,t) = \exp \left\{ \frac{R(k)}{2\pi i}\left[  
	   \int_{\Sigma_{1,\alpha}} \frac{ -\ln r(s)  }{R_+(s) (s-k)}  \, \d s 
	   + \int_{\Sigma_{2,\alpha}} \frac{ \ln \overline{r( \bar s)}  }{R_+(s) (s-k)} \, \d s  \right.\right. \\
	   &\left. \left. \qquad \qquad \qquad \qquad  
	   + \int_{-i\eta_1}^{i \eta_1} \frac{ i\Delta^{(-)}}{R(s)(s-z)}\, \d s
	\right] \right\},
\\	
	&f^{^{(+)}}(k;x,t) = \left( \frac{ k - i\kappa_0}{k + i \kappa_0} \right) \exp \left\{ \frac{R(k)}{2\pi i}\left[  
	   \int_{\Sigma_{1,\alpha}} \frac{ -\ln r(s)  -2 \ln \left( \frac{ i\kappa_0-s}{ i \kappa_0+s} \right) }{R_+(s) (s-k)} \, \d s 
	     \right. \right. \\ 
	     &\left. \left.
	 \qquad \qquad \qquad \qquad  + \int_{\Sigma_{2,\alpha}} \frac{ \ln \overline{r( \bar s)} -2\ln \left( \frac{ i\kappa_0- s }{ i \kappa_0+s} \right) }{R_+(s) (s-k)}  \, \d s 
	   + \int_{-i\eta_1}^{i \eta_1} \frac{ i\Delta^{(+)}}{R(s)(s-z)} \, \d s
	\right] \right\},
	\end{aligned}
\end{gather}
\end{subequations}
and 
\begin{equation}\label{Delta}
	\Delta =  
	\begin{dcases} 
	0, & (x,t) \in \mathcal{S}_L ,\\
	\Delta^{^{(-)}} = -i \left( \int_{0}^{i\eta_1} \frac{ \d s }{R(s)}  \right)^{-1} \left( \int_{\Sigma_{1,\alpha} } \frac{ \log r(s)}{R_+(s)} \, \d s \right),  
	& (x,t) \in \mathcal{S}_M^{^{(-)}} \cup \mathcal{S}_R^{^{(-)}} ,\\
	\Delta^{^{(+)}} =  -i \left( \int_{0}^{i\eta_1} \frac{ \d s }{R(s)}  \right)^{-1} \left( \int_{\Sigma_{1,\alpha} } \frac{ \log r(s)+ 2 \log \left( \frac{i\kappa_0-s }{i \kappa_0+s}\right) }{R_+(s)} \, \d s \right),
	& (x,t) \in \mathcal{S}_M^{^{(+)}} \cup \mathcal{S}_R^{^{(+)}} .
	\end{dcases}
\end{equation}
The following proposition is an immediate consequence of the Sokhotski-Plemelj formula.
\begin{prop}\label{prop:f}
	For any $(x,t) \in \mathcal{S}_M \cup \mathcal{S}_R$ the scalar function $f(k;x,t)$ defined by \eqref{f} satisfies the following properties:
	\begin{enumerate}[label=(\arabic*)]
	  \item\label{f1} $f(k;x,t)$ is meromorphic for $k \in \C \setminus [-i\alpha,i\alpha]$.
	  \item\label{f2} $f(k;x,t)$ satisfies the jump relations
	  \begin{gather}
	  	f_+(k;x,t) f_-(k;x,t) = 
		\begin{dcases}
			\tfrac{1}{r(k)}, & k \in \Sigma_{1,\alpha}, \\
			\overline{r(\bar k)}, & k \in \Sigma_{2,\alpha},
		\end{dcases} \\
		\frac{f_+(k;x,t)}{f_-(k;x,t)} = e^{i\Delta},  \qquad k \in (-i\eta_1, i\eta_1),
	  \end{gather}
	  where $\Delta = \Delta(x,t)$ given by \eqref{Delta} is real-valued.
	  \item\label{f3} $f(k;x,t) = 1+ \bigo{k^{-1}}$ as $k \to \infty$. 
	  \item\label{f4} $f(k;x,t)$ is bounded and nonzero in $k \in \C \setminus [-i\alpha,i\alpha]$. 
	  \item\label{f5} For $(x,t) \in \mathcal{S}_L^{^{(-)}} \cup \mathcal{S}_R^{^{(-)}}$, $f(k;x,t)$ is holomorphic and nonzero in $k \in \C \setminus [-i\alpha,i\alpha]$. 
	  For  $(x,t) \in \mathcal{S}_L^{^{(+)}} \cup \mathcal{S}_R^{^{(+)}}$, $f(k;x,t)$ has a simple zero at $k=i\kappa_0$, a simple pole at $-i\kappa_0$, and is otherwise holomorphic and nonzero for $k \in \C \setminus [-i\alpha,i\alpha]$.
	 {\color{black}
	 \item\label{f5.5} The behavior of $f(k;x,t)$ at each endpoint of the jump contour is determined by the local behavior of $r(k)$. As $k \to \partial \Sigma_{1,\alpha}$, $f(k;x,t)^2 r(k)$  is bounded and nonzero; the same is true of $f(k;x,t)^2 \overline{r(\bar k)}^{-1}$ as $k \to \partial \Sigma_{2,\alpha}$.  
	 }
	  \item\label{f6} For all $k$, $f$ satisfies the symmetries $\overline{ f(\bar k;x,t) } = f(k)^{-1}$ and $f(-k;x,t) = f(k;x,t)^{-1}$. In particular $f(k;x,t)$ is real-valued for any $k \in i \R\setminus [-i\alpha,i \alpha]$. 

	\end{enumerate}
\end{prop}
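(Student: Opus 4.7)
The strategy is to analyse $f(k;x,t)$ by separating it into three pieces: the overall factor $R(k)/(2\pi i)$ in the exponent, the Cauchy-type integrals over $\Sigma_{1,\alpha}\cup\Sigma_{2,\alpha}\cup[-i\eta_1,i\eta_1]$, and the rational prefactor $(k-i\kappa_0)/(k+i\kappa_0)$ in the $f^{^{(+)}}$ case. Each of the claimed properties is then read off from the analytic behaviour of one of these three pieces.

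For \ref{f1}, the exponent is the product of $R(k)$, which is meromorphic off $[-i\alpha,i\alpha]$, with a sum of Cauchy integrals each holomorphic off its contour of integration; the rational prefactor in $f^{^{(+)}}$ only introduces the simple zero/pole at $\mp i\kappa_0$, giving \ref{f5}. For \ref{f2}, I would apply the Sokhotski--Plemelj formula on each of $\Sigma_{1,\alpha}$, $\Sigma_{2,\alpha}$ and $(-i\eta_1,i\eta_1)$, using that $R_+(k)=-R_-(k)$ on $\Sigma_{1,\alpha}\cup\Sigma_{2,\alpha}$ while $R$ is continuous across $(-i\eta_1,i\eta_1)$: on $\Sigma_{1,\alpha}$ this gives $f_+ f_- = \exp(-\log r(k))$ and similarly on $\Sigma_{2,\alpha}$, while on $(-i\eta_1,i\eta_1)$ only the piece with $i\Delta/R(s)$ jumps, producing $f_+/f_- = e^{i\Delta}$. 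The prefactor in $f^{^{(+)}}$ contributes matching $\log((i\kappa_0-s)/(i\kappa_0+s))$ terms inside the integrals, ensuring that this prefactor is cancelled on the contour. The reality of $\Delta$ in the two cases follows from its explicit formula in \eqref{Delta} together with the reality of $r$ and the purely imaginary nature of the relevant integrals.

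For \ref{f3}, I would expand $R(k)/(s-k) = -k - s - (s^2 + \tfrac12(\eta_1^2+\alpha^2))/k + O(k^{-2})$ as $k\to\infty$ inside the integrals. The $O(k)$ term vanishes precisely when
\begin{equation*}
   \int_{\Sigma_{1,\alpha}} \!\!\frac{-\log r(s)}{R_+(s)}\, ds + \int_{\Sigma_{2,\alpha}}\!\! \frac{\log r(\bar s)}{R_+(s)}\, ds + \int_{-i\eta_1}^{i\eta_1} \frac{i\Delta}{R(s)}\, ds = 0,
\end{equation*}
which, using the symmetry $R_+(-\bar s)=-\overline{R_+(s)}$ to fold the $\Sigma_{2,\alpha}$ integral back onto $\Sigma_{1,\alpha}$, is exactly the defining relation \eqref{Delta} for $\Delta^{(\pm)}$ (with the extra logarithmic factors in the $^{(+)}$ case accounting for the $(k-i\kappa_0)/(k+i\kappa_0)$ prefactor, which itself tends to $1$). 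The $O(1)$ term then vanishes by the $s\leftrightarrow -s$ symmetry of the integrand, yielding $f=1+O(k^{-1})$. Properties \ref{f4} and \ref{f5} are immediate from \ref{f1}, \ref{f3}, and the fact that $f$ is the exponential of a bounded quantity away from the contours, multiplied only by the explicit meromorphic prefactor when present. Property \ref{f5.5} follows from a local analysis near each endpoint $\pm i\eta_1, \pm i\alpha, \pm i\eta_2$: since $R(k)$ vanishes like a square root while the Cauchy integral has a logarithmic singularity, the product $R(k)\cdot(\text{log singularity})$ produces exactly the inverse of $\sqrt{r(k)}$ on $\Sigma_{1,\alpha}$, so $f^2 r$ is bounded and nonzero. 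The symmetries in \ref{f6} reduce, after the substitutions $k\mapsto\bar k$ and $k\mapsto -k$, to the reality of $r$ and to the symmetry $R(-k)=R(k)$, $\overline{R(\bar k)}=R(k)$, together with the observation that the integration contours $\Sigma_{1,\alpha}\cup\Sigma_{2,\alpha}$ are invariant under both involutions.

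The main technical subtlety is \ref{f5.5}, since the statement needs to hold uniformly in the behavior of $r$ near the endpoints (including any integrable singular behaviour allowed by Assumption~\ref{Assumption:r2}); I would handle it by isolating the contribution of the singular part of $\log r$ near each endpoint and matching the resulting power with the square-root vanishing of $R$. All other items are direct, contour-by-contour applications of Plemelj and residue counting.
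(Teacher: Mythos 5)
Your proposal is correct and follows the same route as the paper, which simply asserts that the proposition "is an immediate consequence of the Sokhotski--Plemelj formula"; your contour-by-contour application of Plemelj, the expansion at infinity identifying the vanishing of the $\mathcal{O}(k)$ coefficient with the defining equation for $\Delta^{(\pm)}$, and the residue counting for the prefactor are exactly the details the paper leaves implicit. One minor imprecision: near a band endpoint the Cauchy transform of a density containing the factor $1/R_+(s)$ has an inverse-square-root (not logarithmic) singularity, which is what cancels the square-root vanishing of $R(k)$ to give the finite nonzero limit $f(p)^2 r(p)$ in item (6) --- but this does not affect your conclusion.
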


\begin{remark}
	Locally, the functions $f$ and $\Delta$ depend on $x$ and $t$ through the slowly evolving parameter $x/t$, i.e. $f(k;x,t) = f(k; x/t)$ and $\Delta(x,t) = \Delta(x/t)$. Globally they depend on $x$ and $t$ independently, as their values change along the boundaries between regions $\mathcal{S}_{j}^{^{(\pm)}},\ j \in \{L,M,R\}$. 
\end{remark}

Collecting the properties of the $g$-function \ref{g1}--\ref{g5} and of the function $f$ \ref{f1}--\ref{f6}, the resulting RH problem for ${\bm T}(k;x,t)$ is given by
\begin{RHP}\label{rhp:T}
	Find a $2\times2$ matrix-valued function $\bm{T}(k;x,t)$ with the following properties
	\begin{enumerate}[label=\arabic*.]
	\item $\bm{T}(k;x,t)$ is meromorphic for $k \in \C \setminus [-i\eta_2, i\eta_2]$.
		\item For $k \in i [-\eta_2, \eta_2]$, the boundary values $\bm{T}_\pm(k; x,t) = \bm{T}(k \mp 0; x,t)$ satisfy the jump relation
		\begin{gather}\label{Tjumps}
		  \bm{T}_+(k) = \bm{T}_-(k) \bm{J}_T(k; x,t), \\
		\bm{J}_T(k; x,t) =   \begin{dcases} 
		    \begin{bmatrix} 1 & 0 \\ i r(k) f(k)^2 e^{-2i \varphi(k;x,t)} & 1 \end{bmatrix}, 
		    & k \in  (i\alpha,i\eta_2), \\
		    \begin{bmatrix} r(k)^{-1} f_-(k)^{-2} e^{2i \varphi_-(k; x,t)} & 0 \\ i  & r(k)^{-1} f_+(k)^{-2} e^{2i \varphi_+(k; x,t)} \end{bmatrix}, 
		    & k \in\Sigma_{1,\alpha}, \\
		    e^{i({\Omega}+ \Delta)\sigma_3}, & k \in [-i\eta_1, i\eta_1], \\
		    \begin{bmatrix} \overline{r(\bar k)}^{-1} f_+(k)^2 e^{-2i \varphi_+(k; x,t)} &  i   \\ 0 & \overline{r(\bar k)}^{-1} f_-(k)^2 e^{-2i \varphi_-(k; x,t)}  \end{bmatrix}, 
		    & k \in \Sigma_{2,\alpha}, \\
		    \begin{bmatrix} 1 &  i \overline{r(\bar k)} f(k)^{-2} e^{2i \varphi(k; x,t)} \\ 0 & 1 \end{bmatrix} & k \in ( -i\eta_2, -i\alpha). \\
		  \end{dcases}
		\end{gather}
		\item $\bm{T} (k; x,t)$ has simple poles at $k = \pm i \kappa_0$, with $\kappa_0 > \eta_2$, satisfying
		\begin{itemize}
		 \item For $(x,t) \in \mathcal{S}^{^{(-)}}$:
		  \begin{subequations}\label{Tres}
		  \begin{equation}\label{Tres-}
		  \begin{aligned}
		    \Res_{k = i \kappa_0} \bm{T}(k;x,t) &= \lim_{k \to i \kappa_0} \bm{T}(k;x,t) 
		    \begin{bmatrix} 0 & 0 \\ -i \chi f(i\kappa_0;x,t)^2 e^{-2i \varphi(k;x,t)} & 0 \end{bmatrix}, \\
		    \Res_{k = -i \kappa_0} \bm{T}(k;x,t) &= \lim_{k \to -i \kappa_0} \bm{T}(k;x,t) 
		    \begin{bmatrix} 0 & -i \chi f(i\kappa_0;x,t)^{2} e^{2i \varphi(k;x,t)} \\ 0 & 0 \end{bmatrix}.
		  \end{aligned}
		  \end{equation}
		\item For $(x,t) \in \mathcal{S}^{^{(+)}}$:
		   \begin{equation}\label{Tres+}
		   \begin{aligned}
		     \Res_{k = i \kappa_0} \bm{T}(k;x,t) &= \lim_{k \to i \kappa_0} \bm{T}(k;x,t) 
		     \begin{bmatrix} 0  & i \chi^{-1} f'(i\kappa_0;x,t)^{-2} e^{2i \varphi(k;x,t)} \\ 0 & 0 \end{bmatrix}, \\
		      \Res_{k = -i \kappa_0} \bm{T}(k;x,t) &= \lim_{k \to -i \kappa_0} \bm{T}(k;x,t) 
		     \begin{bmatrix} 0 &  0 \\ i \chi^{-1} f'(i\kappa_0;x,t)^{-2} e^{-2i \varphi(k;x,t)}  & 0 \end{bmatrix}.
		  \end{aligned}
		  \end{equation}
		  \end{subequations}
		\end{itemize}
	\end{enumerate}
\end{RHP}

\subsection{Reduction to model problems}

The transformation  $\bm{X}(k;x,t) \mapsto \bm{T}(k;x,t)$ results in a RH problem which has jumps that are exponentially near identity on the intervals $ (i\alpha, i\eta_2)$ and $(-i\eta_2, -i\alpha)$, oscillatory jumps on the bands $\Sigma_{1,\alpha} \cup \Sigma_{2,\alpha}$ and a constant (in $k$) diagonal jump on the gap interval $(-i\eta_1 , i\eta_1)$ between the two bands. Whenever the bands $\Sigma_{1,\alpha} \cup \Sigma_{2,\alpha}$ are non-empty intervals we introduce one further transformation which `opens lenses'  away from the $\Sigma_{1,\alpha}$ and $\Sigma_{2,\alpha}$, which have the effect of deforming the oscillatory jumps onto new contours where they are exponentially decaying. 

\subsubsection{Quiescent background }
For $(x,t) \in \mathcal{S}_L$ (i.e. $v:= \frac{x}{t}<4\eta_1^2$), we have  that $g \equiv 0$ in this sector of space-time (see Remark~\ref{rem:g.left}). Effectively, $\alpha \equiv \eta_1$ here, so $\Sigma_{1,\alpha} = \Sigma_{2,\alpha} = \emptyset$ and $\Omega = \Delta = 0$. The remaining jumps satisfy the following estimate:

\begin{prop}\label{prop:left.estimate}
For $(x,t) \in \mathcal{S}_L$ the jump matrix $\bm{J}_T(k)$ satisfy the estimate
\begin{equation}
	\left\| \bm{J}_T(k; x,t) - \bm I \right\| = \bigo{e^{-2 t \eta_1 (4 \eta_1^2 - v) } } ,  \qquad (x,t) \in \mathcal{S}_L, \quad k \in \Sigma_1 \cup \Sigma_2,
\end{equation}
where the implicit constant is bounded and independent of $(x,t) \in \mathcal{S}_L$.
\end{prop}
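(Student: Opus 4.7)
The proof is a direct computation, since in $\mathcal{S}_L$ the $g$-function is trivial. My plan is first to note that by the construction in Section~\ref{sec-g.function} (Remark~\ref{rem:g.left}), $g \equiv 0$ and $\alpha \equiv \eta_1$ throughout $\mathcal{S}_L$, so that $\Sigma_{1,\alpha} = \Sigma_{2,\alpha} = \emptyset$, $\Omega = \Delta = 0$, and $\varphi(k;x,t) = \theta(k;x,t) = 4tk^3 + xk$. Consequently, the full jump matrix $\bm{J}_T(k;x,t)$ reduces to the two triangular factors
\[
 \begin{bmatrix} 1 & 0 \\ i r(k) f(k)^2 e^{-2i\theta(k;x,t)} & 1 \end{bmatrix}\ \text{on}\ \Sigma_1,
 \qquad
 \begin{bmatrix} 1 & i r(\bar k) f(k)^{-2} e^{2i\theta(k;x,t)} \\ 0 & 1 \end{bmatrix}\ \text{on}\ \Sigma_2,
\]
with the identity jump on $(-i\eta_1, i\eta_1)$. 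Estimating $\|\bm{J}_T - \bm{I}\|$ therefore reduces to estimating the single off-diagonal entry on each band.

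Second, I would parametrize $\Sigma_1$ by $k = is$, $s \in [\eta_1, \eta_2]$, and compute
\[
    -2i\theta(is;x,t) = -2i\bigl(4t(is)^3 + x(is)\bigr) = 2s\bigl(x - 4ts^2\bigr),
\]
so that $|e^{-2i\theta(is;x,t)}| = e^{2s(x-4ts^2)}$. The hypothesis $(x,t) \in \mathcal{S}_L$ means $\xi := x/t < 4\eta_1^2$, which gives $x - 4ts^2 < 0$ for all $s \geq \eta_1$. The scalar function $s \mapsto 2sx - 8ts^3$ has derivative $2x - 24ts^2$, which at $s = \eta_1$ is bounded above by $8t\eta_1^2 - 24t\eta_1^2 < 0$, so the exponent is strictly decreasing on $[\eta_1, \eta_2]$. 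Its maximum is thus attained at $s = \eta_1$, yielding
\[
    |e^{-2i\theta(is;x,t)}| \leq e^{2\eta_1(x - 4t\eta_1^2)} = e^{-2t\eta_1(4\eta_1^2 - \xi)}.
\]
By the conjugation symmetry $\overline{\theta(k;x,t)} = \theta(\bar k;x,t)$, the identical bound applies to $|e^{2i\theta(k;x,t)}|$ on $\Sigma_2$.

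Finally, I would combine this with uniform bounds for the prefactors. The function $r(k)$ is continuous on the compact set $\Sigma_1$ and hence bounded, while $f(k)^{\pm 2}$ is bounded on $\Sigma_1 \cup \Sigma_2$: in $\mathcal{S}_L^{(-)}$ we have $f \equiv 1$ by \eqref{f}, and in $\mathcal{S}_L^{(+)}$ we have $f(k) = (k-i\kappa_0)/(k+i\kappa_0)$, which is bounded and bounded away from zero on $\Sigma_1 \cup \Sigma_2$ because $\kappa_0 > \eta_2$. Multiplying, the off-diagonal entries of $\bm{J}_T(k;x,t) - \bm{I}$ are $\bigo{e^{-2t\eta_1(4\eta_1^2 - \xi)}}$ uniformly in $k \in \Sigma_1 \cup \Sigma_2$ and in $(x,t) \in \mathcal{S}_L$, which is the claimed estimate. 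There is no real obstacle here; the one subtle point is verifying that the maximum of the cubic exponent on the compact band is attained at the left endpoint $s = \eta_1$, which the derivative computation above handles cleanly.
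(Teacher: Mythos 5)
Your proof is correct and follows exactly the route the paper intends: the paper states Proposition~\ref{prop:left.estimate} without writing out the details, relying on the preceding observation that $g\equiv 0$, $\alpha\equiv\eta_1$, $\Omega=\Delta=0$ in $\mathcal{S}_L$, so the only content is the monotonicity of the exponent $2s(x-4ts^2)$ on $[\eta_1,\eta_2]$ and the boundedness of $r$ and $f^{\pm2}$, both of which you verify cleanly (including the correct identification $\xi=x/t$ and the symmetry argument on $\Sigma_2$).
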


From the uniform estimate above and standard estimates for Cauchy singular integrals, we can  conclude that the solution $\bm{T}(k;x,t)$ of the RH problem~\ref{rhp:T} takes the form 
\begin{equation}
	\bm{T}(k; x,t) = \left[ \bm I + \bigo{e^{-2 t \eta_1 (4 \eta_1^2 - v) } } \right] \bm{X}_\mathrm{sol}(k; x,t),  \quad (x,t) \in \mathcal{S}_L,
\end{equation}
and $\bm{X}_\mathrm{sol}(k; x,t)$ is the solution of the RH problem~\ref{rhp:X} with $r \equiv 0$ and $N=1$, given by \eqref{Xsol}. 
As a consequence, the solution of the original RH problem~\ref{rhp:X} satisfies
\begin{equation}
	\bm{X}(k; x,t) = \left[ \bm I + \bigo{e^{-2 t \eta_1 (4 \eta_1^2 - v) } } \right] \bm{X}_\mathrm{sol}(k; x,t),  \quad \text{as }t \to \infty \text{ with } (x,t) \in \mathcal{S}_L\ .
\end{equation}

\subsubsection{Opening lenses in the support of the soliton gas for $x>4 \eta_1^2 t$: $r(k)$ bounded and nonzero }
For $(x,t) \in \mathcal{S}_M \cup \mathcal{S}_R$ the bands $\Sigma_{1,\alpha}$ and $\Sigma_{2,\alpha}$ are non-empty. 
The oscillatory jumps $\bm{J}_T(k; x,t)$ for $k \in \Sigma_{1,\alpha} \cup \Sigma_{2,\alpha}$ admit the following factorizations:
\begin{multline}\label{Tjump.factor.1}
	\bm{J}_T(k; x,t)   
	= \begin{bmatrix} r(k)^{-1} f_-(k)^{-2} e^{2i \varphi_-(k; x,t)} & 0 \\ i  & r(k)^{-1} f_+(k)^{-2} e^{2i \varphi_+(k; x,t)} \end{bmatrix}  \\
	\qquad = \begin{bmatrix} 1 & -i r(k)^{-1} f_-(k)^{-2} e^{2i \varphi_-(k; x,t)} \\ 0 & 1 \end{bmatrix} 
	   \begin{bmatrix} 0 & i \\ i & 0 \end{bmatrix}
	   \begin{bmatrix} 1 &  -i r(k)^{-1} f_+(k)^{-2} e^{2i \varphi_+(k; x,t)}  \\ 0 & 1 \end{bmatrix}, \quad k \in \Sigma_{1,\alpha},
\end{multline}
\begin{multline}\label{Tjump.factor.2}
	\bm{J}_T(k,x,t)   
	=  \begin{bmatrix} \overline{r(\bar k)}^{-1} f_+(k)^2 e^{-2i \varphi_+(k; x,t)} &  i   \\ 0 & \overline{r(\bar k)}^{-1} f_-(k)^2 e^{-2i \varphi_-(k; x,t)}  \end{bmatrix} \\
	\qquad = \begin{bmatrix} 1 & 0 \\  -i \overline{r(\bar k)}^{-1} f_-(k)^2 e^{-2i \varphi_-(k; x,t)} & 1 \end{bmatrix} 
	   \begin{bmatrix} 0 & i \\ i & 0 \end{bmatrix}
	   \begin{bmatrix} 1 & 0   \\ -i \overline{r(\bar k)}^{-1} f_+(k)^2 e^{-2i \varphi_+(k; x,t)} & 1 \end{bmatrix}, \quad k \in \Sigma_{2,\alpha}.
\end{multline}

\begin{figure}%[ht!]
% https://tex.stackexchange.com/questions/66216/draw-arc-in-tikz-when-center-of-circle-is-specified
\begin{center}
\begin{tikzpicture}
\begin{scope}[local bounding box=scope1]
% definition of parameters
\def\d{1}
\def\h{2};
\coordinate (ieta1) at (0, \d);
\coordinate (ieta2) at ($(ieta1)+(0,\h)$);
\def\angle{40};
\def\xo{ -cot(\angle) * \h/2  };
\def\radius{ \h  / 2 /  sin(\angle) };

% right part of the picture
\draw[dashed] (ieta1) to (ieta2);
\fill[cyan, fill opacity = 0.4, domain=-\angle:\angle] plot ({ \xo + (\radius)*cos(\x)}, {\d +\h/2 + \radius * sin(\x)});
\draw[thick, domain=-\angle:\angle] plot ({\xo + \radius*cos(\x)}, {\d+\h/2 + \radius * sin(\x)});
\draw [very thick, domain = -90:-90+\angle] plot({0.2*cos(\x)},{ \d+\h + 0.2*sin(\x)});
\draw [very thick, domain = 90:90-\angle] plot({0.2*cos(\x)}, {\d + 0.2*sin(\x)});
\node [label = { [label distance = -2.5, above, xshift = -0.2] \tiny$i\eta_2$}] at (0, \d+\h) {};
\node [label = {[below, xshift = 6]\tiny$\beta$}] at (0.05,\d+\h){};
\node [label = {[below]\tiny $i\eta_1$}] at (0,\d){};
\node [label = {[below, yshift = 7, xshift = 6]\tiny$\beta$}] at (0.05,\d){};

\node at ({\radius+\xo+0.4}, 2) { \tiny $\Omega_{\beta,r}$};
\draw[->] ({\radius+\xo+0.1}, {\d+\h/2}) -- ({\radius+\xo-0.15}, {\d+\h/2});

%left part of the picture
\fill[cyan, fill opacity = 0.4, domain=180-\angle:180+\angle] plot ({ -\xo + (\radius)*cos(\x)}, {\d +\h/2 + \radius * sin(\x)});
\draw[thick, domain=180-\angle:180+\angle] plot ({-\xo + \radius*cos(\x)}, {\d+\h/2 + \radius * sin(\x)});

\node at ({-\radius-\xo-0.4}, 2){\tiny $\Omega_{\beta,l}$};
\draw [->] ({-\radius-\xo-0.1}, {\d+\h/2}) -- ({-\radius-\xo+0.15}, {\d+\h/2});
\path []  (0,0.4) -- (.7,.4) node[midway,below,font=\small]{};

\node at (0,-1) {$(a)$ $r$ bounded, nonzero at $\eta_1, \eta_2$};

\end{scope}
\begin{scope}[shift={($(scope1)+(7cm, -1.03cm)$)}]
% definition of parameters
\def\d{1};
\def\h{2};
\coordinate (ieta1) at (0, \d);
\coordinate (ieta2) at ($(ieta1)+(0,\h)$);
\def\rad{0.5};

\draw [dashed] ($(ieta1)-(0,\rad)$) --($(ieta2)+(0,\rad)$);
\draw [very thick, fill=cyan, fill opacity = 0.4] ($(ieta1)+(\rad,0)$) -- ($(ieta2)+(\rad,0)$) arc (0:180:\rad) -- ($(ieta1)-(\rad,0)$) arc(180:360:\rad); 
\draw [very thick] (ieta1) -- ($(ieta1)-(0,\rad)$);
\draw [very thick] (ieta2) -- ($(ieta2)+(0,\rad)$);  
\draw[|-|] ($ (ieta1) -(0,1.5*\rad)$) --($ (ieta1) -(\rad,1.5*\rad)$) node[midway,below,font=\tiny]{$h$};

\node at (ieta1) [circle,fill,inner sep=1pt, label={[right, xshift=-.1,yshift=-.9]:\tiny$i\eta_1$}] {};
\node at (ieta2) [circle,fill,inner sep=1pt, label={[right, xshift=-.1,yshift=-.9]:\tiny$i\eta_2$}] {};

\draw[<-] (\rad/2,\d+\h/2)--(1.5*\rad,\d+\h/2) node[right,] {\tiny$\Omega_{h,r}$};
\draw[<-] (-\rad/2,\d+\h/2)--(-1.5*\rad,\d+\h/2) node[left,] {\tiny$\Omega_{h,l}$};

\node at (0,-1) {$(b)$ $r(k) (k-i \eta_k)^{\pm1/2} = \bigo{1}$};
\end{scope}
\end{tikzpicture}

\end{center}
\caption{The domain of the analytic extension of $r(k)$ away from $\Sigma_1$ depends on the behavior of $r(k)$ near the endpoints of $\Sigma_1$.    }
\label{fig:r}
\end{figure}
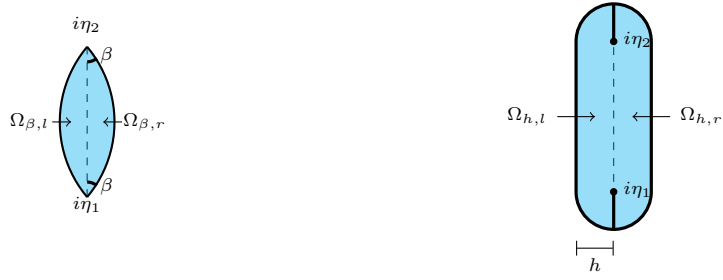

To deform these jumps off the imaginary axis we make the following assumption on the function $r(k):$
\begin{assumption}\label{Assumption:r}
Assume that the function $r(k)$ admits an analytic continuation off the imaginary axis:
\begin{align}
	&\hat{r}(k) \text{ analytic in } \Omega_{\beta,r}\cup \Omega_{\beta,l},
	&&\hat{r}(k)\! \big\vert_{\scriptstyle k \in [i\eta_1, i\eta_2]} = r(k),
\end{align}
where $\Omega_{\beta,r}$ is defined as
\begin{gather}\label{Omega_beta,rl}
\Omega_{\beta,r} := \left\{k \in \mathbb C \, :\, \ \Re k\geq 0 \text{ and } \left|\Im k-\frac{\eta_1+\eta_2}{2}\right|\leq \sqrt{\frac{(\eta_2-\eta_1)^2}{4}-\frac{\cos\beta}{\sin\beta}(\eta_2-\eta_1)\Re k - (\Re k)^2} \right\},
\end{gather}
with some $0<\beta<\frac{\pi}{2}$ and where $\Omega_{\beta, l}$ is defined by symmetry, $\Omega_{\beta, l}=\left\{k: -\ol k\in\Omega_{\beta,r}\right\}$ (see \figurename~\ref{fig:r}).
\end{assumption}

Moreover, this extension should preserve the symmetries of the RH problem, i.e, $\overline{\hat{r}(-\ol k)} =\hat{r}(k)$. We therefore open lens by introducing a pair of contours, together labeled $\mathcal{C}_1$, both starting at $-i\eta_1$ and ending at $i\alpha$ to the left and right of $\Sigma_{1,\alpha}$ respectively, such that $\mathcal{C}_1$ lies entirely in the domain $\Omega_{\beta,r}\cup \Omega_{\beta,l}$ where $\hat{r}(k)$ is analytic. We also introduce lens contours $\mathcal{C}_2$ on $\Sigma_{2,\alpha}$ by symmetry. See \figurename~\ref{fig:lenses}.

Using the factorizations in \eqref{Tjump.factor.1}--\eqref{Tjump.factor.2}, we then define
\begin{equation}\label{S}
	\bm{S}(k; x,t) = 
	\begin{dcases}
	\bm{T}(k;x,t) \begin{bmatrix} 1 & -i \hat{r}(k)^{-1} f(k)^{-2} e^{2i \varphi(k;x,t)} \\ 0 & 1 \end{bmatrix}, 
	& k\in \text{lens right of $\Sigma_{1,\alpha}$}, \\ 
	\bm{T}(k;x,t) \begin{bmatrix} 1 & i \hat{r}(k)^{-1} f(k)^{-2} e^{2i \varphi(k;x,t)} \\ 0 & 1 \end{bmatrix},
	& k\in \text{lens left of $\Sigma_{1,\alpha}$}, \\ 
	\bm{T}(k;x,t) \begin{bmatrix} 1 & 0 \\  \frac{-i f(k)^2 e^{-2i \varphi(k;x,t)}}{\overline{\hat r(\ol k)}} & 1 \end{bmatrix},
	& k\in \text{lens right of $\Sigma_{2,\alpha}$}, \\ 
	\bm{T}(k;x,t) \begin{bmatrix} 1 & 0 \\  \frac{i f(k)^2 e^{-2i \varphi(k;x,t)}}{\overline{\hat r(\ol k)}} & 1 \end{bmatrix},
	& k\in \text{lens left of $\Sigma_{2,\alpha}$}, \\
	\bm{T}(k; x,t), & \text{elsewhere .}
	\end{dcases}
\end{equation}
It follows easily that $\bm{S}$ satisfies the following RH problem 
\begin{RHP}\label{rhp:S}
	Find a $2\times2$ matrix-valued function $\bm{S}(k; x,t)$ such that
	\begin{enumerate}[label=\arabic*.]
		\item $\bm{S}(k; x,t)$ is meromorphic for $k \in\C \setminus \Gamma_S$, $\Gamma_S =  [-i\eta_2,i\eta_2] \cup \mathcal{C}_1 \cup \mathcal{C}_2$.
		\item $\bm{S}(k; x,t) = \bm I + \bigo{k^{-1}}$ as $k \to \infty$. 
		\item For $k \in \Gamma_S$ the boundary values $\bm{S}_\pm(k;x,t)$ satisfy the jump relation $\bm{S}_+(k; x,t) = \bm{S}_-(k; x,t) \bm{J}_S(k; x,t)$, where the values of $\bm{J}_S(k; x,t)$ are shown for $k \in \Gamma_S$ in \figurename~\ref{fig:lenses}.
		\item $\bm{S}(k; x,t)$ has simple poles at $k = \pm i \kappa_0$ and no other poles. The poles satisfy the same residue conditions \eqref{Tres} as $\bm{T}(k; x,t)$.
	\end{enumerate}
\end{RHP}

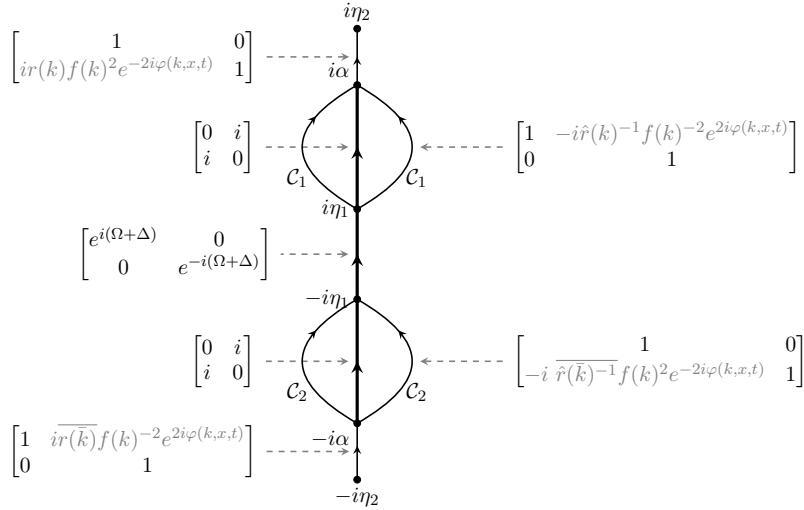
\begin{figure}
\centering
\scalebox{.75}{
\begin{tikzpicture}[>=stealth]op
\path (0,0) coordinate (O);

\coordinate (eta1) at (0,.8);    \coordinate (eta1c) at ($-1*(eta1)$);
\coordinate (eta2) at (0,4);       \coordinate (eta2c) at ($-1*(eta2)$);
\coordinate (alpha) at (0,3);   \coordinate (alphac) at ($-1*(alpha)$);

%--Draw the band and gap contours and label the jumps on each contour
\draw[->- = .5,thick] (alpha)--(eta2) 
  node[pos=0.5, pin={[pin distance=15mm, pin edge={<-,dashed,thick}]180: $\begin{bmatrix} 1 & 0 \\ {\color{gray}i r(k) f(k)^2 e^{-2i \varphi(k,x,t)} } & 1 \end{bmatrix} $}] {}; 

\draw[->- = .5,ultra thick] (eta1)--(alpha) 
  node[pos=0.5, pin={[pin distance=15mm, pin edge={<-,dashed,thick}]180: $\begin{bmatrix} 0 & i \\ i & 0 \end{bmatrix}$}] {};  

\draw[->- = .5,ultra thick] (alphac)--(eta1c) 
  node[pos=0.5, pin={[pin distance=15mm, pin edge={<-,dashed,thick}]180: $\begin{bmatrix} 0 & i \\ i & 0 \end{bmatrix}$}] {};

\draw[->- = .5, ultra thick] (eta1c)--(eta1) 
  node[pos=0.5, pin={[pin distance=12mm, pin edge={<-,dashed,thick}]180: $\begin{bmatrix} e^{i({\Omega}+\Delta)} & 0 \\ 0 & e^{-i({\Omega}+\Delta)} \end{bmatrix} $}] {}; 

\draw[->- = .6,thick] (eta2c)--(alphac)
   node[pos=0.5, pin={[pin distance=15mm, pin edge={<-,dashed,thick}]180: $\begin{bmatrix} 1 &  {\color{gray}i \overline{r(\bar k)} f(k)^{-2} e^{2i \varphi(k,x,t)}} \\ 0 & 1 \end{bmatrix}$}] {};

%--Draw the lens contours and the resulting jumps on them
\draw[->- = .7,thick] (eta1) .. controls + (30:1.5cm) and + (-30:1.5cm) .. (alpha) 
  node[pos=.25, right] {$\mathcal{C}_1$}
  node[pos=0.5, pin={[pin distance=15mm, pin edge={<-,dashed,thick}]0: $ \begin{bmatrix} 1 & {\color{gray} -i \hat{r}(k)^{-1} f(k)^{-2} e^{2i \varphi(k,x,t)} } \\ 0 & 1 \end{bmatrix}$ }] {};  
\draw[->- = .7,thick] (eta1) .. controls + (150:1.5cm) and + (-150:1.5cm) .. (alpha)
  node[pos=.25, left] {$\mathcal{C}_1$};

\draw[->- = .7,thick] (alphac) .. controls + (30:1.5cm) and + (-30:1.5cm) .. (eta1c) 
  node[pos=.25, right] {$\mathcal{C}_2$}
  node[pos=0.5, pin={[pin distance=15mm, pin edge={<-,dashed,thick}]0: $ \begin{bmatrix} 1 & 0   \\ {\color{gray}-i \ \overline{\hat{r}(\bar k)^{-1}} f(k)^2 e^{-2i \varphi(k,x,t)} } & 1 \end{bmatrix}$ }] {}; 
\draw[->- = .7,thick] (alphac) .. controls + (150:1.5cm) and + (-150:1.5cm) .. (eta1c) 
  node[pos=.25, left] {$\mathcal{C}_2$};

%--Label each of the band endpoints
\draw[fill] (eta2) circle [radius=0.06] node[above] {$i\eta_2$};
\draw[fill] (eta1) circle [radius=0.06] node[left] {$i\eta_1$};
\draw[fill] (alpha) circle [radius=0.06] node[above left] {$i\alpha$};
\draw[fill] (eta2c) circle [radius=0.06] node[below] {$-i\eta_2$};
\draw[fill] (eta1c) circle [radius=0.06] node[left] {$-i\eta_1$};
\draw[fill] (alphac) circle [radius=0.06] node[below left] {$-i\alpha$};

\end{tikzpicture}
}
\caption{ The system of contours $\Gamma_S$ defining the lens opening transformation $\bm{T} \mapsto \bm{S}$ and the resulting jump matrix $\bm{J}_S$ on these contours. The entries shown in gray in the jump matrices 
are all exponentially small.}
\label{fig:lenses}
\end{figure}

We finally need the following lemma, which will guarantee that the off-diagonal entries in the jumps along the lenses $\mathcal C_1 \cup \mathcal C_2$ are exponentially small:
\begin{lemma}\label{lem:lenses}
For any $(x,t)$ with $x> 4\eta_1^2 t$ the following inequalities are satisfied
\begin{equation}\label{lens inequalities}
  \begin{aligned}
     \Im \varphi(k;x,t) < -c t,& \quad&&k \in K \text{ a compact subset of } (i\alpha, i\eta_2], \\
     \Im \varphi(k;x,t) > c t,&  \quad &&k \in \hat K \text{ a compact subset of } \mathcal{C}_1 \setminus \{ i \eta_1, i \alpha \},   
  \end{aligned}
\end{equation}
for some constant $c\in \R_+$. By the symmetry $\varphi(\bar k; x,t) = \overline{\varphi(k; x,t)}$, the reverse inequalities hold on compact subsets of $\mathcal{C}_2 \setminus \{-i \eta_1, -i\alpha\}$ and $[-i\eta_2, -i\alpha)$. 
\end{lemma}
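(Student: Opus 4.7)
\emph{Plan.} The argument will verify the signature structure for $\Im\varphi$ depicted in \figurename~\ref{fig:phase}. The starting observation is that from representation~\eqref{varphi.3a} together with the Whitham equation~\eqref{Whitham}, we may factor $\varphi(k;x,t) = t\,\Phi(k;v)$, where $v = x/t$ and $\alpha = \alpha(v)$ is smooth in $v$. It therefore suffices to prove strict inequalities $\Im\Phi(k;v) \lessgtr 0$ on the specified contours, with bounds uniform for $v$ in compact subsets of $(4\eta_1^2, v_2)$ (for $\mathcal{S}_M$) or of $(v_2,\infty)$ with $\alpha\equiv\eta_2$ (for $\mathcal{S}_R$); multiplication by $t$ then produces the $ct$ estimate.

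For the first inequality, I would parametrize $k = is$ with $s \in (\alpha, \eta_2]$. Using the sign $R(i\tau) = -\sqrt{(\tau^2-\eta_1^2)(\tau^2-\alpha^2)}$ for $\tau>\alpha$, formula~\eqref{varphi.3b} reduces $\Im\varphi$ to the real integral
\[
	\Im\varphi(is;x,t) = \int_\alpha^s \frac{\sqrt{\tau^2-\alpha^2}\,\bigl[\, x - 12t\tau^2 - 6t(\alpha^2-\eta_1^2)\,\bigr]}{\sqrt{\tau^2-\eta_1^2}}\,\d\tau.
\]
Substituting $x = t\alpha^2 W(\eta_1^2/\alpha^2)$, the bracket at $\tau = \alpha$ equals $t\alpha^2\bigl[W(m) + 6m - 18\bigr]$; using $E(m)\geq 1$ and $(1-m)K(m)\leq \pi/2$ on $[0,1]$, one obtains $\tfrac{4(1-m)K(m)}{E(m)} \leq 2\pi$, hence $W(m)+6m-18 \leq 2\pi + 8m - 16 < 0$ uniformly in $m\in[0,1]$. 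Since the bracket is strictly decreasing in $\tau$, it remains negative on $(\alpha,\eta_2]$, yielding $\Im\varphi \leq -c_0 t$ on any compact subset of $(i\alpha,i\eta_2]$.

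For the second inequality on $\mathcal{C}_1$, I would exploit the band jump $\varphi_+ + \varphi_- = 0$ on $\Sigma_{1,\alpha}$, which forces $\Im\varphi$ to reverse sign across the band. The local expansion at $i\alpha$ is $\varphi(k;x,t) - \varphi(i\alpha;x,t) \sim c_\alpha(v)\,t\,(k-i\alpha)^{3/2}$ with $c_\alpha(v)\neq 0$ (guaranteed by Remark~\ref{Dalpha=0}: the Whitham condition is precisely what makes $(\zeta^2+\alpha^2)$ a factor in the numerator of \eqref{varphi.3a} and produces the cube-root vanishing rather than the generic square-root). Hence $\Im\varphi$ vanishes on three rays from $i\alpha$ and alternates sign among the three sectors; since the sector containing $(i\alpha,i\eta_2]$ carries $\Im\varphi<0$ by the previous step, the two sectors flanking the band carry $\Im\varphi>0$. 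A standard $(k-i\eta_1)^{1/2}$ expansion at $i\eta_1$ analogously produces two wedges of $\{\Im\varphi>0\}$ flanking the band at its other endpoint. I would then choose $\mathcal{C}_1$ to lie inside the connected component of $\{\Im\varphi>0\}$ that joins these wedges, obtaining $\Im\varphi \geq c_0 t$ on compact subsets of $\mathcal{C}_1\setminus\{i\eta_1,i\alpha\}$.

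The hard part is ruling out the existence of zero-level arcs of $\Im\varphi$ obstructing $\mathcal{C}_1$ from lying in a single positivity component. This requires a global critical-point analysis of $\Phi'(k;v) = (k^2+\alpha^2)(12k^2+v-6(\alpha^2-\eta_1^2))/R(k)$: apart from $\pm i\alpha$ (Whitham double zeros) and the endpoints $\pm i\eta_1$, the remaining critical points come from the quadratic factor, whose roots a direct calculation locates on the imaginary axis at $\pm i k_\ast$ with $k_\ast \in (0,\eta_1)$, i.e.\ strictly inside the gap. Nodal arcs emanating from these points are confined by the continuity of $\Im\varphi$ across the gap (the jump $-\Omega$ being real) together with the sign of $\Im\varphi \sim 4t\Im(k^3) + x\Im k$ at infinity, and can be shown not to cross $\mathcal{C}_1$. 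The uniformity of $c_0$ over $(x,t)$ then follows from continuous dependence of $\Phi$ and the contour on $v$ together with compactness.
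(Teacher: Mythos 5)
Your treatment of the first inequality is correct and is essentially a computational unpacking of the paper's argument: the paper factors $\varphi'(k)=12t(k^2+\alpha^2)(k^2+p^2)/R(k)$ with $p\in(0,\eta_1)$ (forced by the vanishing $\mathcal A$-period) and reads off $\varphi'<0$ on $(i\alpha,i\infty)$, whereas you verify the sign of the bracket directly from the Whitham relation via the estimates $E(m)\geq 1$ and $(1-m)K(m)\leq \pi/2$. Both give the same conclusion; the paper's route is shorter and does not need explicit elliptic-integral bounds, but yours is sound.

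The second inequality is where your argument has a genuine gap, and where it diverges from the paper. The paper's entire proof of that part is the observation that $i\varphi'_+(k)=-i\varphi'_-(k)>0$ for \emph{every} $k$ in the open band $\Sigma_{1,\alpha}$: since $\varphi_\pm$ are real on the band, $\varphi'_\pm$ purely imaginary and nonvanishing there forces $\Im\varphi>0$ in a one-sided neighborhood of each interior band point on \emph{both} sides, so a lens taken sufficiently close to the band automatically lies in $\{\Im\varphi>0\}$ and no global nodal-line analysis is needed. You actually compute the ingredient that delivers this — the factorization $\Phi'(k;v)=(k^2+\alpha^2)(12k^2+v-6(\alpha^2-\eta_1^2))/R(k)$ with the roots of the quadratic factor at $\pm ik_\ast$, $k_\ast\in(0,\eta_1)$, i.e.\ strictly inside the gap — but you use it only to catalogue critical points, and instead try to propagate sign information from local expansions at the two endpoints $i\alpha$, $i\eta_1$ across the whole lens by a topological argument about nodal arcs. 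That decisive step is asserted (``can be shown not to cross $\mathcal{C}_1$'') rather than proved, and as written it is the entire content of the claim. Evaluating your own factorization on the band ($k=i\tau$, $\eta_1<\tau<\alpha$, so $k^2+\alpha^2>0$, $k^2+k_\ast^2<0$, $R_\pm(i\tau)$ purely imaginary) closes the gap in one line. A secondary point: the phrase ``$\varphi_++\varphi_-=0$ \dots forces $\Im\varphi$ to reverse sign across the band'' is not correct — $\Im\varphi$ vanishes on the band and is \emph{positive on both sides}; if it reversed sign, the lens opening would fail on one of the two lens contours.
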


\begin{proof}
The representation of  $\varphi(k; x,t)$ in \eqref{varphi.3b} and the fact that $\oint_{\mathcal A} \d \varphi = 2\int_{-i\eta_1}^{i\eta_1} \varphi' \, \d k = 0$ imply that 
\begin{equation}
	\varphi'(k; x,t)= \frac{12 t (k^2+\alpha^2) (k^2+ p^2) }{R(k)}
\end{equation}
for some $p = p(x,t) \in (0, \eta_1)$. It follows immediately that 
\begin{align*} 
	\varphi'(k) < 0,& && k \in (i\alpha, i\infty), \\
	i \varphi'_+(k) = -i \varphi'_-(k) > 0,& && k \in \Sigma_{1,\alpha}.
\end{align*}
These two conditions imply the first and second inequalities in \eqref{lens inequalities} respectively. 
\end{proof}

{\color{black}
\subsubsection{  Opening lenses in the support of the soliton gas for $x>4 \eta_1^2 t$: square root behavior in $r(k)$ at endpoints} 
Thus far we have restricted our attention to the situation in which $r(k)$ is strictly positive and bounded on $\Sigma_1$. It's reasonable, however,  to admit mild zeros or singular behaviour at the endpoints $i \eta_1$ and $i\eta_2$. Specifically, one can consider $r(k) = |k - i \eta_j|^{\beta} \tilde r(k)$ 
for $|\beta| <1$ ($j=1,2$), and the analysis goes through essentially unchanged except that the local Bessel parametrices at the fixed endpoints (cf. Section~\ref{sect_loc_par_eta1}) have to be slightly adjusted. 

The special case when $r(k) = |z-i \eta_j|^{\pm1/2} \tilde r(k)$ for $\tilde r$ locally bounded and non-zero is of particular interest. 
In this setting, the lens opening factorizations can be slightly adjusted so that local parametrices near the fixed end points $\pm i \eta_1$ and $\alpha = i \eta_2$ for $x/t > v_2$ are not needed. This has the effect that for $x/t> v_2$, the outer model is uniformly accurate and, as a result, the error bounds improve from polynomial to exponential decay in $t$.  In this case the potential falls into a class of potentials considered in 
\cite{Teschls}.

First, we need to modify Assumption~\ref{Assumption:r} on the analytic extension of $r(k)$ away from $\Sigma_1$. 
\begin{assumption}\label{Assumption:r2}
When $r(k)|k-i\eta_j|^{\pm 1/2}$ is bounded and nonzero on $\Sigma_1$, we assume that $r$ admits an analytic continuation off of $\Sigma_1$:
\begin{gather}
	\hat{r}(k) \text{ analytic in } \Omega_{h,r}\cup \Omega_{h,l},
	\qquad \qquad
	\hat{r}(k)\! \big\vert_{\scriptstyle k \in [i\eta_1, i\eta_2]} = r(k), \\
	\hat r_+(k) + \hat r_-(k) = 0, \qquad k \in [i\eta_2, i\eta_2+h] \cup [i\eta_1-h, i\eta_1]
	\label{rhat.jump}
\end{gather}
where $\Omega_{h,r}$ is defined as
\begin{gather}\label{Omega_h,rl}
\Omega_{h,r} := \left\{k \in \mathbb C \, :\, \ \Re k \in (0,h] \text{ and } \eta_1-\sqrt{h^2-\Re k^2} \leq \Im k \leq \eta_2+ \sqrt{h^2-\Re k^2} \right\},
\end{gather}
with some $0 < h < \eta_1$ and where $\Omega_{h, l}$ is defined by symmetry, $\Omega_{h, l}=\left\{k: -\ol k\in\Omega_{h,r}\right\}$ (see \figurename~\ref{fig:r}).
\end{assumption}
\begin{figure}
\centering
\scalebox{.6}{
\begin{tikzpicture}[>=stealth]op
\path (0,0) coordinate (O);

\coordinate (eta1) at (0,.8);    \coordinate (eta1c) at ($-1*(eta1)$);
\coordinate (eta2) at (0,4);       \coordinate (eta2c) at ($-1*(eta2)$);
\coordinate (alpha) at (0,3);   \coordinate (alphac) at ($-1*(alpha)$);

\coordinate (eta1_low) at (0,.3);    \coordinate (eta1_lowc) at ($-1*(eta1_low)$);
\coordinate (eta2_up) at (0,4.7);       \coordinate (eta2_upc) at ($-1*(eta2_up)$);
\coordinate (ic) at (0,3.6);         \coordinate (icc) at ($-1*(ic)$);

%--Draw the band and gap contours and label the jumps on each contour

\draw[->- = .5, ultra thick] (eta1)--(eta2) 
  node[pos=0.5, pin={[pin distance=15mm, pin edge={<-,dashed,thick}]180: $\begin{bmatrix} 0 & i \\ i & 0 \end{bmatrix}$}] {};  

\draw[->- = .55, ultra thick] (eta2c)--(eta1c) 
  node[pos=0.5, pin={[pin distance=12mm, pin edge={<-,dashed,thick}]180: $\begin{bmatrix} 0 & i \\ i & 0 \end{bmatrix}$}] {}; 

\draw[->- = .55, ultra thick] (eta1c)--(eta1) 
  node[pos=0.5, pin={[pin distance=12mm, pin edge={<-,dashed,thick}]180: $\begin{bmatrix} e^{i({\Omega}+\Delta)} & 0 \\ 0 & e^{-i({\Omega}+\Delta)} \end{bmatrix} $}] {};

%--Draw the lens contours and the resulting jumps on them
\draw[->- = .7, thick,gray] (eta1_low) .. controls + (0:1.5cm) and + (0:1.5cm) .. (eta2_up) 
  node[pos=.75, right] {$\mathcal{C}_1$}
  node[pos=0.5, pin={[pin distance=15mm, pin edge={<-,dashed,thick}]0: $ \begin{bmatrix} 1 & {\color{gray} -i \hat{r}(k)^{-1} f(k)^{-2} e^{2i \varphi(k,x,t)} } \\ 0 & 1 \end{bmatrix}$ }] {};
  
\draw[->- = .7,thick,gray] (eta1_low) .. controls + (180:1.5cm) and + (180:1.5cm) .. (eta2_up)
  node[pos=.75, left] {$\mathcal{C}_1$};

\draw[->- = .3,thick,gray] (eta2_upc) .. controls + (0:1.5cm) and + (0:1.5cm) .. (eta1_lowc) 
  node[pos=.25, right] {$\mathcal{C}_2$}
  node[pos=0.5, pin={[pin distance=15mm, pin edge={<-,dashed,thick}]0: $ \begin{bmatrix} 1 & 0   \\ {\color{gray}-i \ \overline{\hat{r}(\bar k)^{-1}} f(k)^2 e^{-2i \varphi(k,x,t)} } & 1 \end{bmatrix}$ }] {}; 

\draw[->- = .3,thick,gray] (eta2_upc) .. controls + (180:1.5cm) and + (180:1.5cm) .. (eta1_lowc) 
  node[pos=.25, left] {$\mathcal{C}_2$};

%--Label each of the band endpoints
\draw[fill] (eta2) circle [radius=0.06] node[above] {$i\eta_2$};
\draw[fill] (eta1) circle [radius=0.06] node[left] {$i\eta_1$};
%\draw[fill] (alpha) circle [radius=0.06] node[above right] {$i\alpha$};
\draw[fill] (eta2c) circle [radius=0.06] node[below] {$-i\eta_2$};
\draw[fill] (eta1c) circle [radius=0.06] node[left] {$-i\eta_1$};
%\draw[fill] (alphac) circle [radius=0.06] node[below right] {$-i\alpha$};
\draw[fill, gray] (eta2_up) circle [radius=0.06] node {};
\draw[fill, gray] (eta2_upc) circle [radius=0.06] node {};
\draw[fill, gray] (eta1_low) circle [radius=0.06] node {};
\draw[fill, gray] (eta1_lowc) circle [radius=0.06] node {};

\end{tikzpicture}
}
\caption{ The modified system of contours $\Gamma_S$ defining the lens opening transformation $\bm{T} \mapsto \bm{S}$ and the resulting jump matrix $\bm{J}_S$ for $(x,t) \in \mathcal{S}_R$ (so $\alpha = \eta_2$) when ${r(k) |k-i \eta_k|^{\pm1/2} = \bigo{1}}$.
The entries shown in gray in the jump matrices are all exponentially small.}
\label{fig:lenses.hard.edges}	
\end{figure}
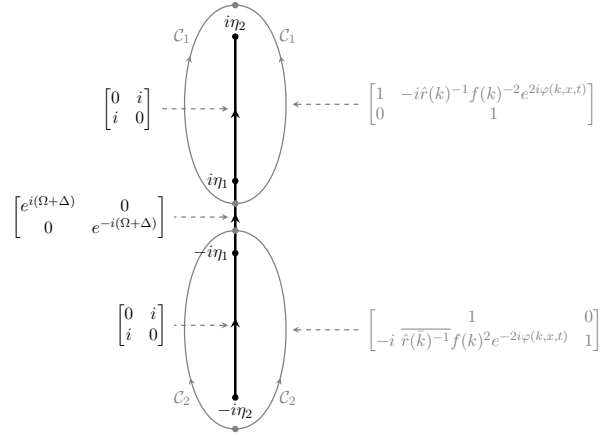
In the modified setting, the factorizations of the jump matrix $\bm{J}_T(k;x,t)$ defined by \eqref{Tjump.factor.1}--\eqref{Tjump.factor.2} remain valid. Note in particular, that property \ref{f5.5} in Proposition~\ref{prop:f} guarantees that the off-diagonal entries of each triangular factor are bounded at the endpoints $\pm i \eta_j$, $j=1,2$.  As before, we open a lens $\mathcal{C}_1$  away from $\Sigma_1$ (and its symmetric pair $\mathcal{C}_2$ away from $\Sigma_2$) with the modification that the lens detaches from $i \eta_1$ and fully encloses the endpoint. What happens at $i\alpha$ depends on whether it is modulating or fixed at $\alpha=i\eta_2$. 
When $(x,t) \in \mathcal{S}_R$, $\alpha = \eta_2$ and the lens $\mathcal{C}_1$ detaches at both endpoints and becomes a loop (see Figure~\ref{fig:lenses.hard.edges}). 
For $(x,t) \in \mathcal{S}_M$, $\alpha \in(\eta_1,  \eta_2)$ and the lens returns to $\alpha$ as usual. We then pick a fixed point in the interval $[i \alpha, \eta_2]$ and open a lens $\mathcal{C}_3$ from this point enclosing $i\eta_2$. Define $\mathcal{C}_4$ by symmetry (see \figurename~\ref{fig:lenses.mixed.edges}). 
For $(x,t) \in \mathcal{S}_R$ we use the \eqref{S} to again define the transformation $\bm{T} \mapsto \bm{S}$ modulo the change in the shape of the lenses. For $(x,t) \in \mathcal{S}_M$ we again use \eqref{S} but append the following extra factorizations inside $\mathcal{C}_3$ and $\mathcal{C}_4$. 
\begin{equation}
	\bm{S}(k;x,t) = \begin{dcases}
		\bm{T}(k;x,t) \begin{bmatrix} 1 & 0 \\ i \hat r(k) f(k)^2 e^{-2i\varphi(k;x,t)} & 1 \end{bmatrix}  & k \in \inside(\mathcal{C}_3) \cap \{\Re k <0 \}\ , \\
		\bm{T}(k;x,t) \begin{bmatrix} 1 & 0 \\ -i \hat r(k) f(k)^2 e^{-2i\varphi(k;x,t)} & 1 \end{bmatrix} & k \in \inside(\mathcal{C}_3) \cap \{\Re k >0 \} \ , \\
		\bm{T}(k;x,t) \begin{bmatrix} 1 & i\overline{ \hat r(\bar k) }f(k)^{-2} e^{2i\varphi(k;x,t)} \\ 0 & 1 \end{bmatrix} & k \in \inside(\mathcal{C}_4) \cap \{\Re k > 0 \} \ , \\
		\bm{T}(k;x,t) \begin{bmatrix} 1 & -i \overline{\hat r(\bar k)} f(k)^{-2} e^{2i\varphi(k;x,t)} \\ 0 & 1 \end{bmatrix}  & k \in \inside(\mathcal{C}_4) \cap \{\Re k < 0 \} \ , \\
		\text{defined by \eqref{S}} & \text{elsewhere.}
	\end{dcases}
\end{equation}
The key observation is that, due to the half-integer power behaviour of $r(k)$ at the endpoint we can define the extension $\hat r$ to satisfy \eqref{rhat.jump} so that the transformation $\bm{T} \mapsto \bm{S}$ does not introduce new jumps on the intervals $[i( \eta_1-\epsilon), i\eta_1]$ and $[i\eta_2, i(\eta_2+\epsilon)]$ . For example, when $(x,t) \in \mathcal{S}_R$ we have 
\begin{equation*}
%\begin{aligned}
%	&
	\bm{S}_-^{-1}(k) \bm{S}_+(k) = \begin{bmatrix} 1 & i ( \hat r_-(k)^{-1} + \hat r_+(k)^{-1}) f(k)^{-2} e^{2i \varphi(k;x,t)} \\ 0 & 1 \end{bmatrix} = \bm{I} 
\end{equation*}
for $k \in [i\eta_2, i(\eta_2+\epsilon)]$, 
\begin{equation*}
%	&
	\bm{S}_-^{-1}(k) \bm{S}_+(k) = \begin{bmatrix} e^{i(\Omega+\Delta)} & i \left( \hat r_+(k) +r_-(k) \right) f_+(k)^{-2} e^{i\varphi_+(k)} e^{i(\Omega+\Delta)} \\ 0 & e^{-i(\Omega+\Delta)} \end{bmatrix} = e^{i(\Omega+\Delta)\sigma_3} 
%	&&
%\end{aligned}
\end{equation*}
for $k \in [i(\eta_1-\epsilon), i\eta_1]$. The calculation to check the jump on the other boundaries are similar and left to the reader.

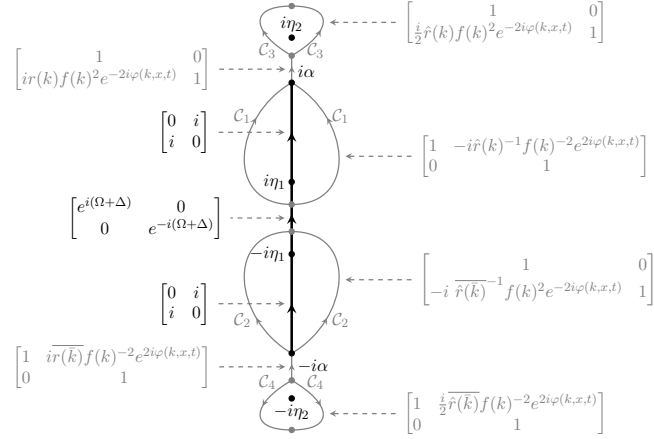
\begin{figure}
\centering
\scalebox{.6}{
\begin{tikzpicture}[>=stealth]op
\path (0,0) coordinate (O);

\coordinate (eta1) at (0,.8);    \coordinate (eta1c) at ($-1*(eta1)$);
\coordinate (eta2) at (0,4);       \coordinate (eta2c) at ($-1*(eta2)$);
\coordinate (alpha) at (0,3);   \coordinate (alphac) at ($-1*(alpha)$);

\coordinate (eta1_low) at (0,.3);    \coordinate (eta1_lowc) at ($-1*(eta1_low)$);
\coordinate (eta2_up) at (0,4.7);       \coordinate (eta2_upc) at ($-1*(eta2_up)$);
\coordinate (ic) at (0,3.6);         \coordinate (icc) at ($-1*(ic)$);

%--Draw the band and gap contours and label the jumps on each contour
\draw[->- = .6,thick,gray] (alpha)--(ic) 
  node[pos=0.5, pin={[pin distance=15mm, pin edge={<-,dashed,thick}]180: $\begin{bmatrix} 1 & 0 \\ {\color{gray}i r(k) f(k)^2 e^{-2i \varphi(k,x,t)} } & 1 \end{bmatrix} $}] {}; 

\draw[->- = .5, ultra thick] (eta1)--(alpha) 
  node[pos=0.5, pin={[pin distance=15mm, pin edge={<-,dashed,thick}]180: $\begin{bmatrix} 0 & i \\ i & 0 \end{bmatrix}$}] {};  

\draw[->- = .5, ultra thick] (alphac)--(eta1c) 
  node[pos=0.5, pin={[pin distance=15mm, pin edge={<-,dashed,thick}]180: $\begin{bmatrix} 0 & i \\ i & 0 \end{bmatrix}$}] {};

\draw[->- = .55, ultra thick] (eta1c)--(eta1) 
  node[pos=0.5, pin={[pin distance=12mm, pin edge={<-,dashed,thick}]180: $\begin{bmatrix} e^{i({\Omega}+\Delta)} & 0 \\ 0 & e^{-i({\Omega}+\Delta)} \end{bmatrix} $}] {}; 

\draw[->- = .6, thick,gray] (icc)--(alphac)
   node[pos=0.5, pin={[pin distance=15mm, pin edge={<-,dashed,thick}]180: $\begin{bmatrix} 1 &  {\color{gray}i \overline{r(\bar k)} f(k)^{-2} e^{2i \varphi(k,x,t)}} \\ 0 & 1 \end{bmatrix}$}] {};

%--Draw the lens contours and the resulting jumps on them
\draw[->- = .7, thick,gray] (eta1_low) .. controls + (0:1.5cm) and + (-30:1.5cm) .. (alpha) 
  node[pos=.75, right] {$\mathcal{C}_1$}
  node[pos=0.5, pin={[pin distance=15mm, pin edge={<-,dashed,thick}]0: $ \begin{bmatrix} 1 & {\color{gray} -i \hat{r}(k)^{-1} f(k)^{-2} e^{2i \varphi(k,x,t)} } \\ 0 & 1 \end{bmatrix}$ }] {};  
\draw[->- = .7,thick,gray] (eta1_low) .. controls + (180:1.5cm) and + (-150:1.5cm) .. (alpha)
  node[pos=.75, left] {$\mathcal{C}_1$};

\draw[->- = .3,thick,gray] (alphac) .. controls + (30:1.5cm) and + (0:1.5cm) .. (eta1_lowc) 
  node[pos=.25, right] {$\mathcal{C}_2$}
  node[pos=0.5, pin={[pin distance=15mm, pin edge={<-,dashed,thick}]0: $ \begin{bmatrix} 1 & 0   \\ {\color{gray}-i \ \overline{\hat{r}(\bar k)}^{-1} f(k)^2 e^{-2i \varphi(k,x,t)} } & 1 \end{bmatrix}$ }] {}; 
\draw[->- = .3,thick,gray] (alphac) .. controls + (150:1.5cm) and + (180:1.5cm) .. (eta1_lowc) 
  node[pos=.25, left] {$\mathcal{C}_2$};

%Draw the lens contours around the hard edge
\draw[->- = .45, thick,gray] (ic) .. controls + (30:1cm) and + (0:1cm) .. (eta2_up) node[pos=0.5, pin={[pin distance=15mm, pin edge={<-,dashed,thick}]0: $\begin{bmatrix} 1 & 0 \\ {\color{gray}\tfrac{i}{2} \hat{r}(k) f(k)^2 e^{-2i \varphi(k,x,t)} } & 1 \end{bmatrix} $}] {}
node[pos=.27, below ] {$\mathcal{C}_3$};

\draw[->-= .45, thick,gray] (ic) .. controls + (150:1cm) and + (180:1cm) .. (eta2_up)
node[pos=.27, below] {$\mathcal{C}_3$};;

\draw[->- = .45, thick,gray] (icc) .. controls + (-150:1cm) and + (180:1cm) .. (eta2_upc)
node[pos=.27, above, yshift=2 ] {$\mathcal{C}_4$};;
\draw[->- = .45, thick,gray] (icc) .. controls + (-30:1cm) and  + (0:1cm) .. (eta2_upc)
node[pos=0.5, pin={[pin distance=15mm, pin edge={<-,dashed,thick}]0: $\begin{bmatrix} 1 &  {\color{gray}\tfrac{i}{2} \overline{\hat{r}(\bar k)} f(k)^{-2} e^{2i \varphi(k,x,t)} }\\ 0 & 1 \end{bmatrix} $}] {}
node[pos=.27, above, yshift=2 ] {$\mathcal{C}_4$};

%--Label each of the band endpoints
\draw[fill] (eta2) circle [radius=0.06] node[above] {$i\eta_2$};
\draw[fill] (eta1) circle [radius=0.06] node[left] {$i\eta_1$};
\draw[fill] (alpha) circle [radius=0.06] node[above right] {$i\alpha$};
\draw[fill] (eta2c) circle [radius=0.06] node[below] {$-i\eta_2$};
\draw[fill] (eta1c) circle [radius=0.06] node[left] {$-i\eta_1$};
\draw[fill] (alphac) circle [radius=0.06] node[below right] {$-i\alpha$};
\draw[fill, gray] (eta2_up) circle [radius=0.06] node {};
\draw[fill, gray] (eta2_upc) circle [radius=0.06] node {};
\draw[fill, gray] (eta1_low) circle [radius=0.06] node {};
\draw[fill, gray] (eta1_lowc) circle [radius=0.06] node {};
\draw[fill, gray] (ic) circle [radius=0.06] node {};
\draw[fill, gray] (icc) circle [radius=0.06] node {};
\end{tikzpicture}
}
\caption{ The modified system of contours $\Gamma_S$ defining the lens opening transformation $\bm{T} \mapsto \bm{S}$ and the resulting jump matrix $\bm{J}_S$ for $(x,t) \in \mathcal{S}_M$ when $r(k) |k-i \eta_j|^{\pm1/2} = \bigo{1}$.
The entries shown in gray in the jump matrices are all exponentially small.}
\label{fig:lenses.mixed.edges}	
\end{figure}
}

\section{The model problems  and proof of Theorem ~\ref{theorem1.3}}
\label{sec-model}

In this section we complete the proof of Theorem~\ref{theorem1.3}.  
In Theorem~\ref{thm:g}  we derived the behaviour of the endpoint $\alpha=\alpha(x/t)$ according to the Whitham modulation equations.
In this section we derive the solution of the mKdV equation given in  \eqref{eq:qrep} and \eqref{eq:bgwps} of Theorem~\ref{theorem1.3}. The leading order behaviour  of those expressions is  obtained by solving the  outer model RH problem \ref{rhp:W} (see below)
while  the error term is obtained by constructing the 
Airy parametrix at the points $\pm i\alpha$ and Bessel parametrix at the points $\pm i \eta_1$.

We start the section by constructing the  outer  model RH problem \ref{rhp:W}. 
Lemma~\ref{lem:lenses} shows that the jump matrix $\bm{J}_S(k)$ is exponentially near identity away from $[-i\alpha, i\alpha]$ as $t \to \infty$. Moreover, the convergence is uniform away from the four endpoints $k = \pm i \eta_1,\ \pm i \alpha$ where the lens contours return to imaginary axis. In order to construct a complete asymptotic description of the solution in the large time limit we introduce a set of model problems: an outer model to control the jumps which remain in the large-time limit, and  four local models to account for the locally non-uniform behavior at each of the endpoints. 

We are particularly interested in the outer model, as its solution will generate the leading order asymptotic behavior of the dynamics. The local models will produce sub-leading corrections to the dynamics which we are less interested in here, but could be computed fully in principle. 

\subsection{The outer model problem}\label{sec:outer.model}

Removing the jumps from the RH problem~\ref{rhp:S} which are near-identity as $t \to \infty$ results in the following model problem.  
\begin{RHP}\label{rhp:W}
Find a $2\times2$ matrix-valued function $\bm{W}(k; x,t)$ with the following properties 
\begin{enumerate}[label=\arabic*.]
	\item\label{rhpW1} $\bm{W}(k; x,t)$ is meromorphic for $k \in \C \setminus [- i\alpha, i\alpha]$.
	\item $\bm{W}(k; x,t) = \bm I + \bigo{k^{-1}}$ as $k \to \infty$. 
	\item For $k \in i(-\alpha, \alpha)$, the boundary values of $\bm{W}_\pm(k; x,t)$ satisfy the jump relation
	\begin{gather}\label{Wjumps}
		\bm{W}_+(k; x,t) = \bm{W}_-(k; x,t) \bm{J}_W(k; x,t), \\
		\bm{J}_W(k; x,t) = 
		\begin{dcases}
		   \begin{bsmallmatrix} 0 & i \\ i & 0 \end{bsmallmatrix}, & k \in \Sigma_{1,\alpha} \cup \Sigma_{2,\alpha}, \\
		   e^{i ({\Omega}+ \Delta) \sigma_3}, & k \in  (-i\eta_1, i \eta_1),
		\end{dcases}
	\end{gather}
	\item\label{rhpW4} For any endpoint $p \in \{ \pm i\eta_1, \pm i \alpha\}$, $\bm{W}(k; x,t) = \bigo{ (k-p)^{-1/4}}$, as $k \to p$. 
	\item $\bm{W}(k; x,t)$ has simple poles at $\pm i\kappa_0$ and no other poles. The poles satisfy the same residue conditions \eqref{Tres} as $\bm{T}(k;x,t)$. 
\end{enumerate}
\end{RHP}

The first four conditions above -- temporarily ignoring the poles at $\pm i \kappa_0$ -- define a well-known RH problem characterizing a finite-gap solution of mKdV, whose solution will be described below. If we let $\bm{W}^{(0)}$ denote the solution of the pole-free problem, then the solution of the RH problem~\ref{rhp:W} can by computed by introducing a Darboux transformation in the form
\begin{gather}\label{Darboux}
	\bm{W}(k; x,t) = \nonumber \\
	 \left( \bm I 
	   + \frac{i}{k-i\kappa_0} \begin{bmatrix} a(x,t) &  c(x,t) \\  b(x,t) & - d(x,t) \end{bmatrix} 
	   +\frac{i}{k+i\kappa_0} \begin{bmatrix}  d(x,t) &  b(x,t) \\ c(x,t) & -  a(x,t) \end{bmatrix} 
	   \right) \bm{W}^{(0)}(k;x,t) \ ,
\end{gather}
where the coefficients in the pre-factor can be computed directly in terms of entries of $\bm{W}^{(0)}(i\kappa_0;x,t)$ and from the residue conditions \eqref{Tres}. Before we describe this computation, let us first consider $\bm{W}^{(0)}$.

The explicit expression of the sectionally holomorphic function $\bm{W}^{(0)}$ which satisfies conditions \ref{rhpW1}--\ref{rhpW4} in the RH problem~\ref{rhp:W} is well known. 
On the genus-one Riemann surface $\mathfrak{X}$ defined by \eqref{surface} (see also \figurename~\ref{fig:homology}) let
\begin{equation}\label{eq:omegadiff}
	\omega = \left( 4 \int_0^{i\eta_1} \frac{\d k}{R(k)} \right)^{-1}   \frac{\d k}{R(k)}  =  \frac{\alpha}{4i K(m)} \frac{\d k}{R(k)}, 
\end{equation}
so that $\oint_{\mathcal A} \omega = 1$ and define the period
\begin{equation}\label{ratio}
	\tau := \oint_\mathcal{B} \omega =  \frac{i K(1-m)}{2K(m)}.
\end{equation}
Using $\omega$, define the integral
\begin{equation}\label{abel}
	A(k) = \int_{i\alpha}^k \omega, \quad  k \in \C \setminus [-i\alpha, i\alpha],
\end{equation}
where the path of integration is on any simple arc from $i\alpha$ to $k$ which does not intersect $[-i\alpha, i\alpha]$. We observe that
\begin{equation}\label{Abel.values}
	A(\infty) = - \frac{1}{4},  \quad 
	A_+(i\eta_1) = - \frac{\tau}{2}, \quad
	A_+(-i\eta_1) = - \frac{1}{2} - \frac{\tau}{2} , \quad
	A_+(-i\alpha) = - \frac{1}{2}  ,
\end{equation}
and
\begin{equation}
	\begin{aligned}
	&A_+(k) + A_-(k) = 0,\phantom{-} \quad k \in \Sigma_{1,\alpha},  
	&& &A_+(k) - A_-(k) = -\tau,  \quad k \in [-i\eta_1, i\eta_1], \\
	& A_+(k) + A_-(k) = -1, \quad k \in \Sigma_{2,\alpha}.
	\end{aligned}	
\end{equation}
Next, we introduce the Jacobi elliptic function
\begin{equation}\label{theta3}
	\theta_3(z;\tau) = \sum_{n \in \Z} e^{2\pi i n z + \pi n^2 i \tau},  \quad z \in \C,
\end{equation}
which is an even function of $z$, satisfies the periodicity relations
\begin{equation}
	\theta_3(z + h + k \tau; \tau ) = e^{-\pi i k^2 \tau - 2\pi i k z} \theta_3(z;\tau), \quad k,h \in \Z,
\end{equation}
and has a simple zero at the half period $\tfrac{1}{2} + \tfrac{\tau}{2}$.
Finally, define a function $\gamma$ analytic in $\C \setminus \{\Sigma_{1,\alpha} \cup \Sigma_{2,\alpha}\}$ by
\begin{equation} \label{gamma}
	\gamma(k) = \left( \frac{ k - i \alpha}{k- i \eta_1} \right)^{1/4} \left( \frac{k + i \eta_1}{ k + i \alpha} \right)^{1/4},
\end{equation}
and normalized such that $\gamma(k) \to 1$ as $k \to \infty$, so that 
\begin{align}
	\gamma_+(k) = i \gamma_-(k), \qquad k \in \Sigma_{1,\alpha} \cup \Sigma_{2,\alpha}.
\end{align}

Then the function $\bm{W}^{(0)}$ is given by the following formul\ae
\begin{equation}\label{W0.entries}
	\begin{aligned}
	&\bm{W}^{(0)}_{11}(k;x,t)  = \frac{1}{2} \left( \gamma(k) + \frac{1}{\gamma(k)} \right) 
	  \frac{ \theta_3( A(k) + \tfrac{1}{4} + \frac{{\Omega} + \Delta}{2\pi}; \tau)}{\theta_3(A(k) + \tfrac{1}{4}; \tau )} 
	  \frac{\theta_3(0; \tau )}{ \theta_3(\frac{{\Omega} + \Delta}{2\pi}; \tau)}\ , \\
	&\bm{W}^{(0)}_{12}(k;x,t)  = \frac{1}{2} \left( \gamma(k) - \frac{1}{\gamma(k)} \right)
	  \frac{ \theta_3( -A(k) + \tfrac{1}{4} + \frac{{\Omega} + \Delta}{2\pi}; \tau)}{\theta_3(-A(k) + \tfrac{1}{4}; \tau )} 
	  \frac{\theta_3(0; \tau )}{ \theta_3(\frac{{\Omega} + \Delta}{2\pi}; \tau)}\ , \\
	&\bm{W}^{(0)}_{21}(k;x,t)  = \frac{1}{2} \left( \gamma(k) - \frac{1}{\gamma(k)} \right) 
	  \frac{ \theta_3( A(k) - \tfrac{1}{4} + \frac{{\Omega} + \Delta}{2\pi}; \tau)}{\theta_3(A(k) - \tfrac{1}{4}; \tau )} 
	  \frac{\theta_3(0; \tau )}{ \theta_3(\frac{{\Omega} + \Delta}{2\pi}; \tau)}\ , \\
	&\bm{W}^{(0)}_{22}(k;x,t)  = \frac{1}{2} \left( \gamma(k) + \frac{1}{\gamma(k)} \right)
	  \frac{ \theta_3( -A(k) - \tfrac{1}{4} + \frac{{\Omega} + \Delta}{2\pi}; \tau)}{\theta_3(-A(k) - \tfrac{1}{4}; \tau )} 
	  \frac{\theta_3(0; \tau )}{ \theta_3(\frac{{\Omega} + \Delta}{2\pi}; \tau)}\ ,
	\end{aligned}
\end{equation}
with $\Omega$ and  $\Delta$ as in \eqref{omega.1} and \ref{Delta} respectively.

We're now ready to compute the coefficients in the Darboux transformation \eqref{Darboux}. Let
\begin{equation}
	\bm{W}^{(0)}(i \kappa_0;x,t) = 
	\begin{bmatrix} 
		w_{11}(x,t) & w_{12}(x,t) \\ w_{21}(x,t) & w_{22}(x,t)
	\end{bmatrix}
	=\begin{bmatrix} 
		w_{11} & w_{12} \\ w_{21} & w_{22}
	\end{bmatrix},
\end{equation}
where we have suppressed the $(x,t)$ dependence of the coefficients in the last equality for brevity in what follows. 

Suppose that $(x,t) \in \mathcal{S}_M^{^{(-)}} \cup \mathcal{S}_R^{^{(-)}}$.  
For $\bm{W}(k; x,t)$ given by \eqref{Darboux} to satisfy the residue conditions  \eqref{Tres-} requires that
\begin{equation}\label{Darboux.condition}
	\begin{bmatrix}  a &  c \\  b & - d \end{bmatrix} 
	\begin{bmatrix} w_{11} & w_{12} \\ w_{21} & w_{22} \end{bmatrix}
	\begin{bmatrix} 0 & 0 \\ 1 & 0 \end{bmatrix} = 
	\bm{0},
	\quad
	\text{implying}
	\quad
	c = - \frac{ w_{12} }{ w_{22}} a, \ \
	d = \frac{ w_{12} }{ w_{22}} b
\end{equation}
so that the limit on the right hand side exists. Defining
\begin{equation}\label{Q_minus_4.16}
	\mathcal{Q}^{^{(-)}}(x,t) := \frac{\bm{W}^{(0)}_{12}(i\kappa_0; x,t)}{\bm{W}^{(0)}_{22}(i\kappa_0; x,t)} = \frac{ w_{12}(x,t)}{w_{22}(x,t)},
	\qquad 
	\mathcal{Q}^{^{(-)}}_{\kappa_0}(x,t) := \od{}{\kappa_0}  \frac{\bm{W}^{(0)}_{12}(i\kappa_0 ; x,t)}{\bm{W}^{(0)}_{22}(i\kappa_0; x,t)} 
\end{equation}
and observing that $\det(\bm{W}^{(0)}) \equiv 1$, the remaining conditions are then equivalent to the system of equations
\begin{equation}\label{Darboux.system.-}
	\left\{
	\begin{gathered}
	  a X^{^{(-)}} + b Y^{^{(-)}} + \mathcal{Q}^{^{(-)}}(x,t) = 0, \\
	   -a Y^{^{(-)}} + b X^{^{(-)}} + 1 = 0,
	\end{gathered}
	\right.
\end{equation}
where
\begin{equation} \label{XY.sol.-}
\begin{aligned}
	&X^{^{(-)}}(x,t) = \frac{1}{ f(i\kappa_0)^2 w_{22}(x,t)^2 \chi e^{-2i\varphi(i\kappa_0; x,t) } }  +  \mathcal{Q}^{^{(-)}}_{\kappa_0}(x,t),
\\
	&Y^{^{(-)}}(x,t) = \frac{1}{2\kappa_0} \left[1 +\le( \mathcal{Q}^{^{(-)}}(x,t)\ri)^2\right].
	\end{aligned}
\end{equation}

For $(x,t) \in \mathcal{S}_M^{^{(+)}} \cup \mathcal{S}_R^{^{(+)}}$ the expansion \eqref{Darboux} must instead satisfy \eqref{Tres+}, and through a similar series of calculations one finds that $\alpha$ and $\beta$ satisfy a system of the same form as \eqref{Darboux.system.-} but with $\mathcal{Q}^{^{(-)}}, X^{^{(-)}}, Y^{^{(-)}}$ replaced by 
\begin{gather}
	\mathcal{Q}^{^{(+)}} (x,t):= \frac{w_{11}(x,t)}{w_{21}(x,t)} \ , \ \mathcal{Q}^{^{(+)}}_{\kappa_0}(x,t) := \od{}{\kappa_0}  \frac{\bm{W}^{(0)}_{11}(i\kappa_0 ; x,t)}{\bm{W}^{(0)}_{21}(i\kappa_0; x,t)} \ , 
	 \ c = - a \mathcal{Q}^{^{(+)}}, \ d = b \mathcal{Q}^{^{(+)}},   \\
	\label{XY.sol.+}
	X^{^{(+)}}(x,t) = \frac{ f'(i\kappa_0)^2  \chi e^{-2i\varphi(i\kappa_0; x,t) }}{w_{21}(x,t)^2} +  \mathcal{Q}^{^{(+)}}_{\kappa_0}(x,t),  
	\ 
	Y^{^{(+)}}(x,t) = \frac{1}{2\kappa_0} \left[ 1 + \le(\mathcal{Q}^{^{(+)}}(x,t)\ri)^2\right].
\end{gather}

Solving the system \eqref{Darboux.system.-} yields
\begin{equation}
	a(x,t) = \frac{ Y^{^{(\pm)}} - \mathcal{Q}^{^{(\pm)}} X^{^{(\pm)}}}{\le(X^{^{(\pm)}}\ri)^2+\le(Y^{^{(\pm)}}\ri)^2},
	\quad
	b(x,t) = -\frac{X^{^{(\pm)}} + \mathcal{Q}^{^{(\pm)}}Y^{^{(\pm)}}}{\le(X^{^{(\pm)}}\ri)^2+\le(Y^{^{(\pm)}}\ri)^2},
\end{equation}
for $(x,t) \in \mathcal{S}_M^{^{(\pm)}} \cup \mathcal{S}_R^{^{(\pm)}}$.

\subsection{The local models at the endpoints}\label{sect:local_problems}

The global model problem $\bm W$ is a good approximation of the original RH problem $\bm S$ everywhere in the complex plane, except at the endpoints $\pm i\eta_1, \pm \alpha$, where we will need to construct local parametrices. We will see that the local parametrix near $i\eta_1$ can be constructed in terms of the modified Bessel functions of index 0, and as for the point $i\alpha,$ we need to distinguish between the cases $\alpha < \eta_2$ and $\alpha = \eta_2$: in the case $\alpha = \eta_2,$ the parametrix is described in terms of the modified Bessel functions of index 0, and in the case $\alpha < \eta_2$ the parametrix is described in terms of Airy functions. The construction of the parametrices at $-i\eta_1$ and $-i\alpha$ will follow from the symmetric properties of the RH problem.

\subsubsection{Local parametrix at $k= i\alpha,$ case $\alpha < \eta_2$}\label{sect_loc_par_alpha}
\noindent \\
Inspection of the local behavior of the function $\varphi(k;x,t)$ 
\eqref{varphi.3b}, prompts to introduce a local variable $\lambda = \lambda(k; x/t)$ in a disk $U_\delta(i\alpha)$ (centered at $i\alpha$, of a sufficiently small radius $\delta>0$) as follows:
\[ 2 i \varphi(k; x,t) =: \frac43 t \lambda^{3/2}, \] 
so that 
\[ \lambda = \frac{k - i\alpha}{i}\cdot\(\frac{12(\alpha^2 - \mu^2)\sqrt{2\alpha} }{\sqrt{\alpha^2 - \eta_1^2}}\)^{\frac23}\(1 + \mathcal{O}(k - i\alpha)\), \qquad \text{as } k\to i\alpha,\]  
and the branch cut for $\lambda^{3/2},$ i.e., the half-line $\lambda<0,$ corresponds to $k\in(i\eta_1, i\alpha).$

\medskip
\noindent
Similarly as in, for example, \cite{D99, DKMcLVZ99}, we construct a function that solves exactly the same jump as $\bm{T}$ does in a small neighborhood of the point $i\alpha.$ 
Define
\[
{\bm \Psi}_{\rm Ai}(\lambda) = 
\begin{cases}
\begin{pmatrix}
v_1(\lambda) & i v_{-1}(\lambda)
\\
v'_1(\lambda) & i v'_{-1}(\lambda)
\end{pmatrix}  &\arg\lambda \in (\frac{2\pi}{3}, \pi),\\
%\qquad
\begin{pmatrix}
v_1(\lambda) & -v_0 (\lambda)
\\
v'_1(\lambda) & -v_0'(\lambda)
\end{pmatrix} & \arg\lambda \in (0, \frac{2\pi}{3}),
\\
\begin{pmatrix}
v_{-1}(\lambda) & -i v_{1}(\lambda)
\\
v'_{-1}(\lambda) & -i v'_{1}(\lambda)
\end{pmatrix} & \arg\lambda \in (-\pi, \frac{2\pi}{3}), \\
%\qquad
\begin{pmatrix}
v_{-1}(\lambda) & -v_0 (\lambda)
\\
v'_{-1}(\lambda) & -v_0'(\lambda)
\end{pmatrix} &\arg\lambda \in (\frac{-2\pi}{3}, 0),
\end{cases}
\]
where we denoted $v_0(\lambda) = \sqrt{2\pi}\, \mathrm{Ai}(\lambda),$ $v_{j} = \sqrt{2\pi}e^{\frac{-\pi j i}{6}}\, \mathrm{Ai}(\lambda e^{-\frac{2\pi j i}{3}})$, $j=\pm1,$ and $'$ means derivative with respect to $\lambda.$ The function ${\bm \Psi}_{\rm Ai}$ satisfies the jump conditions 
\[
{\bm \Psi}_{{\rm Ai}, +}(\lambda) = {\bm \Psi}_{{\rm Ai}, -}(\lambda)
\begin{dcases} 
\begin{bmatrix}0 & i \\ i & 0\end{bmatrix} & \lambda \in(-\infty, 0), \\
\begin{bmatrix}1 & -i \\ 0 & 1\end{bmatrix} & \lambda \in(\infty e^{\pm\frac{2\pi i}{3}}, 0), \\
\begin{bmatrix}1 & 0 \\ i & 1\end{bmatrix}, & \lambda \in(0, +\infty),
\end{dcases}
\]
 where the orientation of the segments is from the first mentioned point to the second one, and where $(\infty e^{i\beta}, 0)$ denotes a ray coming from infinity to $0$ at an angle $\beta\in\mathbb{R}$ (see \figurename~\ref{Fig_loc_par}, left). Besides, the function ${\bm \Psi}_{\rm Ai} $ satisfies the asymptotics
\[
{\bm \Psi}_{\rm Ai}(\lambda) = \lambda^{-\sigma_3/4}\frac{1}{\sqrt{2}}\begin{bmatrix}1 & -1 \\ 1 & 1\end{bmatrix} {\bm{\mathcal{E}}}_{\rm Ai}(\lambda)e^{\frac23\lambda^{3/2} \sigma_3},
\qquad  {\bm{\mathcal{E}}}_{\rm Ai}(\lambda) = \bm{ I} + \mathcal{O}\(\lambda^{-\frac32}\),\quad \lambda\to\infty.
\]
uniformly in $\arg\lambda \in[-\pi, \pi]$.
\medskip

Finally, define 
\[
\bm{P}_{\rm Ai}(k; x,t) = \bm{B}_{\rm Ai}(k; x,t) {\bm \Psi}_{\rm Ai}\(t^{\frac23} \lambda(k; x / t )\) e^{-i\varphi(k;x,t)\sigma_3}\(f(k)\sqrt{\widehat r(k)}\)^{\sigma_3}\ ,
\]
for $ |k-i\alpha| < \delta$, where
\[
\bm{B}_{\rm Ai}(k;x,t) = \bm{W}(k; x,t)
\(f(k)\sqrt{\widehat r(k)}\)^{-\sigma_3}\frac{1}{\sqrt{2}}\begin{bmatrix}1 & 1 \\ -1 & 1\end{bmatrix}\(t^{\frac23}\lambda\)^{\frac{\sigma_3}{4}}
\]
and $\bm{B}_{\rm Ai}$ is analytic in $U_{\delta}(i\alpha)$ (i.e., it does not have jumps across $(i\alpha-i\delta, i\alpha + i\delta)$).
The function $\bm{P}_{\rm Ai}$ satisfies exactly the same jumps inside $U_{\delta}(i\alpha)$ as $\bm{T}$ does, and on the boundary $\partial U_{\delta}(i\alpha)$ we have the following matching condition:
\begin{align*}
\bm{T}(k;x, t) \bm{P}_{\rm Ai}^{-1}(k;x,t)
 &= 
\bm{W}(k; x,t)
\(
f(k)\sqrt{\widehat r(k)}\)^{-\sigma_3}
{\bm{\mathcal{E}}}_{\rm Ai}(t^{\frac23}\lambda(k;x/t)) 
\(f(k)\sqrt{\widehat r(k)}\)^{\sigma_3}
\bm{W}(k; x,t)^{-1}
\\
&=
\bm{I} + \mathcal{O} \(t^{-1}\)
\end{align*}
as $t\to\infty.$ 
Here we used the fact that  $\alpha$ is not close to $\eta_2$ and $\eta_1,$ and hence $\bm{W}$ is bounded on $\partial U_{\delta}(i\alpha).$

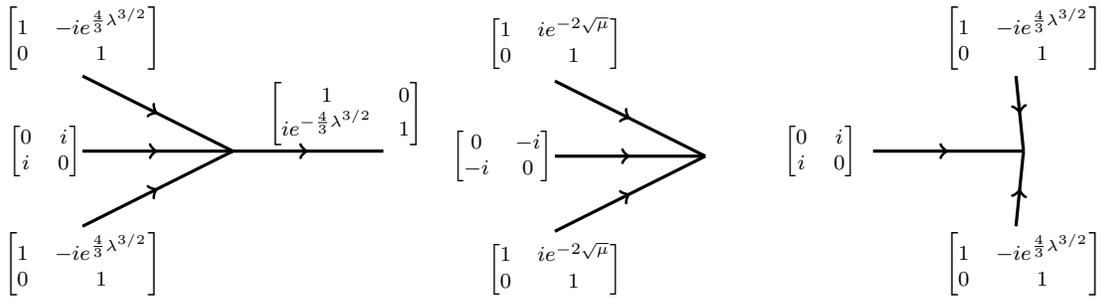
\begin{figure}[ht!]
%% Airy %%
\begin{tikzpicture}
\draw[very thick, decoration = {markings, mark = at position 0.25 with {\arrow{>}}},
decoration = {markings, mark = at position 0.75 with {\arrow{>}}}, 
postaction = decorate](-2,0) to (2,0);
\draw
[very thick, decoration = {markings, mark = at position 0.5 with {\arrow{>}}},
postaction = decorate]
(-2,1) to (0,0);
\draw[very thick, decoration = {markings, mark = at position 0.5 with {\arrow{>}}},
postaction = decorate](-2,-1) to (0,0);
\node at (-2, 1.5){\scriptsize $\begin{bmatrix}1 & -i e^{\frac43\lambda^{3/2}} \\ 0 & 1\end{bmatrix}$};
\node at (-2,-1.5){\scriptsize $\begin{bmatrix}1 & -i e^{\frac43\lambda^{3/2}} \\ 0 & 1\end{bmatrix}$};
\node at (1.5,0.5){\scriptsize $\begin{bmatrix}1 & 0 \\ i e^{-\frac43\lambda^{3/2}} & 1\end{bmatrix}$};
\node at (-2.5,0){\scriptsize $\begin{bmatrix}0 & i \\ i & 0 \end{bmatrix}$};
\end{tikzpicture}
%% Bessel %%
\begin{tikzpicture}
\draw[very thick, decoration = {markings, mark = at position 0.5 with {\arrow{>}}},
postaction = decorate](-2,0) to (0,0);
\draw
[very thick, decoration = {markings, mark = at position 0.5 with {\arrow{>}}},
postaction = decorate]
(-2,1) to (0,0);
\draw[very thick, decoration = {markings, mark = at position 0.5 with {\arrow{>}}},
postaction = decorate](-2,-1) to (0,0);
\node at (-2, 1.5){\scriptsize $\begin{bmatrix}1 & i e^{-2\sqrt{\mu}} \\ 0 & 1\end{bmatrix}$};
\node at (-2,-1.5){\scriptsize $\begin{bmatrix}1 & i e^{-2\sqrt{\mu}} \\ 0 & 1\end{bmatrix}$};
\node at (-2.7,0){\scriptsize $\begin{bmatrix}0 & -i \\ -i & 0 \end{bmatrix}$};
\end{tikzpicture}
%% Bessel for $i\eta_2$  %%
\qquad
\begin{tikzpicture}
\draw[very thick, decoration = {markings, mark = at position 0.5 with {\arrow{>}}},
postaction = decorate](-2,0) to (0,0);
\draw
[very thick, decoration = {markings, mark = at position 0.5 with {\arrow{>}}},
postaction = decorate]
(-0.1,1) to (0,0);
\draw[very thick, decoration = {markings, mark = at position 0.5 with {\arrow{>}}},
postaction = decorate](-0.1,-1) to (0,0);
\node at (0, 1.5){\scriptsize $\begin{bmatrix}1 & -i e^{\frac43\lambda^{3/2}} \\ 0 & 1\end{bmatrix}$};
\node at (0,-1.5){\scriptsize $\begin{bmatrix}1 & -i e^{\frac43\lambda^{3/2}} \\ 0 & 1\end{bmatrix}$};
\node at (-2.7,0){\scriptsize $\begin{bmatrix}0 & i \\ i & 0 \end{bmatrix}$};
\end{tikzpicture}
%%%%%%%%%%%%%%%%%%%%%%%%%
\caption{
Left: Jumps for $\bm{\Psi}_{\rm Ai}(\lambda)e^{-\frac23\lambda^{3/2}\sigma_3}.$
Middle: Jumps for $\bm{\Psi}_{\rm Bes}(\mu) e^{\sqrt{\mu} \sigma_3}.$
Right: Jumps for $(-i)^{\sigma_3}\bm{\Psi}_{\rm Bes}(\frac49\lambda^3) e^{\frac23\lambda^{3/2} \sigma_3} i^{\sigma_3}.$
}
\label{Fig_loc_par}
\end{figure}

\subsubsection{Local parametrix at $k = i\eta_1$}\label{sect_loc_par_eta1}
\noindent \\
Introduce a local variable $\mu = \mu(k; x/t)$ in a disk $U_{\delta}(i\eta_1)$ by formula 
\[ \varphi_{\mp}(k;x,t) =: \pm\frac12 {\Omega} + i t \sqrt{\mu},\] 
so that
\[ \mu(k; x/t) = \frac{576(\eta_1^2-\mu^2)^2(\alpha^2 - \eta_1^2)}{2\eta_1}\cdot\frac{k - i\eta_1}{-i}\(1 + \mathcal{O}(k - i\eta_1)\), \qquad k \to i\eta_1,\]
and the branch cut of $\sqrt{\mu}$ corresponds to $k\in(i\eta_1, i\alpha).$

\medskip
\noindent 
Define
\[
\bm{\Psi}_{\rm Bes}(\mu) = \begin{cases}
\begin{pmatrix}
\frac{1}{\sqrt{\pi}} K_0(\sqrt{\mu}) & \frac{-i}{\sqrt{\pi}} K_0(e^{-\pi i}\sqrt{\mu})
\\
\frac{1}{\sqrt{\pi}} \frac{\d}{\d\mu} K_0(\sqrt{\mu}) & \frac{-i}{\sqrt{\pi}} \frac{\d}{\d\mu} K_0(e^{-\pi i}\sqrt{\mu})
\end{pmatrix} & \arg\mu\in(\theta, \pi),
\\\\
\begin{pmatrix}
\frac{1}{\sqrt{\pi}} K_0(\sqrt{\mu}) & \frac{i}{\sqrt{\pi}} K_0(e^{\pi i}\sqrt{\mu})
\\
\frac{1}{\sqrt{\pi}} \frac{\d}{\d\mu} K_0(\sqrt{\mu}) & \frac{i}{\sqrt{\pi}} \frac{\d}{\d\mu} K_0(e^{\pi i}\sqrt{\mu})
\end{pmatrix} & \arg\mu\in(-\pi, -\theta),
\\\\
\begin{pmatrix}
\frac{1}{\sqrt{\pi}} K_0(\sqrt{\mu}) & \sqrt{\pi} I_0(\sqrt{\mu})
\\
\frac{1}{\sqrt{\pi}} \frac{\d}{\d\mu} K_0(\sqrt{\mu}) & 
\sqrt{\pi} \frac{\d}{\d\mu} I_0(\sqrt{\mu})
\end{pmatrix} & \arg\mu\in(-\theta, \theta),
\end{cases}
\]
where $I_0, K_0$ are the modified Bessel functions of index $0,$ and $\theta\in(0, \pi).$ The function $\bm{\Psi}_{\rm Bes}$ has jumps 
\[
\bm{\Psi}_{{\rm Bes}, +}(\mu) = \bm{ \Psi}_{{\rm Bes},-}(\mu) \begin{dcases}
\begin{pmatrix}
1 & i \\ 0 & 1\end{pmatrix} & \mu \in(\infty e^{\pm i\theta}, 0), \\
\begin{pmatrix}
0 & i \\ i & 0\end{pmatrix} & \mu \in(-\infty, 0). 
\end{dcases}
\]
 (see \figurename~\ref{Fig_loc_par}, middle). As $\mu\to\infty,$ the function $\bm{\Psi}_{\rm Bes}$ satisfies the asymptotics
\[
\bm{\Psi}_{\rm Bes} (\mu) = \mu^{-\sigma_3 / 4}\frac{1}{\sqrt{2}}\begin{bmatrix} 1 & 1 \\ -1 & 1 \end{bmatrix} {\bm{\mathcal{E}}}_{\rm Bes}(\mu) e^{-\sqrt{\mu}\sigma_3},
\qquad 
{\bm{\mathcal{E}}}_{\rm Bes}(\mu) = \bm{I} + \mathcal{O}(\mu^{-1/2}),\quad \mu\to\infty.
\]
 uniformly in $\arg\mu\in[-\pi, \pi]$.
\medskip
Finally, define
\[
\bm{P}_{{\rm Bes}, \eta_1}(k; x,t) = \bm{B}_{{\rm Bes}, \eta_1}(k; x,t) \bm{\Psi}_{\rm Bes}(t^2\mu(k; x/t)) e^{-i\varphi(k; x,t)\sigma_3} 
\(\sqrt{\widehat r(k)} f(k) \)^{\sigma_3}
,\quad |k - i\eta_1| < \delta, \]
where
\[
\bm{B}_{{\rm Bes}, \eta_1}(k; x,t) = \bm{W}(k; x,t) 
\(\sqrt{\widehat r(k)} f(k)e^{-i/2\,{\Omega} \Re k}\)^{-\sigma_3}
\frac{1}{\sqrt{2}}\begin{bmatrix}1 & -1 \\ 1 & 1 \end{bmatrix}(t^2\mu(k; x/t))^{\sigma_3/4}
\]
is analytic in $U_{\delta}(i\eta_1).$ The function $\bm{P}_{{\rm Bes}, \eta_1}$ has exactly the same jumps as $\bm{T}$ in $U_{\delta}(i\eta_1),$ and on the boundary $\partial U_{\delta}(i\eta_1)$ it matches with $\bm{T}$ as follows:
\begin{multline*}
\bm{T}(k; x, t) \bm{P}^{-1}_{{\rm Bes},\eta_1}(k; x,t) = \bm{W}(k; x, t) 
\(\sqrt{\widehat r(k)} f(k)e^{-i/2\,{\Omega} \Re k}\)^{-\sigma_3}
{\bm{\mathcal{E}}}_{\rm Bes}(t^2\mu(k; x/t))
\times
\\
\times
\(\sqrt{\widehat r(k)} f(k)e^{-i/2\,{\Omega} \Re k}\)^{\sigma_3}
\bm{W}^{-1}(k; x, t)
=
I + \mathcal{O}(t^{-1}),
\end{multline*}
where we again used boundedness of $\bm{W}$ on $\partial U_{\delta}(i\eta_1)$, and the fact that $i\eta_1$ is not close to $i\alpha.$

\subsubsection{Local parametrix at $k=i\alpha,$ the case $\alpha = \eta_2$}
\noindent \\
Here we introduce the local variable $\lambda = \lambda(k; x/t)$ as in Section \ref{sect_loc_par_alpha}, and use the function $\bm{\Psi}_{\rm Bes}$ from the Section \ref{sect_loc_par_eta1}, but with $\theta\in(\pi/2, 2\pi/3)$ (see also \figurename~\ref{Fig_loc_par}, right), and define a function that satisfies the same jumps as $\bm{T}$ in a disk $U_{\delta}(i\eta_2)$ as follows:
\begin{multline*}
\bm{P}_{{\rm Bes}, \eta_2}(k; x,t) = 
\bm{B}_{{\rm Bes}, \eta_2}(k; x,t) e^{-\pi i/2\, \sigma_3}
\bm{\Psi}_{\rm Bes}\(\frac49 t^2 \lambda^3(k; x/t)\) e^{\pi i/2\,\sigma_3} e^{i\varphi(k; x, t)\sigma_3}
\(\sqrt{\widehat r(k)} f(k)\)^{\sigma_3},
\\ |k - i\eta_2| < \sigma_3,
\end{multline*}
where
\[
\bm{B}_{{\rm Bes}, \eta_2}(k; x,t) = \bm{W}(k; x,t)
\(\sqrt{\widehat r(k)} f(k)\)^{-\sigma_3}
\frac{1}{\sqrt{2}}
\begin{bmatrix}
1 & -1 
\\ 1 & 1
\end{bmatrix}
\(\frac49\lambda(k; x/t)^3\)^{\sigma_3 / 4}.
\]
Then $\bm{P}_{{\rm Bes}, \eta_2}$ satisfies the same jumps as $\bm{T}$ inside the disk $U_{\delta}(i\eta_2),$ and the matching between them on $\partial U_{\delta}(i\eta_2)$ is as follows:
for $k\in\partial U_{\delta}(i\eta_2)$ as $t\to\infty,$
\begin{multline*}
\bm{P}_{{\rm Bes}, \eta_2}(k;x,t)
\bm{T}^{-1}(k;x,t)
=
\bm{W}(k; x,t)
\(\sqrt{\widehat r(k)} f(k)\)^{-\sigma_3}
(-i)^{\sigma_3}
{\bm{\mathcal{E}}}_{\rm Bes}\(\frac49 t^2 \lambda^3\)
i^{\sigma_3}
\(\sqrt{\widehat r(k)} f(k)\)^{\sigma_3}
\times
\\
\times
\bm{W}^{-1}(k; x,t)
=
\bm{I} + \mathcal{O}(t^{-1}).
\end{multline*}

\subsubsection{Local models when $\alpha= i \eta_2$ and $r(k) | k - i\eta_k|^{\pm1/2} = \bigo{1}$} 
When Assumption~\ref{Assumption:r2} is satisfied and $\alpha = \eta_2$, then the jumps of $\bm{T}(k)$ are as given in Figure~\ref{fig:lenses.hard.edges}. Because in this setting we can deform the lens counts away from all the endpoints the outer model $\bm{W}(k;x,t)$	is an exponentially accurate model
uniformly in $\C$. So local parametrices are not needed in this case. 

In the modulated elliptic region, that is, when $\alpha \in (\eta_1, \eta_2)$, the jumps of $\bm{T}$ are as given in Figure~\ref{fig:lenses.mixed.edges}. Here, a Bessel parametrix is not needed at $i\eta_1$ as the lens $\mathcal{C}_1$ remains bounded away from $i\eta_1$, but the lens must return to $
\Sigma_1$ at $i\alpha$. As such, a local Airy parametrix is required in $U_\delta(i\alpha)$. Consequently, there is no improvement to error estimates in the modulated region.

\subsection{Error analysis and conclusion of the proof of Theorem~\ref{theorem1.3}}\label{sect:error_analysis}

Define $\bm{P}(k; x,t)$ to be equal $\bm{P}_{{\rm Bes},\eta_1}(k; x, t)$ inside the disk $U_{\delta}(i\eta_1),$ to be equal $\bm{P}_{\rm Ai}(k; x,t)$ inside the disk $U_{\delta}(i\alpha)$ in the case $\alpha < \eta_2$ and $\bm{P}_{{\rm Bes},\eta_2}(k; x,t)$ inside the disk $U_{\delta}(i\alpha)$ in the case $\alpha = \eta_2.$ Furthermore, in the corresponding disks $|k+i\eta_1| < \delta,$ $|k + i\alpha| < \delta$ in the lower half plane, we define 
\[ \bm{P}(k) := \begin{bmatrix}0 & 1 \\ -1 & 0 \end{bmatrix} \bm{P}\ol{(\ol k; x,t)} \begin{bmatrix}0 & -1 \\ 1 & 0 \end{bmatrix},\]
 and define $\bm{P}(k;x,t)$ to be equal to $\bm{W}(k;x,t)$ elsewhere. Next, define the error function
\[\bm{R}(k; x,t) = \bm{S}(k; x,t) \bm{P}(k;x, t)^{-1}.\]
Lemma \ref{lem:lenses}, Proposition \ref{prop:left.estimate} and the matching properties of $\bm{P}_{\rm Ai},$ $\bm{P}_{{\rm Bes}, \eta_1}$, $\bm{P}_{{\rm Bes}, \eta_2}$ allow to conclude that 
\[ \bm{R}(k;x,t) = \bm{I} + \mathcal{O}(t^{-1})\]
 uniformly in $k$ as $t\to\infty.$  Tracing back the chain of transformations from the remainder problem $\bm{R}$ to the original RH problem~\ref{rhp:X} for $\bm{X},$ we find that 
\[
q(x,t) = \lim\limits_{k\to\infty} 2ik \bm{X}_{12}(k;x,t)
=
\lim\limits_{k\to\infty} 2ik \bm{T}_{12}(k;x,t)
=
\lim\limits_{k\to\infty} 2ik \bm{W}_{12}(k;x,t)
+
\lim\limits_{k\to\infty} 2ik \bm{R}_{12}(k;x,t)\ .
%\underbrace{\lim\limits_{k\to\infty} 2ik \bm{R}_{12}(k;x,t)}_{=\mathcal{O}(t^{-1})}.
\]
The second term here is of the order $\mathcal{O}(t^{-1})$ in view of smallness of $\bm{R},$ and the first term decomposes further into a background part and a solitonic part,
\begin{equation}\label{q.leading}
	q(x,t) = \lim_{k \to \infty} 2ik \bm{W}_{12}(k; x,t) + \mathcal{O}(t^{-1}) = q_{\mathrm{bg}}(x,t) + q_{\mathrm{sol}}(x,t) + \mathcal{O}(t^{-1}),
\end{equation}
where
\begin{align}
	\label{q.bg.recovery}
	q_{\mathrm{bg}}(x,t) &=  \lim_{k \to \infty} 2ik \bm{W}^{(0)}_{12}(k;x,t)\ , 
	\\
	\label{q.sol.1}
	q_{\mathrm{sol}}(x,t)  &= -2(c + b ) = 2( a \mathcal{Q}^{^{(\pm)}} - b) 
	   = \frac{2 \le[ 1-\le(\mathcal{Q}^{^{(\pm)}}\ri)^2 \ri] X^{^{(\pm)}}+4\mathcal{Q}^{^{(\pm)}} Y^{^{(\pm)}}}{\le(X^{^{(\pm)}}\ri)^2+\le(Y^{^{(\pm)}}\ri)^2}\ .
\end{align}

Observe that $A(i\kappa_0) \in \R$, implying that $\mathcal{Q}^{^{(\pm)}} \in \R$ and thus $Y^{^{(\pm)}} > 0$. Similarly, $X^{^{(\pm)}} \in \R$, so that $a(x,t)$ and $b(x,t)$ are well-defined for any value of $(x,t) \in \mathcal{S}_L \cup \mathcal{S}_R$. Note also that the sets $\mathcal{S}^{^{(+)}}$ and $\mathcal{S}^{^{(-)}}$ in \eqref{before.after} are defined such that $X^{^{(\pm)}} \to \infty$ as $(x,t) \to \infty$ in $\mathcal{S}^{^{(\pm)}}$ -- non-tangentially to the boundary -- due to the exponential growth/decay of the term $\chi e^{2i\varphi(i\kappa_0; x,t)}$ while $\mathcal{Q}^{^{(\pm)}}$ and $Y^{^{(\pm)}}$ remain bounded as they have no dependence on the residue coefficients, therefore 
\begin{equation}
	q_{\mathrm{sol}}(x,t) \to 0 \quad \text{ as } (x,t)  \to \infty  \text{ in any relatively compact subset of } \mathcal{S}^{^{(\pm)}}.
\end{equation}
This justifies the interpretation of the term $q_\mathrm{bg}(x,t)$ as the background solution through which the localized soliton solution $q_{\mathrm{sol}}(x,t)$ is passing. 

In order to complete the proof of Theorem~\ref{theorem1.3}, we simplify the expression in \eqref{q.bg.recovery} of the background term $q_{\mathrm{bg}}$.

\paragraph{\bf Simplifying the formul\ae \ for $q_{\mathrm{bg}}$.} We will  resort to the following identities: 
\begin{equation}
   \begin{aligned}
     &\theta_3(0;\tau)  \theta_4(z;\tau) =  \theta_4(z; 2\tau)^2 + \theta_1(z;2\tau)^2, \\
     &\theta_4(0;\tau)  \theta_3(z;\tau) =  \theta_4(z; 2\tau)^2 - \theta_1(z;2\tau)^2. 
   \end{aligned}
\end{equation}

Using the large $k$ expansion of \eqref{gamma}, the first equality in \eqref{Abel.values}, and \eqref{W0.entries}, 
we have   that \eqref{q.bg.recovery} takes the form
\begin{gather}
	\begin{aligned} \label{q.bg1}
	q_{\mathrm{bg}}(x,t)
	&= (\alpha - \eta_1) 
	\frac{ \theta_3(0; \tau )}{\theta_3(\tfrac{1}{2}; \tau )} 
	  \frac{\theta_3( \tfrac{1}{2} + \frac{{\Omega} + \Delta}{2\pi}; \tau)}{ \theta_3(\frac{{\Omega} + \Delta}{2\pi}; \tau)} 
	= (\alpha - \eta_1) 
	\frac{ \theta_3(0; \tau )}{\theta_4(0; \tau )} 
	  \frac{\theta_4( \frac{{\Omega} + \Delta}{2\pi}; \tau)}{ \theta_3(\frac{{\Omega}+ \Delta}{2\pi}; \tau)} \\
	&= (\alpha - \eta_1) 
	  \frac{\theta_4( \frac{{\Omega}+ \Delta}{2\pi}; 2\tau)^2 + \theta_1( \frac{{\Omega}+ \Delta}{2\pi}; 2\tau)^2}
	     {\theta_4( \frac{{\Omega}+ \Delta}{2\pi}; 2\tau)^2 - \theta_1( \frac{{\Omega}+ \Delta}{2\pi}; 2\tau)^2}  \\
	&= (\alpha - \eta_1) 
	   \frac{1 + \sqrt{m} \sn^2 \left( \frac{K(m)}{\pi}({\Omega}+ \Delta) , m \right)}{1 - \sqrt{m} \sn^2 \left( \frac{K(m)}{\pi}({\Omega}+ \Delta), m \right) }  \\	
	&= (\alpha - \eta_1) 
		\nd \left( \frac{K(m_1)}{\pi} ({\Omega}+ \Delta),  m_1 \right) \\
	&= (\alpha - \eta_1) \frac{ 1 + \sqrt{m}}{1-\sqrt{m}} 
		\dn \left( \frac{K(m_1)}{\pi} ({\Omega}+ \Delta + \pi),  m_1 \right)	 \\
	&= (\alpha + \eta_1) 
		\dn \left( \frac{K(m_1)}{\pi} ({\Omega}+ \Delta + \pi),  m_1 \right),	 
	\end{aligned}
\shortintertext{so that}
	\label{q.bg}
	q_{\mathrm{bg}}(x,t) = (\alpha + \eta_1) 
		\dn \left( (\alpha + \eta_1) ( x - 2(\eta_1^2+\alpha^2) t - x^{^{(\pm)}}), m_1 \right),	
	\
	(x,t) \in \mathcal{S}_M^{^{(\pm)}} \cup \mathcal{S}_R^{^{(\pm)}},	
\end{gather}
where we've used the Landen transformation \cite{Akheizer} to simplify the above expressions: 
\begin{equation}
\label{Landen}
	m_1 = \frac{4\sqrt{m}}{(1+\sqrt{m})^2} = \frac{4 \alpha \eta_1}{(\alpha+ \eta_1)^2}, \qquad (1+\sqrt{m}) K(m) = K(m_1)
\end{equation}
and
\begin{equation}\label{x.shifts.background}
	x^{^{(\pm)}} = \frac{K(m)(\Delta^{^{(\pm)}} - \pi)}{\alpha \pi} = \frac{ K(m_1) \left( \displaystyle \frac{\Delta^{^{(\pm)}}}{\pi} - 1 \right)}{\alpha + \eta_1}.
\end{equation}

It's clear from the above formulae that the effect of the soliton passing through the background elliptic wave solution is to induce a phase change in the background solution which we can compute in terms of simple objects on the Riemann surface $\mathfrak{X}$ \eqref{surface}  which defines $q_{\rm bg}$. 

\begin{prop}\label{prop51}
	Fix $x/t > 4 \eta_1^2$, then the phase shift in the soliton gas induced by interaction with the trial soliton travelling with initial velocity $4\kappa_0^2$ is given by 
	\[
		x^{^{(-)}} - x^{^{(+)}} = -\frac{2K(m)}{\alpha} \le( 1+  4A(i\kappa_0) \ri) \in \left(-\frac{2 K(m_1)}{\alpha+\eta_1}, \ 0 \right),  
	\]
	where $A(k)$ is the Abel map on the Riemann surface given by \eqref{abel}. 
\end{prop}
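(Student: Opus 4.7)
The plan is to prove the identity by differentiating both sides with respect to $\kappa_0$ (keeping $(x,t)$, and hence $\alpha$, fixed), showing that the derivatives agree, and matching a boundary value as $\kappa_0\to\infty$.

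First I would reduce the problem to an integral of a logarithm. From \eqref{x.shifts.background},
\[
x^{(-)}-x^{(+)} \;=\; \frac{K(m)}{\alpha\pi}\bigl(\Delta^{(-)}-\Delta^{(+)}\bigr),
\]
and from \eqref{Delta} the difference $\Delta^{(-)}-\Delta^{(+)}$ is an explicit constant multiple of
\[
J(\kappa_0) \;:=\; \int_{\Sigma_{1,\alpha}}\frac{\log\frac{i\kappa_0-s}{i\kappa_0+s}}{R_+(s)}\,ds,
\]
the constant being made explicit by the elementary calculation $\int_0^{i\eta_1}ds/R(s)=iK(m)/\alpha$. The claim of the proposition is thus equivalent to expressing $J(\kappa_0)$ in closed form in terms of the Abel map $A(i\kappa_0)$.

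Differentiating under the integral sign gives
\[
J'(\kappa_0) \;=\; -2i\int_{\Sigma_{1,\alpha}}\frac{s\,ds}{(s^2+\kappa_0^2)R_+(s)},
\]
and parametrizing $s=it$ with $t\in(\eta_1,\alpha)$, followed by $u=t^2$, this reduces (up to a sign fixed by the orientation convention for $R_+$) to $\int_{\eta_1^2}^{\alpha^2}\frac{du}{(\kappa_0^2-u)\sqrt{(u-\eta_1^2)(\alpha^2-u)}}$. The further substitution $u=\eta_1^2+(\alpha^2-\eta_1^2)\sin^2\phi$ then converts this into the elementary integral $\int_0^{\pi/2}\frac{2\,d\phi}{(\kappa_0^2-\eta_1^2)-(\alpha^2-\eta_1^2)\sin^2\phi}$, whose value is $\pi/\sqrt{(\kappa_0^2-\eta_1^2)(\kappa_0^2-\alpha^2)}$ by the standard identity $\int_0^{\pi/2}d\phi/(1-k^2\sin^2\phi)=\pi/(2\sqrt{1-k^2})$. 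Independently, from \eqref{eq:omegadiff} and the definition \eqref{Root} of $R$, I get $\od{A(i\kappa_0)}{\kappa_0} = \alpha/\bigl(4K(m)R(i\kappa_0)\bigr)$, and because the normalization $R(k)\sim k^2$ at infinity forces $R(i\kappa_0)=-\sqrt{(\kappa_0^2-\eta_1^2)(\kappa_0^2-\alpha^2)}$, the derivative of $-\tfrac{2K(m)}{\alpha}(1+4A(i\kappa_0))$ equals $2/\sqrt{(\kappa_0^2-\eta_1^2)(\kappa_0^2-\alpha^2)}$, matching the derivative of the left-hand side.

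The constant of integration is fixed by sending $\kappa_0\to\infty$: on the LHS $\log\frac{i\kappa_0-s}{i\kappa_0+s}\to0$ uniformly on $\Sigma_{1,\alpha}$ (by dominated convergence), so $J\to0$ and hence $x^{(-)}-x^{(+)}\to0$; on the RHS $A(i\kappa_0)\to A(\infty)=-1/4$ by \eqref{Abel.values}, so $1+4A(i\kappa_0)\to0$. Since both sides vanish at infinity and have the same derivative on $(\alpha,\infty)$, they agree throughout. The range $(-2K(m_1)/(\alpha+\eta_1),\,0)$ then follows by evaluating the RHS at the opposite endpoint $\kappa_0\to\alpha^+$: there $A(i\alpha)=0$, giving the value $-2K(m)/\alpha$, which equals $-2K(m_1)/(\alpha+\eta_1)$ via the Landen relation $(\alpha+\eta_1)K(m)=\alpha K(m_1)$, and monotonicity of $\kappa_0\mapsto A(i\kappa_0)$ (immediate from the sign of $A'$ computed above) produces the open interval. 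The one point requiring real care is a consistent bookkeeping of the sign of $R_+$ on $\Sigma_{1,\alpha}$ — fixed by the $g$-function construction of Section~\ref{sec-g.function} — which enters both in $J'$ and in the proportionality $\Delta^{(-)}-\Delta^{(+)}\propto J$, but cancels in the final formula.
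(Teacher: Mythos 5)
Your proposal is correct and follows essentially the same route as the paper: both reduce $x^{^{(-)}}-x^{^{(+)}}$ to the logarithmic integral coming from $\Delta^{^{(-)}}-\Delta^{^{(+)}}$, differentiate in the soliton eigenvalue so that the integral collapses to the (normalized) holomorphic differential, i.e.\ to $A'$, and then fix the constant of integration by sending $\kappa_0\to\infty$, where both sides vanish via $A(\infty)=-\tfrac14$. The only differences are cosmetic — you evaluate the derivative integral by an explicit trigonometric substitution where the paper uses residues, and you add the endpoint/monotonicity argument for the interval claim — though note that the sign of $R_+$ on $\Sigma_{1,\alpha}$ does not literally ``cancel'': the proportionality constant $2i\bigl(\int_0^{i\eta_1}\d s/R(s)\bigr)^{-1}$ involves $R$ on the gap rather than $R_+$ on the band, so that single sign must still be pinned down from the normalization $R(k)\sim k^2$ at infinity.
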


\begin{proof}
	By fixing $x/t > 4\eta_1^2$, we consider a line in space-time where the background solution is an elliptic wave with fixed parameter $\alpha = \alpha(x/t)$.
	Starting from \eqref{x.shifts.background} we have
	\[
		x^{^{(-)}} - x^{^{(+)}} = \frac{(\Delta^{^{(-)}} - \Delta^{^{(+)}})K(m)}{\alpha \pi} = \frac{K(m)}{\alpha \pi} h(i\kappa_0),
	\]
	where, using \eqref{Delta}, the function $h(k)$ is given by
	\begin{equation}\label{h-int-form}	
		h(k) = -4i \int_{\Sigma_{1,\alpha} \cup \Sigma_{2,\alpha}} \ln \left( \frac{ k -  s}{k+s} \right)  \omega(s),
	\end{equation} 
	where $\omega(k)$ is the normalized holomorphic differential on the genus one Riemann surface $\mathfrak{X}$ introduced in \eqref{eq:omegadiff}.
	By differentiating in $k$, and evaluating the resulting integral using residues, one finds that
\[	
h'(k) = -8\pi \od{\omega}{k} = -8\pi A'(k) \ , \text{ i.e. } h(k) = h(\infty) + 8\pi \left(A(\infty) - A(k) \right)\ .
\]
	The result is completed by observing that $h(\infty) = 0$ from \eqref{h-int-form}  and that $A(\infty) = -\tfrac{1}{4}$ from \eqref{Abel.values}. 
\end{proof}
We conclude the proof of Theorem~\ref{theorem1.3} by calculating the behaviour of the soliton $q_\mathrm{sol}$ for $\kappa_0\gg\eta_2$.

\paragraph{\bf Simplifying $q_\mathrm{sol}$ in the limit as $\kappa_0 \to \infty$. }\label{sec:kappa.to.infty}

The expression for $q_\mathrm{sol}$ can be greatly simplified if we assume that $\kappa_0 \gg \eta_2$.
In order to get something nontrivial when $\kappa_0 \to \infty$ we use \eqref{x0} to write the norming constant in the form $\chi = 2\sgn(\chi) \kappa_0 e^{-2\kappa_0 x_0}$. Then using \eqref{phi.def} we have 
\begin{equation}
	\chi e^{-2i\varphi(i\kappa_0; x,t)} = 2\sgn(\chi) \kappa_0 e^{ 2\kappa_0( x- x_0 - 4 \kappa_0^2 t)} \left[1 - \frac{2 g_{-1}}{\kappa_0} + \bigo{\kappa_0^{-2}} \right],
\end{equation}
where 
\begin{equation}
	\begin{aligned}
	g_{-1} 
	  &= \frac{1}{2\pi i} \int\limits_{\Sigma_{1,\alpha } \cup \Sigma_{2, \alpha}} \frac{s^2(2xs + 8 t s^3)}{R_+(s)} \, \d s
	     + \frac{ {\Omega}}{2\pi i } \int_{-i\eta_1}^{i \eta_1} \frac{s^2 }{R(s)} \, \d s \\
	  & = \frac{\alpha^2 + \eta_1^2}{2} \left( x + 2(\alpha^2+\eta_1^2)^2 t \right) + \frac{(\alpha^2-\eta_1^2)^2}{2}t
	     + \frac{ {\Omega} \alpha}{\pi  } ( E(m) - K(m)) \\
	&= -\frac{\alpha^2 - \eta_1^2}{2}  \left( x - 3(3\alpha^2+ \eta_1^2) t \right) \ , 
	\end{aligned}
\end{equation}
where in the last equality we used \eqref{omega.1} and \eqref{whitham.00}.

From the expansions, 
\begin{equation}
\begin{aligned}
	&A(i \kappa_0) = -\frac{1}{4} + \frac{1}{4 K(m)} \frac{\alpha}{\kappa_0} + \bigo{\kappa_0^{-3}} \ ,\\
	&\gamma^2(i\kappa_0) = 1 - \frac{\alpha - \eta_1}{\kappa_0} + \bigo{\kappa_0^{-2}} \ ,
\end{aligned}
\end{equation}
it follows that  $\mathcal{Q}^{^{(-)}}(x,t)$ and  $Y^{^{(-)}}(x,t)$  in  \eqref{Q_minus_4.16}    and \eqref{XY.sol.-}  simplify to 
\begin{equation}
\begin{aligned}\label{Q.large.kappa}
	\mathcal{Q}^{^{(-)}}(x,t)  &= - \frac{\alpha - \eta_1}{2\kappa_0}  \frac{ \theta_3(0; \tau )}{\theta_3(\tfrac{1}{2}; \tau )} 
	  \frac{\theta_3( \tfrac{1}{2} + \frac{{\Omega} + \Delta}{2\pi}; \tau)}{ \theta_3(\frac{{\Omega} + \Delta}{2\pi}; \tau)}  + \bigo{\kappa_0^{-2}}\\
	  &= -\frac{1}{2\kappa_0} q_{\mathrm{bg}}(x,t) + \bigo{\kappa_0^{-2}} \ , \\
	 Y^{^{(-)}}(x,t) &= \frac{1}{2\kappa_0}\left[ 1 +  \frac{ q_\mathrm{bg}(x,t)^2}{4\kappa_0^2} + \bigo{\kappa_0^{-3}} \right]\ . \\
\end{aligned}
\end{equation}	
Similarly, 
\begin{equation}
\begin{aligned}\label{Qk.large.kappa}
	&\mathcal{Q}^{^{(-)}}_{\kappa_0} (x,t)
	=   \frac{(\alpha -\eta_1)}{\kappa_0^2} 
	   \frac{\theta_3(0; \tau )}{\theta_3( \tfrac{1}{2}; \tau )} 
	   \frac{ \theta_3(\tfrac{1}{2} + \frac{{\Omega} + \Delta}{2\pi}; \tau)} { \theta_3( \frac{{\Omega} + \Delta}{2\pi}; \tau)}
	   + \bigo{\kappa_0^{-3}}
	= \frac{q_\mathrm{bg}(x,t)}{\kappa_0^2} + \bigo{\kappa_0^{-3}} \ , \\
	&w_{22}(x,t) = 1 +  \frac{1}{4K(m)}\left[ \frac{ \theta_3'(\frac{{\Omega}+  \Delta}{2\pi}; \tau)}{\theta_3(\frac{{\Omega }+  \Delta}{2\pi}; \tau)}  - \frac{ \theta_3'(0;\tau) }{\theta_3(0;\tau)} \right] \frac{\alpha}{\kappa_0} + \bigo{\kappa_0^{-2}} 
\end{aligned}
\end{equation}
imply
\begin{equation}
	X^{^{(-)}}(x,t) =  \frac{\sgn(\chi) e^{-2\kappa_0(x-x_0-4\kappa_0 t)}}{2\kappa_0}\left[1 + \frac{X^{(1)}}{\kappa_0} \right]  + 
	\frac{q_\mathrm{bg}(x,t)}{\kappa_0^2} +  \bigo{\kappa_0^{-3}},
\end{equation}
where $X^{(1)}$ contains the corrections of order $\bigo{\kappa_0^{-1}}$  coming from $f(i\kappa_0)^2$,  $w_{22}^2$, and $e^{i \varphi(i\kappa_0)}$.

Including all terms at leading order we find that as $\kappa_0 \to \infty$
\begin{equation}
\begin{aligned}
	q_\mathrm{sol}(x,t)&= \frac{4 \sgn(\chi) \kappa_0 e^{-2\kappa_0(x-x_0 - 4\kappa_0^2 t)}}{1 + e^{-4\kappa_0(x-x_0 - 4\kappa_0^2 t)}} + \bigo{1}  \\
	&= 2 \sgn(\chi) \kappa_0 \sech(2\kappa_0(x-x_0 - 4\kappa_0^2 t)) + \bigo{1}\ ,
	\end{aligned}
\end{equation}
where generically the $\bigo{1}$ term is a complicated non-vanishing expression.

\section{Asymptotic description of the interaction dynamics  and proof of Theorem~\ref{theorem1.4}}
\label{sec-gas}

After establishing the asymptotic expansion of the soliton+gas solution in \eqref{q.leading} for large times, we now focus on better understanding the interaction dynamics taking place between the soliton gas and the large soliton passing through it.
We will   determine the speed of the soliton on an elliptic background and then we will  identify the location of the soliton peak up to corrections of order $\mathcal {O}(t^{-1})$.

\subsection{The speed of a soliton on a genus-1 background}\label{sec:speed_sol_genus1}

Once a large parameter is introduced --be it large time or large $|\chi|$-- the soliton must lie along a spacetime curve $(x(t),t)$ satisfying 
\[
	\varphi(i \kappa_0;x(t),t)  = \text{constant (of order $1$);}
\]
an analogous argument holds for the soliton gas. Indeed, if this is the case, implicit differentiation automatically gives the value of the phase velocity of the soliton (resp. soliton gas) 
\[
v_{\rm phase} := x'(t) = -\frac{\frac{\d \varphi}{\d t}}{\frac{\d \varphi}{\d x}}\ .
\]
Using the representation \eqref{varphi}, $\varphi(i \kappa_0; x,t)$ can be expressed in terms of elliptic integrals as 
\begin{equation}\label{phi.at.kappa}
	\varphi(i\kappa_0; x,t) = R(i \kappa_0) \left[ 4t i \kappa_0  + \frac{1}{i\kappa_0} \frac{\Pi\left(\frac{\eta_1^2}{\kappa_0^2}, \frac{\eta_1^2}{\alpha^2} \right)}{K\le(\frac{\eta_1^2}{\alpha^2}\ri)} \left( x - 2(\eta_1^2 + \alpha^2) t \right) \right]\ .
\end{equation}

So, the average soliton velocity is given by 
\begin{gather}
	\bar{v}_\mathrm{sol}(\kappa_0) 
	= -\frac{\partial_t \varphi(i\kappa_0;x,t) + \alpha_t\,  \partial_\alpha \varphi(i \kappa_0;x,t)}
	      {\partial_x \varphi(i\kappa_0;x,t) + \alpha_x \, \partial_\alpha \varphi(i \kappa_0;x,t)}
	  =    -\frac{ \varphi_2(i\kappa_0; x,t) }
	      { \varphi_0(i\kappa_0; x,t) } 
	      \end{gather}
since $\partial_\alpha \varphi(k; x,t) \equiv 0$ for any $\alpha$ satisfying the Whitham evolution equation (see Remark \ref{Dalpha=0}); thus, from \eqref{phi.at.kappa}, we have
\begin{gather}
	\bar{v}_\mathrm{sol} (\kappa_0)
	= 4 \kappa_0^2 \frac{K\le(\frac{\eta_1^2}{\alpha^2}\ri)}{\Pi \left(\frac{\eta_1^2}{\kappa_0^2},\frac{\eta_1^2}{\alpha^2} \right)} +  2(\eta_1^2+\alpha^2) \ , \qquad \text{for all } x > 4\eta_1^2 t\ ,
	\label{v.sol}
\end{gather}
where we can recognize the velocity of the elliptic background solution $v_{\rm bg} = 2(\eta_1^2+\alpha^2)$.

We also recall the expression of the function $\varphi$ in terms of the $g$-function (see formula \eqref{varphi.3b}): 
\[ 
\varphi(k;x,t) = g(k;x,t) + x k + 4 t k^3 =  \int\limits_{\Sigma_{1,\alpha} \cup \Sigma_{2,\alpha}} \ln (k-s) \, \rho(s;x,t)\, \d s  + xk + 4t k^3
\]
where 
\[
\rho(k;x,t) = \frac{1}{2\pi i} \frac{(k^2+\alpha^2) \le(x-6t(\alpha^2-\eta_1^2  - 2k^2)\ri)}{R(k)}\ .
\] 
By enforcing $\frac{\d}{\d t}\varphi(k;x(t),t) |_{k = i \kappa_0} =0$ (and recalling that the density $\rho$ vanishes at the end points), we obtain
\[
 \int\limits_{\Sigma_{1,\alpha} \cup \Sigma_{2,\alpha} } \ln (i\kappa_0-s) \, \Big( \rho_t(s) + \rho_x(s) x'(t)\Big)\, \d s  + i\kappa_0 x'(t) + 4 (i\kappa_0)^3 =0\, ,
\]
which, solving for $\bar{v}_{\rm sol}(\kappa_0) = x'(t)|_{k=i \kappa_0}$, yields
\begin{gather}
\bar v_{\rm sol}(\kappa_0) = 4\kappa_0^2 + \frac{1}{i \kappa_0}  \int_{\Sigma_{1,\alpha} }\ln \le| \frac{i\kappa_0-s}{i \kappa_0 +s}\ri| \, \le( v_{\rm group}(s) - \bar v_{\rm sol}(\kappa_0)\ri) \rho_x(s)\, \d s \, ,
\end{gather}
where $v_{\rm group} := -\frac{\rho_t}{\rho_x}$ is the group velocity of the genus-1 background wave and the integral term is real, as $\rho_x \in i\R$.

Furthermore, for $k \in \Sigma_{1,\alpha} $, we also have 
\[
g_+(k) + g_-(k) = 2\int\limits_{\Sigma_{1,\alpha}} \ln\le|  \frac{k-s}{k+s}\ri| \, \rho(s) \, \d s = -8k^3t - 2kx \ .
\]
By taking the derivative of the $g$ function with respect to $t$ and $x$ separately, 
\[
\int\limits_{\Sigma_{1,\alpha}} \ln\le|  \frac{k-s}{k+s}\ri| \, \rho_t(s) \, \d s = -4k^3 \ , \qquad \int\limits_{\Sigma_{1,\alpha}} \ln\le|  \frac{k-s}{k+s}\ri| \, \rho_x(s) \, \d s = - k
\]
and by simple algebra manipulations, we obtain that
\begin{gather}\label{group_vel_kinetic}
v_{\rm group}(k) = -4k^2 + \frac{1}{k}\int_{\Sigma_{1,\alpha}} \ln\le| \frac{k-s}{k+s}\ri| \le( v_{\rm group}(s) - v_{\rm group}(k)\ri) \rho_x(s) \, \d s\ ,
\end{gather}
where we notice that since $k\in i\R_+$ the term $-4k^2>0$ and the second term in the expression above is real ($\rho_x \in i\R$).

\subsection{Locating the soliton peak} \label{subsec:soliton.peak}
{
For simplicity, we will consider here the case of a trial soliton ($\chi >0$). The case of a trial anti-soliton ($\chi<0$) is analogous, but the roles of the critical points $X_1, X_2$ defined by \eqref{crit_points_X1,2} below are reversed.
Suppose  $x_{\rm peak}(t)$ is a function tracking the position of the (peak of the) trial soliton as it evolves in time.
Before the soliton enters the modulated elliptic region, we have $x_{\rm peak}(t) = x_0 + 4\kappa_0^2 t$, which coincides with the position of a trial soliton moving in the vacuum (i.e. without the presence of a gas).
As we want to analyze the situation where the trial soliton is interacting with the modulated elliptic wave, we consider the region of space-time such that
\begin{equation}\label{into_modulated_elliptic_region}
\begin{aligned}
&4 \eta_{1}^{2} + \varepsilon < \frac{x}{t} < v_{2} - \varepsilon, \\
& 4 \eta_{1}^{2} + \varepsilon < \frac{x_{0}}{t} + 4 \kappa_{0}^{2} ,
\end{aligned}
\end{equation}
The first condition defines the region of space time in which the gas behaves as a modulated elliptic wave  (here $\varepsilon$ is a positive constant ensuring that we do not encounter any delicate transitory phenomena occurring at the boundary of the modulated elliptic region) while the second ensures that enough time has passed that the soliton initially at position $x_0\ll -1$ has traversed the quiescent region and has entered the modulated wave region of the soliton gas.  Then we identify two times $t_1$ and $t_2$ characterized by the soliton peak entering and leaving the modulated wave region respectively:
\[ \Big( t_1 := \frac{-x_0}{4(\kappa_0^2-\eta_1^2)}, \ x_1 := 4\eta_1^2 t_1 \Big) 
\qquad \text{and} \qquad 
\Big( t_2, x_{\rm peak}(\, t_2\, )  \Big) \]
for some $ t_2 >  t_1$ such that 
\[ \frac{x_{\rm peak}( \, t_2 \,) }{ t_2} = v_2 \ ,\]
where $v_2$ is the solution of the Whitham equation for $\alpha=\eta_2^2$ (see \eqref{Whitham W}).
For $t>t_2$, the soliton enters the constant elliptic wave region. The same arguments that will be explained below still hold, but the quantity $\alpha$ will be constant for all values of $x/t > v_2$.

Computing the position of the soliton peak as it travels through the elliptic background is complicated by the existence of the many local extrema of the background; computing the maximum as a roots of $q_x = 0$ is inefficient as this also characterizes the (infinitely many) local extrema of the background. 
To account for this, we introduce the following change of coordinates which put us into a slowly varying reference frame, following the characteristic lines of the phase of the background wave. Evolving along these characteristics the background wave is slowly varying. 
This will allow us to identify the unique global maximum (i.e. the soliton peak) via straightforward calculations. 
For $x$ and $t$ satisfying inequalities \eqref{into_modulated_elliptic_region}, we introduce the following coordinate system: 
\begin{equation}\label{change_coords}
	\begin{aligned}
		s(x,t) = s  &:= {\Omega}(x,t) + \Delta(x/t)\ ,  \\ 
		\tau(t) = \tau &:= t.
	\end{aligned} 
\end{equation}

\begin{lemma}\label{lem:coords}
The mapping \eqref{change_coords} is well defined for $(x,t)$  satisfying \eqref{into_modulated_elliptic_region},  and it is invertible:
\[
\det \frac{\partial (s,\tau)}{\partial (x,t)} = \frac{\partial s}{\partial x} \neq 0 \ .
\]
\end{lemma}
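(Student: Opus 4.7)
The plan is to verify well-definedness, reduce the Jacobian to a single partial derivative by exploiting the triangular structure, and then compute that derivative using the Whitham relation to kill the nontrivial term.

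First, I would establish well-definedness. In the region \eqref{into_modulated_elliptic_region}, Theorem~\ref{thm:g} provides a unique $\alpha(x/t) \in (\eta_1, \eta_2)$ solving the Whitham equation \eqref{Whitham}, so the interval $\Sigma_{1,\alpha}$ is non-degenerate and the integrals defining $\Omega$ via \eqref{omega.1} and $\Delta$ via \eqref{Delta} converge (using that $r$ is continuous and positive on $\Sigma_1$). Since $\tau = t$, the Jacobian takes the triangular form
\[
  \frac{\partial(s,\tau)}{\partial(x,t)} = \begin{pmatrix} \partial_x s & \partial_t s \\ 0 & 1 \end{pmatrix},
\]
so $\det \partial(s,\tau)/\partial(x,t) = \partial_x s$, and invertibility reduces to nonvanishing of this one quantity.

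Next, I would compute $\partial_x s$. Viewing $\Omega$ as $\Omega(x,t,\alpha)$ with $\alpha = \alpha(x/t)$ and applying the chain rule gives
\[
  \partial_x \Omega = (\partial_x \Omega)_\alpha + (\partial_\alpha \Omega)_{x,t} \cdot \frac{\alpha'(x/t)}{t}.
\]
The critical observation is that Remark~\ref{Dalpha=0} forces $\partial_\alpha \varphi(k;x,t) \equiv 0$ on the Whitham locus; combined with \eqref{omega.phi.ointB}, which identifies $\Omega$ with the jump $\varphi_-(k) - \varphi_+(k)$ for $k \in (-i\eta_1, i\eta_1)$, this yields $(\partial_\alpha \Omega)_{x,t} \equiv 0$ there, killing the second term. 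Direct differentiation of \eqref{omega.1} gives $(\partial_x \Omega)_\alpha = \pi\alpha/K(m)$. Since $\Delta$ depends on $(x,t)$ only through $\xi = x/t$, we have $\partial_x \Delta = \Delta'(\xi)/t$, so
\[
  \partial_x s = \frac{\pi \alpha(x/t)}{K(m)} + \frac{\Delta'(x/t)}{t}.
\]

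Finally, I would argue strict positivity in the region of interest. The second inequality in \eqref{into_modulated_elliptic_region}, together with $x_0 \ll -1$, forces $t \geq |x_0|/(4\kappa_0^2 - 4\eta_1^2 - \varepsilon)$, so $t$ is large in the asymptotic regime. On the compact strip $x/t \in [4\eta_1^2+\varepsilon,\, v_2 - \varepsilon]$, the leading term $\pi\alpha/K(m)$ is bounded below by a strictly positive constant (as $\alpha > \eta_1 > 0$ and $K(m)$ is bounded), while $\Delta'(x/t)$ is uniformly bounded; hence $\partial_x s = \pi\alpha/K(m) + \mathcal{O}(t^{-1}) > 0$ for $|x_0|$ sufficiently large. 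Strict monotonicity of $s$ in $x$ at fixed $t$, combined with $\tau = t$, then promotes the local diffeomorphism statement to global injectivity. The main subtlety in the argument — beyond bookkeeping — is justifying the vanishing of $\partial_\alpha \Omega$ on the Whitham locus by tying $\Omega$ to the $\mathcal{B}$-period of $d\varphi_0$ as in Section~\ref{g_function_abelian} so that Remark~\ref{Dalpha=0} applies; everything else reduces to direct computation.
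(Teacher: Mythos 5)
Your proof is correct and follows essentially the same route as the paper: both reduce the Jacobian to $\partial_x s$, use $\partial_\alpha\varphi\equiv 0$ on the Whitham locus (via Remark~\ref{Dalpha=0} and \eqref{omega.phi.ointB}) to conclude $\partial_\alpha\Omega=0$, and arrive at $\partial_x s = \pi\alpha/K(m) + \Delta'(v)/t > 0$ for large $t$. The extra detail you supply (the triangular Jacobian structure and the observation that the second inequality in \eqref{into_modulated_elliptic_region} forces $t$ to be large) is consistent with, and slightly more explicit than, the paper's argument.
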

\begin{proof}
The inequalities \eqref{into_modulated_elliptic_region} guarantee that ${\Omega}$ is well defined, therefore the change of coordinates makes sense. 
From \eqref{omega.phi.ointB} and the fact that $\partial_\alpha \varphi (k;x,t) \equiv 0$ for $\alpha$ satisfying the Whitham equation, we have $\partial_\alpha \Omega = 0$, therefore 
\begin{equation}
\begin{aligned}
\frac{\partial s}{\partial x} &= \frac{\partial }{\partial x} \le[ t \Omega_{2} (\alpha(v)) + x \Omega_{0}(\alpha(v)) + \Delta(v) \ri] \\
&=  \Omega_0(\alpha(v)) + \frac{1}{t}\Delta'(v) = \frac{\pi \alpha(v)}{K(m)}+ \frac{1}{t}\Delta'(v)\ .
\end{aligned}
\end{equation}
where $'$ refers to the derivative with respect to the variable $v := x/t$.
Notice now that $\Omega_0 = \frac{\pi \alpha}{K(m)}>0$ for $m \in [0,1)$ (see \eqref{omega.1}) and $\Delta(v)$ is bounded, therefore $\frac{\partial s}{\partial x} >0$, for large enough $t$ and $\alpha >\eta_1$. 
\end{proof}

At $t=t_1$ and for any $x> x_1$, we can identify a corresponding value of $ s = s(x, t_1)$ and follow the characteristic line $ s = $ constant. 
By keeping the value of $s$ fixed, we then seek the maximizer $\tau(s)$ for the leading order behavior \eqref{q.leading} for the solution $q(s,\tau)$.
It is clear that as we change $\tau$, the quantity $x$, as well as $\alpha = \alpha(x(s,\tau)/\tau)$, deform so as to hold $s$ fixed, therefore in the physical coordinates $(x,t)$, this setting will account for the background gas $q_{\rm bg}$ to be slowly varying, while the soliton $q_{\rm sol}$ is passing through it.
}

 The change of coordinates \eqref{change_coords} gives  
\begin{equation}\label{coordinate.derivatives}
\begin{aligned}
	&\pd{x}{s} = \frac{1}{\Omega_x + t^{-1} \Delta'}, && \pd{t}{s} =0 & \qquad && &\pd{}{s} = \frac{1}{{{\Omega}_x} + t^{-1} \Delta'} \frac{\partial}{\partial x} \ ,      \\
	&\pd{x}{\tau} = -\frac{\Omega_t - t^{-1} v \Delta'}{\Omega_x + t^{-1} \Delta'}, && \pd{t}{\tau} =1
	&\qquad && &\pd{}{\tau} = \pd{}{t} + \pd{x}{\tau} \pd{}{x}
\end{aligned}
\end{equation}
In particular, $\partial_\tau v =  \tau^{-1} \le( - v + x_\tau \ri) = \bigo{\tau^{-1}}$.

From the explicit formula of $q_{\rm bg}$ \eqref{q.bg.recovery}  and $q_{\rm sol}$ \eqref{q.sol.1}  and the RH problem \ref{rhp:W}, it is clear that the background and soliton components of \eqref{q.leading} can be written as
\begin{equation}\label{F_sol_bg}
	q_{\mathrm{bg}}(x,t) = F_\mathrm{bg}(s, \alpha(v))\ , 
	\qquad
	q_{\mathrm{sol}}(x,t) = F_\mathrm{sol} (s, \alpha(v), \hat \chi (x,t))\ ,
\end{equation}
where $\hat \chi (x,t) := \chi f_-(i\kappa_0;\alpha(v))^2 e^{-2i\varphi(i\kappa_0;x,t)}$.

Following similar calculations as in the Section above, we can simplify the expression \eqref{Q_minus_4.16} of $\mathcal{Q}^{^{(-)}}(x,t)$ appearing in the definition of $q_{\rm sol}$:
\begin{gather}
\begin{multlined}
	\mathcal{Q}^{^{(-)}}(x,t) 
	= \frac{ (\gamma(i \kappa_0)^2 -1) }{(\gamma(i \kappa_0)^2 + 1)} \frac{ \theta_3(A(i\kappa_0) - \tfrac{1}{4} - \tfrac{ {\Omega} + \Delta}{2\pi}; \tau) }{ \theta_4( A(i\kappa_0) - \tfrac{1}{4} - \tfrac{ {\Omega} + \Delta}{2\pi}; \tau)}  \frac{ \theta_3( A(i\kappa_0) + \tfrac{1}{4} ; \tau) }{ \theta_4( A(i\kappa_0) + \tfrac{1}{4} ; \tau ) } \\
	=  \frac{( \gamma(i \kappa_0)^2 -1 )}{(\gamma(i \kappa_0)^2 + 1)} \frac{ \theta_3(0;\tau)^2}{\theta_4(0;\tau)^2} 
	\frac{1 - \sqrt{m} \sn^2 \left( 2K(m)(A(i\kappa_0) - \tfrac{1}{4} - \tfrac{{\Omega}+\Delta}{2\pi}), m \right) }
	  {1 + \sqrt{m} \sn^2 \left( 2K(m)(A(i\kappa_0) - \tfrac{1}{4} - \tfrac{{\Omega}+\Delta}{2\pi}) , m \right)}  
	\frac{1 - \sqrt{m} \sn^2 \left( 2K(m)(A(i\kappa_0) + \tfrac{1}{4} ), m \right) }
	  {1 + \sqrt{m} \sn^2 \left( 2K(m)(A(i\kappa_0) + \tfrac{1}{4} ) , m \right)}  
\end{multlined} \nonumber 
\shortintertext{which simplifies to}
\label{Q- decent}
	\mathcal{Q}^{^{(-)}}(x,t) 
	=  \frac{( \gamma(i \kappa_0)^2 -1) }{(\gamma(i \kappa_0)^2 + 1)} \frac{( \alpha + \eta_1)}{(\alpha - \eta_1)} 
	\dn \Big( 2K(m_1) ( A(i\kappa_0) + \tfrac{1}{4} ), m_1 \Big)
	\dn \left( 2K(m_1) ( A(i\kappa_0) - \tfrac{1}{4} - \tfrac{ {\Omega} + \Delta}{2\pi} ), m_1 \right).
\end{gather}
The first elliptic function $\dn$ in the expression above is a slowly evolving envelope oscillation and since $-\tfrac{1}{4} < A(i\kappa_0)  \leq 0$,
$\dn \left( 2K(m_1) ( A(i\kappa_0) + \tfrac{1}{4} ), m_1 \right) \in \left( (1-m_1)^{1/4}, \ 1 \right)$. The second $\dn$ has a period of $\bigo{1}$, therefore $\dn \left( 2K(m_1) ( A(i\kappa_0) - \tfrac{1}{4} - \tfrac{ {\Omega} + \Delta}{2\pi} ), m_1 \right) \in \le[ \sqrt{1-m_1} , 1\ri]$.
Since $0 < \gamma(i\kappa_0) < 1$ and $\alpha > \eta_1$, it follows that $\mathcal Q^{^{(-)}}(x,t)$ is strictly negative; more precisely,
\begin{equation}
	\mathcal{Q}^{^{(-)}}(x,t) \in 
 \left( - \frac{ 1- \gamma(i \kappa_0)^2}{1+ \gamma(i \kappa_0)^2}  \left( \frac{ \alpha + \eta_1}{\alpha - \eta_1} \right), - \frac{ 1- \gamma(i \kappa_0)^2}{1+ \gamma(i \kappa_0)^2}  \sqrt{ \frac{ \alpha - \eta_1}{\alpha + \eta_1} }\, \right) \ .
	 \end{equation}
	Furthermore, we notice that while the right endpoint of the range of $\mathcal Q^{^{(-)}}$ is bounded within the interval $(-1,0)$ for $\kappa_0 >\alpha$ and $\alpha \in (\eta_1,\eta_2]$, the left endpoint is a strictly decreasing function of $\kappa_0$ for $\kappa_0> \alpha$, with absolute minimum $-\frac{ \alpha + \eta_1}{\alpha - \eta_1} <-1$ for $\kappa_0 \to \alpha$.
	Therefore, there exists a unique value $\kappa_{\rm crit}$ such that for any $\kappa_0 < \kappa_{\rm crit}$, the equation
	\[
		\mathcal Q^{^{(-)}}(x,t) = -1
	\]
	will have two solutions per period. Such a value can be computed explicitly:
	\begin{equation}\label{kappa.c.Q-1}
		\kappa_{\rm crit} = \alpha \left[ \frac{ 1 + m + \sqrt{ (1+m)^2 +4\sqrt{m}(1+\sqrt{m})^2} }{2(1+\sqrt{m}) } \right]\ .
	\end{equation}

From now on we will assume $\kappa_0 > \kappa_{\rm crit}$. Applying the change of variables \eqref{change_coords} to \eqref{F_sol_bg}, we have 
\begin{equation}
\begin{aligned}
	&\pd{q_{\mathrm{bg}} } {\tau} = \partial_\tau{v} \, \pd{\alpha}{v} \pd{F_{\mathrm{bg}} }{\alpha} = \bigo{\tau^{-1}}, \\
	&\pd{q_{\mathrm{sol}} } {\tau} = \partial_{\tau}v \, \pd{\alpha}{v} \pd{F_{\mathrm{sol}} }{\alpha} 
	   + \pd{\hat \chi}{\tau} \pd{F_{\mathrm{sol}} }{\hat \chi} = \pd{\hat \chi}{\tau} \pd{F_{\mathrm{sol}} }{\hat \chi} + \bigo{\tau^{-1}} \ ,
\end{aligned}
\end{equation}
and 
\begin{align}
	\pd{\hat \chi }{\tau} &= \partial_{\tau}v \, 2\chi f_-(i\kappa_0;\alpha(v))  e^{-2i\varphi(i\kappa_0;x,t)} \pd{\alpha}{v} \pd{f_-}{\alpha}  -2i \hat \chi \left( \varphi_t - \frac{\Omega_{t} - t^{-1}v \Delta' }{\Omega_{x} + t^{-1}\Delta'} \varphi_x \right)  \nonumber \\
	&=  -2i \hat \chi \left( \varphi_t - \frac{\Omega_{t}}{\Omega_{x}} \varphi_x \right) + \bigo{\tau^{-1}}
	=  -8 \kappa_0 R(i \kappa_0; x(s,\tau),\tau) \hat \chi + \bigo{\tau^{-1}},
\end{align}
where we have used \eqref{omega.1} and \eqref{phi.at.kappa}--\eqref{v.sol} to simplify the expression. 
Therefore, for all sufficiently large $\tau$, we have
\begin{equation}
	\begin{aligned}
	\pd{q}{\tau} 
	  &= -8 \kappa_0 R(i \kappa_0; x(s,\tau),\tau) \hat \chi \pd{ F_{\mathrm{sol}}}{ \hat \chi } + \bigo{\tau^{-1}} \\
	  &= -\frac{8 \kappa_0 R(i \kappa_0; x(s,\tau),\tau) }{w_{22}(x(s,\tau),\tau)^2 \hat \chi} 
	  \frac{2 \le[ 1-(\mathcal{Q}^{^{(-)}})^2 \ri] (X^{^{(-)}} - X_1)(X^{^{(-)}}-X_2)}{\le( (X^{^{(-)}})^2+(Y^{^{(-)}})^2 \ri)^2}+ \bigo{\tau^{-1}},
	\end{aligned}
\end{equation}
where
\begin{equation}\label{crit_points_X1,2}
	X_1 = \frac{ 1 - \mathcal{Q}^{^{(-)}}}{1+ \mathcal{Q}^{^{(-)}}} Y^{^{(-)}} \ , 
	\qquad
	X_2 = -\frac{ 1 + \mathcal{Q}^{^{(-)}}}{1- \mathcal{Q}^{^{(-)}}} Y^{^{(-)}} \ ,
\end{equation}
and $X^{^{(-)}}$ and $Y^{^{(-)}}$ are given in \eqref{XY.sol.-}. 
Thus, up to $\mathcal O (\tau^{-1})$ terms, we found two critical points of $q$, given by the implicit equations $X^{^{(-)}}=X_1$ and $X^{^{(-)}}=X_2$. 

By a similar calculation, the second derivative of $q(s,\tau)$ evaluated at $X_1$ and $ X_2$ is
\begin{equation}
	\begin{aligned}
	\pd[2]{q}{\tau} \bigg\vert_{X^{(-)}=X_\ell}  
	&= (-1)^{\ell} \left( \frac{8 \kappa_0 R(i \kappa_0; x(s,\tau),\tau) }{w_{22}(x(s,\tau),\tau)^2 \hat \chi}  \right)^2 \frac{ 2 \le[ 1-( \mathcal{Q}^{^{(-)}})^2 \ri] (X_1 -X_2)}{(X_\ell^2+{Y^{(-)}}^2)^2} + \bigo{\tau^{-1}} \\
	&= (-1)^{\ell} \left( \frac{8 \kappa_0 R(i \kappa_0; x(s,\tau),\tau) }{w_{22}(x(s,\tau),\tau)^2 \hat \chi}  \right)^2 \frac{ 8\kappa_0 (Y^{^{(-)}})^2}{\le( X_\ell^2+(Y^{^{(-)}})^2 \ri)^2} + \bigo{\tau^{-1}} \ ,
	\end{aligned}
\end{equation} 
$\ell = 1,2$, which is strictly nonzero. Thus $q$ has an (approximate) maximum at $X_1$ and minimum at $X_2$.  Such  solutions $X_1, X_2$ are still expressed in terms of the dependent variable $\mathcal Q^{^{(-)}}$ and in the moving frame $(s,\tau)$, however existence of a global maximum in the physical $(x,t)$-coordinates is established in  
Theorem~\ref{theorem1.4} that we are going to prove. For our  convenience,    we state  and prove   below Theorem~\ref{thm:xpeak}  that is a  more extended version of Theorem~\ref{theorem1.4}.

\begin{thm}\label{thm:xpeak}
For $\chi>0$ and $x_0 \ll -1$, there exists a $\tilde{\kappa}>  \kappa_{\rm crit}$ (see \eqref{kappa.c.Q-1}) , such that for all  \color{black} $\kappa_0 > \tilde{\kappa}$  \color{black} there exists a unique, continuous, global maximum $x_\mathrm{peak}(t)$ of the solution $q$ which identifies the position of the soliton peak for all $t>0$.
For any $(x,t)$ satisfying the inequalities \eqref{into_modulated_elliptic_region},  $x_\mathrm{peak}(t) = x^*(t) + \bigo{t^{-1}}$ where $x^*(t)$ is implicitly defined as the solution of 
\begin{equation}\label{x* implicit.eq}
	X^{^{(-)}}(x^*(t), t) =\frac{ 1 - \mathcal{Q}^{^{(-)}}(x^*(t),t)}{1+ \mathcal{Q}^{^{(-)}}(x^*(t),t)} \frac{1+\mathcal{Q}^{^{(-)}}(x^*(t),t)^2}{2\kappa_0}\ .
\end{equation}
where $\mathcal{Q}^{^{(-)}}$ is given in \eqref{Q- decent}. The amplitude of the solution at the maximum is given by 
\[
q(x_{\rm peak}(t),t) = q_{\rm bg}(x^*(t),t) + 2\kappa_0 \left[1 + \frac{ 2 \mathcal{Q}^{^{(-)}}(x^*(t),t)}{1+  \mathcal{Q}^{^{(-)}}(x^*(t),t)^2} \right] + \bigo{t^{-1}}.
\]

This global maximum, $x_{\rm peak}(t)$, is strictly increasing, and satisfies: 
\begin{itemize}
\item[(i)] for $t\in (0, t_1)$, $x_{\rm peak}(t) = x_0 + 4\kappa_0^2 t$; 
\item[(ii)] for $t >(1+ \epsilon)t_1$ (for some small positive $\epsilon$),
\begin{equation}
\begin{aligned}
& \dot x_{\rm peak}(t) =  - \frac{2\varphi_{2}(i \kappa_{0}) - \partial_t \ln\Psi(x,t; \kappa_0, \eta_1) } { 2\varphi_{0}(i \kappa_{0})- \partial_x \ln\Psi(x,t; \kappa_0, \eta_1) }\Big\vert_{x = x_{\rm peak}(t)} + \bigo{ \frac{1}{t}} \ .
\end{aligned}
\end{equation}
Here $\Psi$ is defined by \eqref{Psi.speed}.

Finally, for $t>(1+\epsilon)t_1$, let $T$ be the time it takes the soliton peak $x_{\mathrm{peak}}(t)$ to traverse one period of the elliptic background wave $q_{\mathrm{bg}}(x,t)$. Then the average velocity of the soliton peak over the period satisfies
\begin{equation}\label{eq.vsolbar}
	\frac{1}{T} \int_t^{t+T} \dot{x}_{\mathrm{peak}}(s)\,  \d s = 
	\frac{ x_{\mathrm{peak}}( t + T) - x_{\mathrm{peak}}(t)}{T} 
	= \bar{v}_\mathrm{sol}(\kappa_0) + \bigo{t^{-1}}, \quad  \bar{v}_\mathrm{sol}(\kappa_0):=- \frac{\varphi_2(i\kappa_0)}{\varphi_0(i\kappa_0)} \ .
\end{equation}	
Moreover, the leading order term for the averaged soliton velocity satisfies 
\begin{eqnarray*}
\bar{v}_{\rm sol} (\kappa_0)=  4 \kappa_0^{2} + \frac{1}{ \kappa_0} \int_{\eta_1}^{\alpha} \ln{
\left|
\frac{\kappa_0-s}{\kappa_0+s}
\right|
}(v_{{\rm group}}(s) - \bar{v}_{\rm sol} (\kappa_0) ) \, \partial_x \rho(is) \,  \d s \ .
\end{eqnarray*}
which we recognize as the kinetic equation for the \textit{average} velocity of an mKdV soliton analogous to those described in \cite{El03, ElKa05, Tovbis2} for other evolution equations.
\end{itemize}

\end{thm}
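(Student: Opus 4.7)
First I would dispense with the small-time regime $t\in(0,t_1)$, where the trial soliton has not yet caught up with the modulated-wave region. The point $(x_0+4\kappa_0^2 t, t)$ lies in $\mathcal{S}_L^{^{(+)}}$, so Proposition~\ref{prop:left.estimate} together with the Darboux formula \eqref{Darboux} shows that $q(x,t)$ is exponentially close to the free one-soliton \eqref{soliton}, whose unique global maximum tracks exactly the straight line $x_0+4\kappa_0^2 t$.

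In the modulated region $t > (1+\epsilon)t_1$ the critical-point computation carried out just before the theorem statement identifies, at leading order in $\tau$, the two approximate critical points of $q$ in the slow coordinates $(s,\tau)$ of Lemma~\ref{lem:coords} as the roots $X^{^{(-)}}=X_1,X_2$ from \eqref{crit_points_X1,2}, with $X_1$ a maximum. I would encode this as the scalar equation
\begin{equation}\label{Psi.speed}
X^{^{(-)}}(x,t) = \Psi(x,t;\kappa_0,\eta_1),\qquad \Psi(x,t;\kappa_0,\eta_1) := \frac{1-\mathcal{Q}^{^{(-)}}(x,t)}{1+\mathcal{Q}^{^{(-)}}(x,t)}\cdot\frac{1+\mathcal{Q}^{^{(-)}}(x,t)^2}{2\kappa_0},
\end{equation}
and observe that the dominant $(x,t)$-dependence of $X^{^{(-)}}$ sits in the rapidly varying factor $e^{-2i\varphi(i\kappa_0;x,t)}$, while $\mathcal{Q}^{^{(-)}}_{\kappa_0}$, $w_{22}$ and $f$ contribute only slowly-varying bounded factors. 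The implicit function theorem then produces a unique continuous $x^*(t)$ satisfying \eqref{x* implicit.eq}, and substituting $X^{^{(-)}}=X_1$ into \eqref{q.sol.1} yields the stated amplitude. The $\mathcal{O}(t^{-1})$ gap between $x^*(t)$ and the true peak $x_{\rm peak}(t)$ is inherited from the error estimate $\bm{R}=\bm{I}+\mathcal{O}(t^{-1})$ of Section~\ref{sect:error_analysis}.

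Globality of the maximum is where I expect the main obstacle to lie: the amplitude increment $2\kappa_0\bigl[1+2\mathcal{Q}^{^{(-)}}/(1+(\mathcal{Q}^{^{(-)}})^2)\bigr]$ contributed by the soliton must beat every competing local maximum of $q_{\rm bg}$, whose envelope is bounded by $\alpha+\eta_1\le\eta_1+\eta_2$. For $\kappa_0>\kappa_{\rm crit}$ one has $\mathcal{Q}^{^{(-)}}\in(-1,0)$, so the bracket stays uniformly positive; however, proving that the infimum of this soliton contribution, taken over all $(x,t)$ in the modulated region and all admissible $\alpha(v)$, dominates the full oscillation amplitude of the background requires a quantitative bound that forces the definition of $\tilde\kappa$ as the threshold above which such a uniform bound holds. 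Strict monotonicity of $x_{\rm peak}(t)$ then follows from positivity of $\dot x^*(t)$ at leading order, which is a consequence of $\bar{v}_\mathrm{sol}(\kappa_0)>0$ for $\kappa_0>\eta_2$.

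Finally, implicit differentiation of \eqref{Psi.speed} produces $\dot{x}^*(t)$: the slow dependence of $X^{^{(-)}}$ and $\Psi$ through $\mathcal{Q}^{^{(-)}}_{\kappa_0}$, $w_{22}$, $f$ contributes $\mathcal{O}(t^{-1})$ terms, while the phase $\varphi(i\kappa_0;x,t)$ delivers the stated ratio with $\varphi_x=\varphi_0$ and $\varphi_t=\varphi_2$ by \eqref{varphi.3a}. For the period-averaged velocity, since $\mathcal{Q}^{^{(-)}}$ (and hence $\Psi$) is $2\pi$-periodic in the fast phase $s=\Omega+\Delta$, over one background period $T$ we have $\log\Psi(x^*(t+T),t+T)-\log\Psi(x^*(t),t)=\mathcal{O}(t^{-1})$, so the contributions of $\partial_t\log\Psi$ and $\partial_x\log\Psi$ to $\tfrac{1}{T}\int_t^{t+T}\dot x_{\rm peak}\,\d s$ vanish at leading order, leaving $\bar v_{\rm sol}(\kappa_0)=-\varphi_2(i\kappa_0)/\varphi_0(i\kappa_0)$, which evaluates via \eqref{phi.at.kappa} to the elliptic formula. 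To derive the kinetic equation I would use the density representation $\varphi(k;x,t)=4tk^3+xk+\int_{\Sigma_{1,\alpha}\cup\Sigma_{2,\alpha}}\log(k-s)\,\rho(s;x,t)\,\d s$ from Section~\ref{g_function_abelian}, differentiate $\varphi(i\kappa_0;x(t),t)\equiv\mathrm{const}$ along the soliton trajectory, substitute $v_\mathrm{group}=-\rho_t/\rho_x$, and use the symmetry $\rho(-s)=-\rho(s)$ to collapse the integration to $\Sigma_{1,\alpha}$, exactly mirroring the derivation of \eqref{group_vel_kinetic} for the group velocity.
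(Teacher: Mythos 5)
Your overall strategy coincides with the paper's: dispose of $t<t_1$ via the exponential smallness of the gas jumps, pass to the slow coordinates $(s,\tau)$ of Lemma~\ref{lem:coords}, characterize the peak by the critical-point equation $X^{^{(-)}}=X_1$, invoke the implicit function theorem, differentiate implicitly for the velocity, use periodicity of the bounded factors for the average, and obtain the kinetic equation from the logarithmic-density representation of $\varphi$. However, one step as written would fail. Your claim that $\mathcal{Q}^{^{(-)}}_{\kappa_0}$, $w_{22}$ and $f$ are ``slowly-varying'' in $(x,t)$ and contribute only $\bigo{t^{-1}}$ upon differentiation is incorrect: $w_{22}$ and $\mathcal{Q}^{^{(-)}}$ depend on the fast phase $\Omega(x,t)+\Delta$, whose derivatives $\Omega_x=\pi\alpha/K(m)$ and $\Omega_t$ are $\bigo{1}$, so $\partial_x\log\Psi$ and $\partial_t\log\Psi$ are order-one oscillatory quantities. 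They are small only when differentiated along the characteristics $s=\mathrm{const}$ (which is precisely why the change of variables is introduced), and they cannot be absorbed into the error in part (ii): they are exactly the terms producing the oscillation of $\dot x_{\rm peak}$, and dropping them would return the \emph{averaged} velocity $-\varphi_2/\varphi_0$ as the instantaneous one, contradicting the statement and Figure~\ref{vsolVSvpeak}. Relatedly, the object you call $\Psi$ is merely $X_1$, whereas the $\Psi$ of \eqref{Psi.speed} entering the stated velocity formula is assembled only after isolating the exponential factor $\chi e^{-2i\varphi(i\kappa_0;x,t)}$ hidden inside $X^{^{(-)}}$ (see \eqref{XY.sol.-}) and taking logarithms, so that the equation becomes $P(s,\tau)=2i\varphi(i\kappa_0)+2\kappa_0 x_0-\log\Psi=0$ with $\Psi=f(i\kappa_0)^2w_{22}^2\bigl[\tfrac{1-\mathcal{Q}^{^{(-)}}}{1+\mathcal{Q}^{^{(-)}}}\bigl(1+(\mathcal{Q}^{^{(-)}})^2\bigr)-2\kappa_0\mathcal{Q}^{^{(-)}}_{\kappa_0}\bigr]$; implicit differentiation of your equation as written does not yield the formula in (ii).

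A second, smaller discrepancy concerns the origin of $\tilde\kappa$. In the paper the largeness condition on $\kappa_0$ comes from the invertibility step: one solves $P(s,\tau)=0$ for $\tau^*(s)$ using $\partial_\tau P=8\kappa_0\sqrt{(\kappa_0^2-\alpha^2)(\kappa_0^2-\eta_1^2)}+\bigo{t^{-1}}>0$, and then needs $\partial_s P\neq 0$ to invert to a single-valued $s^*(\tau)$; since $\partial_s P$ is an asymptotically linear function of $\kappa_0$ plus a bounded term $\partial_s\log\Psi=\bigo{\kappa_0^{-1}}$, this holds for all sufficiently large $\kappa_0$. Your amplitude-comparison route targets the same peak-jumping phenomenon (discussed in a remark after the theorem), but it is not the mechanism used and would require additional uniform control of $q_{\rm sol}$ away from its center to rule out competing background maxima; the non-vanishing of $\partial_s P$ is the cleaner criterion.
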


\begin{remark}
The condition $\chi>0$, in the above theorem is strictly for convenience. Physically, this means we are tracking the position of a trial soliton as it interacts with the soliton gas. The cases $\chi<0$, corresponds to an anti-soliton of mKdV, and an analogous result can be proved where the equation for the global minimum; specifically one can show $x_{\mathrm{min} }(t) = x_*(t) + \bigo{t^{-1}}$ where $x_*(t)$ is implicitly encoded in the condition $X^{^{(-)}}(x_*(t),t) = X_2(x_*(t),t)$ (c.f. \eqref{crit_points_X1,2}).
\end{remark}

\begin{remark}
The condition that $\kappa_0$ be sufficiently large in Theorem~\ref{thm:xpeak} is not merely a technical condition. 
For $i\kappa_0 \approx i \eta_2$, we see numerically that the position of the global maximum $x_{\mathrm{peak}}(t)$ is not continuous. One observes, that as the soliton attempts to crest a peak of background wave it loses amplitude while the incident peak of the background peak grows; if $\kappa_0$ is sufficiently small, then there exists a critical time at which two amplitudes---of the soliton and the incident peak of the background---are equal. Evolving past the critical time, the global maximum jumps forward emerging out of the incident peak, while the former maximum settles into the background wave. See Figure~\ref{fig:peakjump}. This phenomenon was already foreseen and analysed by Lax in \cite{Lax68}, for a simple two-soliton interaction.
\end{remark}

\begin{proof}[Proof of Theorem~\ref{thm:xpeak}]
The behavior of the peak for $x < t_1-\epsilon$ is trivial since $q(x,t) = 2\kappa_0 \sech(2\kappa_0(x-x_0 -4\kappa_0^2 t)) + \bigo{ e^{-c t}}$ for some constant $c>0$. 
For $(x,t)$ in the modulation region we use \eqref{chi0}  and \eqref{XY.sol.-}, to rewrite the equation $X^{^{(-)}}=X_1$ in the new $(s,\tau)$ coordinates as
\begin{eqnarray}\label{eq-5.26}
P(s,\tau) := 2 i \varphi(i \kappa_{0}; x(s,\tau),\tau) + 2\kappa_{0} x_{0}  - \ln \Psi(x(s,\tau), \tau; \kappa_0, \eta_1) = 0,
\end{eqnarray}
where 
\begin{equation}\label{Psi.speed}  
	\Psi(x,t) := f(i\kappa_0)^2  w_{22}(x,t)^2   \le[  \frac{ 1 - \mathcal{Q}^{^{(-)}}(x,t)}{1+ \mathcal{Q}^{^{(-)}}(x,t)} \left[1 + \mathcal{Q}^{^{(-)}}(x,t)^2\right] - 2 \kappa_{0}\mathcal{Q}^{^{(-)}}_{\kappa_0}(x,t)  \ \ri]  
\end{equation}
Using \eqref{phi.at.kappa}, we can expand 
\begin{multline}\label{P.2}
P(s,\tau)  =  2\tau  \sqrt{(\kappa_0^2-\alpha^2)(\kappa_0^2-\eta_1^2)} 
 \left[ 4  \kappa_0  - \frac{1}{\kappa_0} \frac{\Pi\left(\frac{\eta_1^2}{\kappa_0^2}, m \right)}{K(m)} \left( \frac{x(s,\tau)}{\tau} - 2(\eta_1^2 + \alpha^2)  \right) \right] \\
 + 2\kappa_0x_0 - \ln \Psi(s,\tau; \kappa_0, \eta_1) 
\end{multline}
Here the reason for the change of variables $(x,t) \mapsto(s,\tau)$ reveals itself: it's straightforward to check that $\partial_\tau \Psi(s,\tau) = \bigo{\tau^{-1}}$.
Furthemore from Remark~\ref{Dalpha=0}  we have $\partial_\alpha\varphi=0$ so that a short computation 
 shows that 
\[
\partial_x\varphi=-\frac{  \Omega_x }{ \Omega_t}(4i\kappa_0R(i\kappa_0)-\varphi_t)
\]
and  \eqref{coordinate.derivatives} and \eqref{omega.1} imply $\pd{x}{\tau} = - \Omega_t / \Omega_x + \bigo{t^{-1}} $
so that
\[
\partial_\tau P(s,\tau) = 2i[\partial_x\varphi\partial_\tau x+ \partial_t\varphi]-\frac{\partial_\tau\Psi}{\Psi}=8\kappa_0 \sqrt{(\kappa_0^2-\alpha^2)(\kappa_0^2-\eta_1^2)} + \bigo{t^{-1}} >0.
\]
As the first bracketed term in \eqref{P.2} is linear in $\tau$, while the remaining terms are bounded ensures that the equation $P(s,\tau) = 0$ has a unique solution $\tau^*(s)$ for each fixed $s$ in the modulation region. 

Now differentiating with respect to $s$ one finds that 
\[
	\partial_s P = -2 \frac{\sqrt{(\kappa_0^2-\alpha)(\kappa_0^2-\eta_1^2)}}{\pi \alpha \kappa_0 } \Pi \left( \frac{ \eta_1^2}{\kappa_0^2}, \, \frac{ \eta_1^2}{\alpha^2} \right) + \partial_s \ln \Psi(x(s,\tau),\tau;\kappa_0,\eta_1)
\]
where $\partial_s \Psi$ is quite complicated but  bounded. The first term in $\partial_s P$ is an increasing, asymptotically linear, function of $\kappa_0$, while the second term is a bounded function of $\kappa_0$. In fact, the calculation in Section~\ref{sec:kappa.to.infty} show that $\partial_s \ln \Psi = \bigo{\kappa_0^{-1}}$ as $\kappa_0 \to \infty$. So for all sufficiently large $\kappa$ we have $\partial_s P \neq 0$, so we can invert the unique solution $\tau^*(s)$ to get $s^*(\tau)$. The function $x^*(t)$ is then given by $x^*(t) = x( s^*(t),t))$.
This establishes the existence of the solution $x^*(t)$ of \eqref{x* implicit.eq}. 
The existence of the global maximum $x_{\rm peak}(t)$ for $q(x,t)$ then follows from the Implicit Function Theorem, as $\partial^2_{\tau} q$ does not vanish at $X^{^{(-)}}=X_1$.

We then compute an expansion for $\dot x_{\rm peak}$ by first observing that $\dot x_{\rm peak} = \dot x^* + \bigo{t^{-1}}$, and $\dot x^*$ can be computed by differentiating the condition $P(x^*(t),t) = 0$ with respect to $t$:
\[
	\dot x^* ( 2i\varphi_x - \partial_x \ln \Psi) + 2i\varphi_t - \partial_t \ln \Psi  =0.
\]
Solving for $\dot x^*$ gives the desired result. 

To complete the proof we consider, for $t > t_1+\epsilon$, the average of the soliton velocity $\dot{x}_{\mathrm{peak}}(t)$ over a period of the background wave $q_\mathrm{bg}(x,t)$. First we observe that as $x_{\mathrm{peak}}(t) = x^*(t) + \bigo{t^{-1}}$, the leading order term $\bar{v}_\mathrm{sol}(\kappa_0)$ in \eqref{eq.vsolbar} can be computed from the difference of $x^*(t)$ over a period of the background. 
From \eqref{eq-5.26} we have
\begin{equation}\label{avg.speed.constraint}
	2i \varphi(i\kappa_0; x^*(t), t) = -2\kappa_0 x_0 + \ln \Psi(x^*(t), t; \kappa_0, \eta_1)
\end{equation}
for all $t > t_1+\epsilon$. When $t > t_2$, the right hand side of this equation is periodic. This implies that $\varphi(i\kappa_0; x^*(t+T), t+T) = \varphi(i\kappa_0; x^*(t), t)$ or equivalently
\[
	 (t+T)\varphi_2(i\kappa_0)  + x^*(t+T) \varphi_0(i\kappa_0) = 
	 t\varphi_2(i\kappa_0)  + x^*(t) \varphi_0(i\kappa_0)
\]
solving this for the difference $x^*(t+T) -x^*(t)$ gives the result. 
For $t \in (t_1+\epsilon, t_2]$, $x^*(t)$ still satisfies \eqref{avg.speed.constraint}, but the right hand side is only quasi-periodic due to the slow modulation of the wave parameters as $\alpha(x^*(t)/t)$ varies for $t \in (t_1, t_2)$. However, the evolution of $\Psi$ with respect to the slow parameters only introduces changes of order $\bigo{t^{-1}}$ since $\partial_\tau \Psi = \bigo{t^{-1}}$. So treating $\Psi$ as exactly periodic in the modulation domain only introduces errors at the level of the first correction $\bigo{t^{-1}}$. 
\end{proof}

\begin{figure}
\begin{minipage}[c]{.6\textwidth}
\centering
\hspace*{\stretch{0.5}}
\begin{overpic}[scale=0.35]{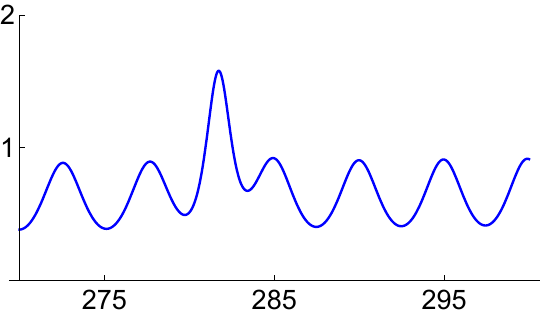}
\put(7,50){\tiny ${}^{\Delta t=-0.5}$ }
\put(99,1){\tiny $x$ }
\end{overpic}
\hspace*{\stretch{1}}
\begin{overpic}[scale=0.35]{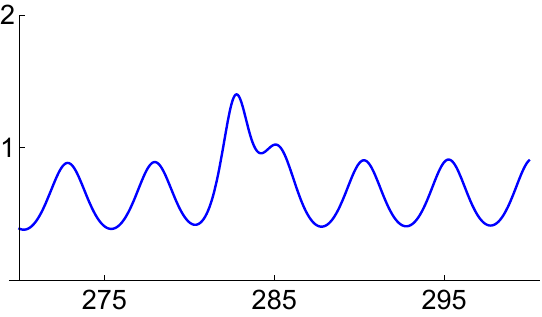}
\put(7,50){\tiny ${}^{\Delta t=-0.2}$ }
\end{overpic}
\hspace*{\stretch{1}}
\begin{overpic}[scale=0.35]{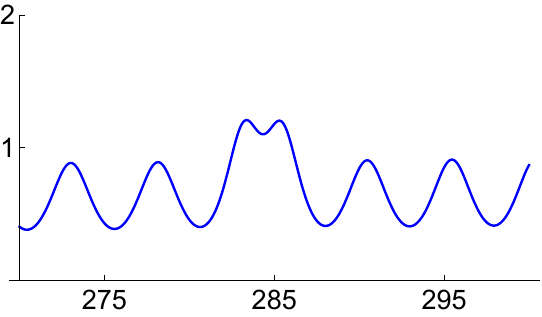}
\put(7,50){\tiny ${}^{\Delta t=0}$ }
\end{overpic}
\hspace*{\stretch{0.5}}
\\[1em]
\begin{overpic}[scale=0.35]{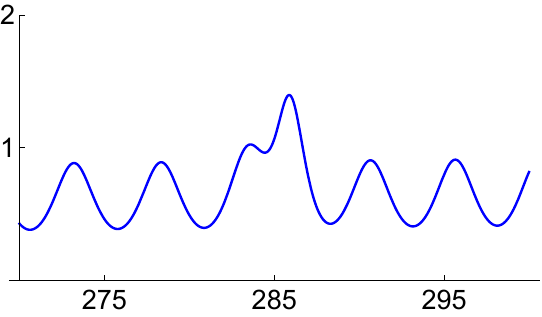}
\put(7,50){\tiny ${}^{\Delta t=0.2}$ }
\end{overpic}
\hspace*{1cm}
\begin{overpic}[scale=0.35]{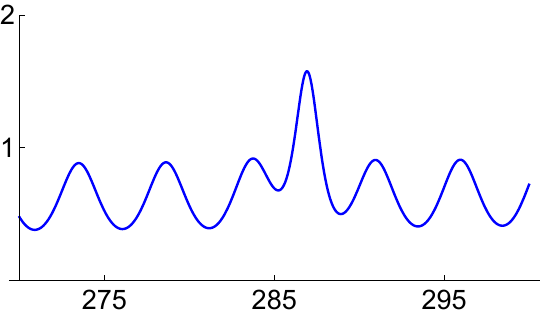}
\put(7,50){\tiny ${}^{\Delta t=0.5}$ }
\end{overpic} 
\end{minipage}
%%%
\begin{minipage}[c]{.35\textwidth}
\vspace{0pt}
\begin{overpic}[scale=0.65]{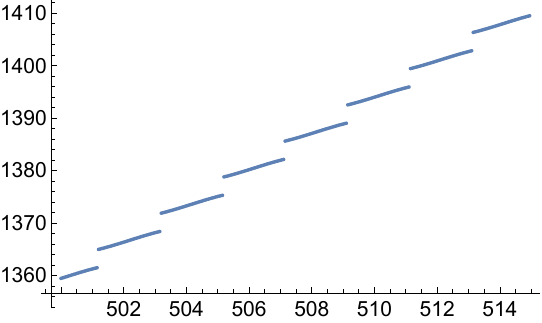}
\end{overpic}
\end{minipage}
\caption{Demonstration of discontinuities in the location of $x_{\mathrm{peak}}(t)$ when the soliton eigenvalue $i\kappa_0$ is near the soliton gas spectral band $[i\eta_1, i\eta_2]$. \textit{Left}: Evolution of $q(x,t)$ at times $t^*+\Delta t$ around a time $t^*=503.17$ at which the global max becomes multivalued. \textit{Right}: Computation of $x_{\mathrm{peak}}(t)$ as a function of time. In both figures we have $\eta_1=0.25, \eta_2=0.75, \kappa=0.8$ and $\chi=1.6e^{-320}$.}
\label{fig:peakjump}
\end{figure}

%-------END OF MAIN BODY------START APPENDICES------------
\appendix

\section{Solvability of the Riemann--Hilbert problem for {\bf \textit{X}}}
\label{appB_solvability}

\begin{thm}
Given a function $r \in L^2(\Sigma_1)$ Riemann--Hilbert problem \ref{rhp:X} is uniquely solvable for all $(x, t) \in \R^2$.
Moreover, the function $q(x,t)$ defined in \eqref{q_via_X} is a classical solution to the mKdV equation \eqref{mkdv}, which belongs to the class $C^\infty(\R_x \times \R_t)$.
\end{thm}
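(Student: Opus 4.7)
The plan is to establish the three assertions in sequence: (i) solvability of the Riemann--Hilbert problem, (ii) smoothness of $q(x,t)$ in $(x,t)$, and (iii) that $q$ solves the mKdV equation \eqref{mkdv} classically.

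\textbf{Step 1 (Unique solvability).} First I would convert the meromorphic RH problem~\ref{rhp:X} into an equivalent pure RH problem posed on a compact contour $\Gamma = \Sigma_1 \cup \Sigma_2 \cup \gamma_+ \cup \gamma_-$, where $\gamma_\pm$ are small positively oriented circles around $\pm i\kappa_0$ disjoint from $\Sigma_1 \cup \Sigma_2$. The residue conditions \eqref{Xres} translate via the standard triangular substitution into jump conditions of the form $\bm{I} + \text{nilpotent}$ on $\gamma_\pm$. Since $r \in L^2(\Sigma_1)$ and the jumps on $\gamma_\pm$ are bounded and analytic, the associated Beals--Coifman singular integral operator on $L^2(\Gamma)$ is Fredholm of index zero. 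It then remains to prove a vanishing lemma for the homogeneous problem (normalization $\bm{X}(k) = \bigo{k^{-1}}$ at infinity). Following the strategy used in the literature for RH problems with triangular jumps and Schwarz-like symmetry $k \mapsto -\bar k$, I would introduce $\bm{H}(k) := \bm{X}_0(k)\, \overline{\bm{X}_0(-\bar k)}^{\!\top}$; because $r$ is real-valued on $\Sigma_1$, the exponentials $e^{\mp 2i\theta(k;x,t)}$ are real on the imaginary axis, and the two triangular jumps on $\Sigma_1$ and $\Sigma_2$ are complex conjugates transposed through $k \mapsto -\bar k$, the jumps of the two factors of $\bm{H}$ cancel identically on $\Sigma_1 \cup \Sigma_2 \cup \gamma_+ \cup \gamma_-$. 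Hence $\bm{H}$ extends to an entire function with $\bm{H}(k) = \bigo{k^{-2}}$ at infinity, so $\bm{H}\equiv 0$. Applying this to the $(1,1)$ entry on the real axis, one obtains $\sum_j |\bm{X}_{0,1j}(k)|^2 = 0$, and similarly for the second row, forcing $\bm{X}_0 \equiv 0$. The Fredholm alternative then provides unique solvability for every $(x,t)\in\R^2$.

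\textbf{Step 2 (Smoothness in $(x,t)$).} The jumps on $\Gamma$ and the residue data depend real-analytically on $(x,t)$ through the entire phase $\theta(k;x,t) = 4tk^3 + xk$, and the dependence is uniform in $k$ on compact subsets of $\Gamma$ together with all derivatives in $(x,t)$. Since the associated singular integral operator is boundedly invertible on $L^2(\Gamma)$ by Step~1, standard perturbation results for Fredholm RH problems with analytic parametric dependence give that the unique solution $\bm{X}(k;x,t)$ is real-analytic in $(x,t)$ for each fixed $k \in \C\setminus\Gamma$, together with all $(x,t)$-derivatives of its large-$k$ expansion. Consequently $q(x,t) = 2i\lim_{k\to\infty}k\,\bm{X}_{12}(k;x,t)$ defines a $C^\infty(\R_x\times\R_t)$ function.

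\textbf{Step 3 (Lax pair and mKdV).} This is the dressing argument. I would set $\bm{\Phi}(k;x,t) := \bm{X}(k;x,t)\, e^{-i\theta(k;x,t)\sigma_3}$ and form $\bm{U}(k;x,t) := \bm{\Phi}_x \bm{\Phi}^{-1}$, $\bm{V}(k;x,t) := \bm{\Phi}_t \bm{\Phi}^{-1}$. After conjugation, the jumps of $\bm{\Phi}$ and the residue matrices become independent of $(x,t)$, so $\bm{U}$ and $\bm{V}$ have no jumps across $\Sigma_1\cup\Sigma_2$ and no poles at $\pm i\kappa_0$; by Morera they extend to entire functions of $k$. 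From the large-$k$ expansion $\bm{X} = \bm{I} + \bm{X}_1/k + \bm{X}_2/k^2 + \bigo{k^{-3}}$ and the explicit form of $e^{-i\theta \sigma_3}$, Liouville's theorem forces $\bm{U}$ to be a polynomial of degree exactly one in $k$ and $\bm{V}$ of degree exactly three. Matching the coefficients in the large-$k$ expansion, using the reconstruction formula \eqref{q_via_X} together with the standard symmetry $\bm{X}(-\bar k) = \overline{\bm{X}(k)}$ which yields $(\bm{X}_1)_{11} = 0$ and $(\bm{X}_1)_{12} = -\overline{(\bm{X}_1)_{21}}$, identifies $\bm{U}$ and $\bm{V}$ with precisely the mKdV Lax pair \eqref{eq:MKdVLaxPair} written in terms of $q(x,t)$. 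The compatibility condition $\bm{U}_t - \bm{V}_x + [\bm{U},\bm{V}] = 0$ is then equivalent to the mKdV equation \eqref{mkdv}, which holds classically by Step~2.

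\textbf{Anticipated main obstacle.} The delicate step is the vanishing lemma in Step~1. While the cancellation of jumps for $\bm{H}(k)=\bm{X}_0(k)\overline{\bm{X}_0(-\bar k)}^{\!\top}$ on $\Sigma_1\cup\Sigma_2$ follows cleanly from the reality and positivity of $r$ and the triangular factorization, the analogous cancellation at $\pm i\kappa_0$ requires that the residue structures at the two conjugate poles be written in matching triangular form and that $\gamma_\pm$ be chosen symmetrically under $k\mapsto -\bar k$; this forces a careful choice of the circles and a separate check that the nilpotent jumps on $\gamma_+$ and $\gamma_-$ combine via the product to give an identity jump. A secondary, technical point is that $r\in L^2(\Sigma_1)$ only, so the pointwise factorization must be interpreted in terms of non-tangential boundary values in a Hardy space and justified by standard density of smooth $r$'s.
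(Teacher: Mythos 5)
Your overall architecture (trade the residue conditions for jumps on small circles, prove a vanishing lemma, get smoothness from parametric dependence, and finish with a dressing/Lax-pair argument) matches the paper's proof in Appendix~\ref{appB_solvability}. However, the core of Step 1 --- the vanishing lemma --- is built on the wrong symmetry, and as written it fails. The map $k \mapsto -\bar k$ fixes the imaginary axis pointwise, so it sends $\Sigma_1$ to $\Sigma_1$ (not to $\Sigma_2$), and the jump cancellation for $\bm{H}(k) = \bm{X}_0(k)\,\overline{\bm{X}_0(-\bar k)}^{\top}$ on $\Sigma_1$ would require $\bm{V} = \overline{\bm{V}}^{\top}$ there. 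For the lower-triangular jump with off-diagonal entry $a = i r(k) e^{-2i\theta(k)}$, which is purely imaginary on $i\R$ (since $r$ is real and $e^{-2i\theta}$ is real there), one has $\overline{\bm{V}}^{\top} = \bigl(\begin{smallmatrix} 1 & -a \\ 0 & 1\end{smallmatrix}\bigr) \neq \bm{V}$, so $\bm{H}$ is \emph{not} analytic across $\Sigma_1$. Moreover, with your $\bm{H}$ the restriction to $\R$ gives $\sum_j \bm{X}_{0,1j}(k)\overline{\bm{X}_{0,1j}(-k)}$, which is not a sum of squared moduli, so the positivity step collapses too. The symmetry you need is $k \mapsto \bar k$, which pairs $\Sigma_1$ with $\Sigma_2$ and $\gamma_+$ with $\gamma_-$: one checks (using that $\theta$ has real coefficients, $r$ and $\chi$ are real, and both bands are oriented upward) that $\bm{V}(k) = \bm{V}(\bar k)^{-\dagger}$, so $\bm{H}(k) = \bm{X}_0(k)\bm{X}_0(\bar k)^{\dagger}$ has cancelling jumps, is $\bigo{k^{-2}}$ at infinity, hence vanishes, and on $\R$ this reads $\bm{X}_0(k)\bm{X}_0(k)^{\dagger}=0$. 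This is exactly Zhou's vanishing lemma, which is what the paper invokes after verifying $\bm{V}_Y(k) = \bm{V}_Y(\bar k)^{\dagger}$ on a Schwarz-symmetric contour.

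A second, more technical divergence: you keep the jump on $\Sigma_1$ itself with $r$ only in $L^2$, so $\bm{V}-\bm{I}$ need not be in $L^\infty$ and the Beals--Coifman operator is not obviously bounded (let alone Fredholm) on $L^2(\Gamma)$; your proposed fix by density of smooth $r$'s does not by itself yield unique solvability for the limiting data without uniform resolvent bounds. The paper sidesteps this entirely by replacing $r$ with its Cauchy transform $F(k) = \frac{1}{2\pi i}\int_{\Sigma_1}\frac{i r(\xi)}{\xi - k}\,\d\xi$ and deforming the jump onto a closed loop $\widehat\Sigma_1$ surrounding $\Sigma_1$, where the jump matrix is analytic and bounded; the $L^2$ hypothesis on $r$ enters only through $F$. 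Your Steps 2 and 3 are fine in spirit (the paper proves smoothness by deriving inhomogeneous RH problems for $D^n\bm{Y}$ with analytic forcing rather than by abstract analytic perturbation, and cites the standard Lax-pair argument for Step 3), but Step 1 needs the two repairs above before the proof goes through.
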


\begin{proof}
First we prove the existence of solutions to RHP~\ref{rhp:X}.
Let $\widehat\Sigma_1$ and $\mathcal{C}_{i\kappa_0}$ be non-intersecting, simple closed closed in $\C^+$ enclosing $\Sigma_1$ and $k=i\kappa_0$ respectively. Orient the contours positively. Let $\widehat\Sigma_2$ and $\mathcal{C}_{-i\kappa_0}$ be complex conjugate contours (with negative orientation) so that $\Gamma_Y := \widehat\Sigma_1 \cup \widehat\Sigma_2 \cup \mathcal{C}_{i\kappa_0} \cup \mathcal{C}_{-i\kappa_0}$ is Schwarz symmetric, $\overline{\Gamma}_Y = \Gamma_Y$. 

Given a function $r:\Sigma_1 \to \R$ in $L^2(\Sigma_1)$ define the analytic function
\begin{equation}
	F(k) = \frac{1}{2\pi i} \int_{\Sigma_1} \frac{ i r(\xi)}{\xi-k} \, \d \xi, \qquad k \in \C \setminus \Sigma_1.
\end{equation}
For $k \in \Sigma_1$, it follows from the Sokhotski--Plamelj formula that $F_+(k) - F_-(k) = r(k)$. Morover, since $r$ is real-valued we have
\begin{equation}
	\overline{F(\bar k)} = - \frac{1}{2\pi i} \int_{\Sigma_2} \frac{ i \overline{r(\bar \xi)}}{\xi-k} \, \d \xi , \qquad k \in \C \setminus \Sigma_2,
\end{equation}
where the minus sign comes from the fact that $\Sigma_2 = \overline{\Sigma}_1$ as a set but has opposite orientation. 

Using $F$, make the following transformation $\bm{X} \mapsto \bm{Y}$, which deforms the jumps of $\bm{X}$ onto the closed contour $\Gamma_Y$ where the jump matrices are analytic. Define
\begin{equation}
	\bm{Y}(k;x,t) = \begin{dcases}
		\bm{X}(k;x,t) \begin{bmatrix} 1 & 0 \\ -F(k) e^{-2i\theta(k;x,t)} & 1 \end{bmatrix} & k \in \inside \left(\widehat{\Sigma}_1 \right) \\
		\bm{X}(k;x,t) \begin{bmatrix} 1 & 0 \\ \frac{i\chi}{k-i\kappa_0}e^{-2i\theta(k;x,t)} & 1 \end{bmatrix} & k \in \inside \left(\mathcal{C}_{i\kappa_0} \right)\\
		\bm{X}(k;x,t) \begin{bmatrix} 1 &  \overline{F(\bar k)} e^{2i\theta(k;x,t)} \\ 0  & 1 \end{bmatrix} & k \in \overline{\inside \left(\widehat{\Sigma}_1 \right)} \\
		\bm{X}(k;x,t) \begin{bmatrix} 1 & \frac{i\chi}{k-i\kappa_0}e^{2i\theta(k;x,t)} \\ 0 & 1 \end{bmatrix} & k \in \overline{\inside \left(\mathcal{C}_{i\kappa_0}  \right)}\\
		\bm{X}(k;x,t) & \text{elsewhere} 
	\end{dcases}
\end{equation}

A straight forward calculation shows that the new unknown $\bm{Y}$ satisfies the following problem 
\begin{RHP}\label{rhp:Y}
Find a $2\times 2$ matrix-valued function $\bm{Y}(\, \cdot\,; x,t)$ with the following properties: 
	\begin{enumerate}[label=\arabic*.]
		\item $\bm{Y}(k;x,t)$ is holomorphic for $k \in \C \setminus \Gamma_Y$.
		\item $\bm{Y}(k;x,t) = \bm I + \bigo{k^{-1}}$ as $k \to \infty$,
		\item For $k \in \Gamma_Y$, the boundary values $\bm{Y}_\pm(k;x,t)$ satisfy $\bm{Y}_+(k) = \bm{Y}_-(k) \bm{J}_Y(k;x,t)$, where
		\begin{gather}\label{Yjumps}
		   \bm{J}_Y(k;x,t) =
		  \begin{dcases} 
		    \begin{bmatrix} 1 & 0 \\ F(k) e^{-2i \theta(k;x,t)} & 1 \end{bmatrix}, & k \in \widehat{\Sigma}_1, \\
		    \begin{bmatrix} 1 &  \overline{F(\bar k)} e^{2i \theta(k;x,t)} \\ 0 & 1 \end{bmatrix}, & k \in \overline{\widehat{\Sigma}}_1, \\
		    \begin{bmatrix} 1 & 0 \\ -\frac{i\chi}{k-i\kappa_0}e^{-2i\theta(k;x,t)} & 1 \end{bmatrix} & k \in \mathcal{C}_{i\kappa_0} \\ 
		    \begin{bmatrix} 1 & \frac{i\chi}{k-i\kappa_0}e^{2i\theta(k;x,t)} \\ 0 & 1 \end{bmatrix} & k \in \overline{\mathcal{C}_{i\kappa_0}}
		  \end{dcases}
		\end{gather}
	\end{enumerate}
\end{RHP}
The jump matrix $\bm{J}_Y(k;x,t)$ is analytic for $k \in \Gamma_Y$, and the symmetries $\Gamma_Y = \overline{\Gamma}_Y$ and $\bm{J}_Y(k;x,t) = \bm{J}(\bar k;x,t)^\dagger$ are clearly satisfied, so Zhou's vanishing lemma \cite[Theorem 9.3]{Zhou} can be applied to conclude that a unique solution of RH problem~\ref{rhp:Y} exists. Inverting the transformation from $\bm{X}$ to $\bm{Y}$ it follows that there exists a unique solution $\bm{X}(k;x,t)$ of RH problem~\ref{rhp:X} as well. 
Uniqueness of the solution follows from a standard Liouville type argument. 

To see that the solution $\bm{Y}(k;x,t)$ has derivatives of all orders in $x$ and $t$ one can differentiate the jump relation \eqref{Yjumps} to derive non-homogenous RH problems for the derivatives $(D^n \bm Y)(k;x,t)$ where $D$ denotes either $\pd{}{t}$ or $\pd{}{x}$. The resulting RH problem takes the from:
\begin{RHP}\label{rhp:DY}
Given $n \in \mathbb{N}$, find a $2\times 2$ matrix-valued function $D^n \bm{Y}(\, \cdot\,; x,t)$ with the following properties: 
\begin{enumerate}[label=\arabic*.]
	\item $\left( D^n \bm{Y} \right)(k;x,t)$ is holomorphic for $k \in \C \setminus \Gamma_Y$.
 	\item $\left( D^n \bm{Y} \right)(k;x,t) = \bigo{k^{-1}}$,  as $k \to \infty$ 
	\item For $k \in \Gamma_Y$, the boundary values of $\left( D^n \bm{Y} \right)(k;x,t)$ satisfy the jump relation
	\begin{equation}
	\begin{gathered}
	\left(D^n \bm Y\right)_+(k;x,t) = \left(D^n \bm Y\right)_-(k;x,t) \bm{J}_Y(k,;x,t) + {\bm{\mathcal{F}}}_n(k;x,t),  \\
	{\bm{\mathcal{F}}}_n(k;x,t) := \sum_{\ell=1}^n \binom{n}{\ell} \left(D^{n-j} \bm{Y} \right) (k;x,t) \left(D^{j} \bm{J}_Y\right) (k;x,t) 
	\end{gathered}
	\end{equation}
\end{enumerate}
\end{RHP}
It is a well-known result in the theory of RH problems that the inhomogenous problem  $\bm{M}_+(k) = \bm{M}_-(k) \bm{J}(k) + \bm{F}(k)$, $k \in \Gamma$, with $\bm{M}(k) \to 0$ as $k\to \infty$ has a unique solution in $L^p(\Gamma)$, $1 <p< \infty$ whenever $\bm{F} \in L^p(\Gamma)$ and there exists a unique solution of the associated homogenous problem $\widetilde{\bm{M}}_+(k) = \widetilde{\bm{M}}_-(k) \bm{J}(k)$, $k \in \Gamma$, with $\widetilde{\bm{M}}(k) \to \bm{I}$ as $k \to \infty$. In our setting, the associated homogenous problem is precisely the RH problem for $\bm{Y}$ for which we know a unique solution exists. Observing that $D^n \bm{J}_Y(k,x,t)$ is analytic for $k \in \Gamma_Y$, a simple induction argument shows $\bm{\mathcal{F}}^{(n)}(k;x,t)$ is analytic for any $n$. Since $\Gamma_Y$ is compact, analyticity of $\bm{\mathcal{F}}_n$ implies it is in $L^p$. It follows that $\bm{Y}$ has derivatives of all orders in $x$ and $t$. 
Finally since $\bm{Y}(k) = \bm{X}(k)$ for all sufficiently large $k$, it follows that 
\[
	q(x,t) = \lim_{k \to \infty} k \bm{Y}_{1,2}(k;x,t)
\]
lies in $C^\infty(\R_x \times \R_t)$.  Finally, one shows that $q(x,t)$ solves the mKdV equation \eqref{mkdv} by a standard Lax-pair argument. See for example step 4 of the  proof in \cite[Theorem 2.7]{GravaMinakov20}.
\end{proof}

\section{Fredholm determinant expression for the the KdV soliton + soliton gas}
\label{app:KdV_Fredholm}

The KdV soliton gas, with the possible presence of a separate soliton, is very similar to the mKdV case analyzed in the main body of this paper. The pure $N$-soliton solution of the KdV equation (\cite{Teschl}) is defined in terms of a $2$-dimensional row vector $\bm m$ such that
\begin{enumerate}[label=\arabic*.]
\item $\bm m(k;x,t)$ is meromorphic in $\mathbb{C}$, with simple poles at $\{ i \kappa_{j} \}_{j=1}^{N} \subset i\mathbb{R}_{+}$ 
and at the corresponding conjugate points $\{-i \kappa_{j} \}_{j=1}^{N} \subset i \mathbb{R}_{-}$;
\item $\bm m$ satisfies the residue conditions
\begin{equation}\label{residue_soliton}
\begin{gathered}
\Res_{k=i \kappa_{j}} \bm m (k) = \lim_{k \to i \kappa_{j}} \bm m(k) \begin{bmatrix} 0 & 0 \\ \displaystyle i \chi_j e^{2i\,  \theta(k; x,t) }  & 0 \end{bmatrix}\, ,  \\ 
\Res_{k=-i \kappa_{j}} \bm m(k) = \lim_{k \to -i {\kappa_{j}}} \bm m(k) \begin{bmatrix} 0 & \displaystyle -i \chi_j e^{-2i \, \theta(k; x,t) } \\ 0 & 0\end{bmatrix} \, ,
\end{gathered}
\end{equation}
where $\theta(k,x,t) = kx - 4tk^3$ and the norming constants $\chi_j\in \R_+$;
\item $\displaystyle \bm m(k)= \begin{bmatrix}1&1 \end{bmatrix} + \mathcal{O}\le(\frac{1}{k}\ri)$ as $k \rightarrow \infty$,
\item $\bm m$ satisfies the symmetry
\[
\bm m(-k)=\bm m(k)\begin{bmatrix}0&1\\1&0\end{bmatrix}\,.
\]
\end{enumerate}

The KdV solution then reads
\begin{gather}
q_N(x,t) = 2\partial_x \left( \lim_{k \to \infty} \frac{ k}{i} \left( \bm m(k;x,t)_1 - 1 \right) \right)  = 2 \ \partial_x \left( \sum_{k=1}^N  \alpha_k(x,t)\ri) \ ,
\end{gather}
where the second identity follows from the ansatz
\begin{eqnarray}
\label{M_soliton}
\bm m(k;x,t) = \left(  1 + \sum_{j=1}^{N} \frac{i \alpha_{j}(x,t)}{k- i \kappa_{j}} , \ 1 - \sum_{j=1}^{N} \frac{i \alpha_{j}(x,t)}{k+ i \kappa_{j}}  \right) \ .
\end{eqnarray}

Plugging the ansatz into the residue conditions gives a system of equations that yields
\begin{gather}
\sum_{k=1}^N \alpha_k(x,t)  =  -\Tr \le( (\bm I_N +\bm A)^{-1} \frac{\partial}{\partial x}\bm A \ri) = - \frac{\partial}{\partial x} \ln \det \le( \bm I_N  + \bm A\ri)
\end{gather}
where the matrix $\bm A$ has entries $$\bm A_{j\ell} = \frac{\sqrt{\chi_j}\sqrt{\chi_\ell} e^{i (\theta_j(x,t)+\theta_\ell(x,t))} }{\kappa_j+\kappa_\ell},$$
with $\theta_j = \theta(i \kappa_j; x,t)$.
Thus, 
\begin{gather}
q_N(x,t) = -2\frac{\partial^2}{\partial x^2}\ln \det  \le(\bm I_N  + \bm A \ri)\ . \label{eq9}
\end{gather}

As in the mKdV case, by rescaling $\chi_j \mapsto \frac{\chi_j}{{N}}$ and taking the limit $N \to \infty$ (under the same assumptions on the norming constants $\chi_j$'s), we obtain that the matrix determinant converges to a Fredholm determinant 
\[
q(x,t) = -2\frac{\partial^2}{\partial x^2}\ln \det  \Big( \Id_{L^2(\Sigma_1)} + \mathcal{K}\Big)
\]
with integral operator
\begin{align}
\mathcal{K}\le[f\ri](k)
&= \int_{\Sigma_1} \sqrt{r(k)} \sqrt{r(\xi)}   \frac{e^{i (\theta(k;x,t)+\theta(\xi; x,t))}}{k+\xi} \,  f(\xi)\,  \frac{\d \xi}{2\pi } 
\end{align}
where $\Sigma_1 = i( \eta_1,  \eta_2)$. Finally,  it is straightforward to verify that (see again \cite{BertolaCafasso11})
\begin{align*}
q(x,t) =  -2 \partial_x \le( \frac{1}{2}\int_{\Sigma_1\cup \Sigma_2} \Tr \le( {\bm \Gamma}_-^{-1} {\bm \Gamma}_-' \partial_x \bm J\bm J^{-1}\ri) \frac{\d k}{2\pi i} \ri)  
=2\, \partial_x  {\bm\Gamma}_{1;11}
\end{align*}
where ${\bm \Gamma}$ is the RH problem solution of the KdV soliton gas found in \cite{GirGraJenMcL}.

In the presence of an extra soliton, the solution can also be written as a Fredholm determinant. 
\begin{thm}
The function
\begin{gather}
q(x,t) =  -2\frac{\partial^2}{\partial x^2} \ln \det \Big( \Id_{L^2(\mathcal C)} + \mathcal{K}\Big) \label{FredDetKdVinfty}
\end{gather}
is the soliton + soliton gas solution of the KdV equation, where the integral operator $ \mathcal{K}$ has kernel
\begin{gather}
 {K}(k,z) :=    \frac{\sqrt{\widetilde r(k)} e^{i \, \theta(k; x,t)}  \, \sqrt{\widetilde r(z)} e^{i \theta(z; x,t)}}{2\pi (k+z)} , \quad k,z \in \mathcal C \ .
\end{gather}
where $\mathcal C = \Sigma_1 \cup \mathcal C_0$, with $\mathcal C_0$ being a small loop (oriented counterclockwise) circling the extra pole $i \kappa_0 \in i\R_+$ and not intersecting the real line nor $\Sigma_1$, and the function $\widetilde r(k)$ is defined as $\widetilde r(k) = r(k)$ for $k \in \Sigma_1$ (described in \eqref{cjasR1}) and $\widetilde r(k) = \frac{\chi_0}{k - i\kappa_0}$ for $k \in \mathcal C_0$.
\end{thm}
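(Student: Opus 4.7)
My plan is to mirror closely the derivation carried out in Section~\ref{sec-fredholm} for the mKdV case, adapted to the KdV structure already laid out at the beginning of this appendix. The starting point is the finite $N$-soliton formula \eqref{eq9}, namely $q_N(x,t) = -2\partial_x^2 \log\det(\bm{I}_N + \bm{A})$, where $\bm A_{j\ell} = \sqrt{\chi_j\chi_\ell}(\kappa_j+\kappa_\ell)^{-1} e^{i(\theta_j+\theta_\ell)}$. I will split the $N+1$ poles into two groups: the first $N$ (with norming constants rescaled as $\chi_j \mapsto \chi_j/N$ and sampled according to the density $\varrho(k)$ and the discretization \eqref{cjasR1}) will form the gas, and the single extra pole $i\kappa_0$ with unrescaled norming constant $\chi_0$ will form the trial soliton.

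First I would factor $\bm A = \mathcal{A}_N \circ \mathcal{B}_N$ as a composition of operators between $\mathbb{C}^{N+1}$ and $L^2(-\infty, x)$, exactly as in Section~\ref{sec-fredholm} (with the minor sign change in $\theta$ that distinguishes KdV from mKdV). Using the standard identity $\det(\bm I + \mathcal A_N\mathcal B_N)=\det(\Id_{L^2} + \mathcal B_N \mathcal A_N)$ we convert the finite determinant into a Fredholm determinant whose integral kernel is a Riemann sum converging, as $N\to\infty$, to
\[
F(s,t) = \int_{\Sigma_1} r(\zeta)\, e^{i\theta(\zeta;s,t)}\,\frac{\d\zeta}{2\pi i}.
\]
This handles the gas contribution. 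For the trial soliton at $i\kappa_0$, the crucial observation is that its contribution to $\bm A$ is a rank-one term of the form $\chi_0 e^{i(\theta(i\kappa_0)+\theta(i\kappa_0))}/(2i\kappa_0)$; by the residue theorem this equals precisely the loop integral $\oint_{\mathcal C_0}\frac{\chi_0}{\zeta - i\kappa_0}e^{i(\theta(\zeta)+\theta(i\kappa_0))}\frac{\d\zeta}{2\pi i}$. Hence absorbing the soliton pole by enlarging the integration contour from $\Sigma_1$ to $\mathcal C = \Sigma_1\cup \mathcal C_0$, and defining $\widetilde r$ piecewise as in the statement, produces the unified integral operator $\mathcal K$ of the theorem.

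Next I would justify the limit at the level of Fredholm determinants. Both $\mathcal K_N$ and $\mathcal K$ are trace class (each being a product of two Hilbert--Schmidt operators of Cauchy-integral type), the finite kernel $F_N$ together with its $x$-derivatives converges uniformly to $F$, and the same argument as in the proof in Section~\ref{sec-fredholm} (invoking continuity of $\det$ with respect to the trace-class topology, cf. \cite[Chapter 2]{Simon_trace}) gives
\[
\det(\Id_{L^2} + \mathcal K_N)\longrightarrow \det(\Id_{L^2(\mathcal C)} + \mathcal K)
\]
as $N\to \infty$, along with convergence of the $x$-derivatives needed to pass to the limit inside $-2\partial_x^2\log\det$.

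Finally I would verify that the limiting formula \eqref{FredDetKdVinfty} actually solves KdV. This is where I follow the Bertola--Cafasso framework \cite{BertolaCafasso11}: the Fredholm determinant $\det(\Id + \mathcal K)$ is precisely of the integrable-kernel type treated there, so its logarithmic derivative admits the representation
\[
\partial_x \log\det(\Id_{L^2(\mathcal C)} + \mathcal K) = \int_{\mathcal C} \Tr\bigl(\bm\Gamma_-^{-1}\bm\Gamma_-' \, \partial_x \bm J \, \bm J^{-1}\bigr)\frac{\d k}{2\pi i}
\]
where $\bm\Gamma$ solves the associated $2\times 2$ Riemann--Hilbert problem with jump matrix $\bm J$ on $\mathcal C$. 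Taking one further $x$-derivative and evaluating the residue at $k=\infty$, exactly as in the mKdV derivation and as already carried out in the pure-gas case in \cite{GirGraJenMcL}, yields $q(x,t) = 2\partial_x \bm\Gamma_{1;11}$, which is the standard recovery formula for the KdV potential from its RH problem. Since that RH problem is the continuum limit of the $N$-soliton RH problem (with the loop $\mathcal C_0$ encoding the residue condition at $i\kappa_0$ in the usual way), the resulting $q(x,t)$ is a bona fide solution of the KdV equation representing the soliton + soliton gas.

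The main technical obstacle is the second step --- justifying the interchange of the $N\to\infty$ limit with the two $x$-derivatives outside the logarithm. Uniform-in-$x$ (on compact sets) control of $F_N$ and its derivatives is needed for convergence in trace-class norm of $\mathcal K_N$ and of $\partial_x^k \mathcal K_N$; the estimates are the same as those recorded in Section~\ref{sec-fredholm} once one notes that the only structural difference between KdV and mKdV at this level is the sign of $\theta$ and the factor of $2$ in the final formula $-2\partial_x^2$ versus $i\partial_x[\log\det(\Id+\mathcal K) - \log\det(\Id-\mathcal K)]$.
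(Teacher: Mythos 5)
Your proposal is correct and follows essentially the same route as the paper: the finite-$N$ determinant formula \eqref{eq9}, the operator factorization and the identity $\det(\bm I+\mathcal A_N\mathcal B_N)=\det(\Id+\mathcal B_N\mathcal A_N)$, trace-class convergence of the rescaled kernels, absorption of the unrescaled pole at $i\kappa_0$ into a loop $\mathcal C_0$ via the residue identity $\oint_{\mathcal C_0}\frac{\chi_0}{\zeta-i\kappa_0}e^{i\theta(\zeta)}\frac{\d\zeta}{2\pi i}=\chi_0 e^{i\theta(i\kappa_0)}$, and verification of the KdV equation through the Bertola--Cafasso Riemann--Hilbert correspondence. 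The only cosmetic discrepancy is the choice of half-line for the intermediate $L^2$ space (the KdV sign conventions here put the gas on a left half-line), which does not affect the argument.
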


\addtocontents{toc}{\protect\setcounter{tocdepth}{0}}

\section*{Acknowledgments}

This manuscript was partially developed upon work supported by the National Science Foundation under Grant No. DMS-1928930 while TG, MG and KM participated in a program hosted by the Mathema\-tical Sciences Research Institute in Berkeley, California, during the Fall '21 semester  ``Universality and Integrability in Random Matrix Theory and Interacting Particle Systems". 
TG, MG, RJ, KM would also like to thank the Isaac Newton Institute for Mathematical Sciences in Cambridge, UK, for support (EPSRC grant No. EP/R014604/1) and hospitality during the programme ``Dispersive hydrodynamics: mathematics, simulation and experiments, with applications in nonlinear waves" (HYD2) in the Summer and Fall '22 semesters, where part of the work on this paper was undertaken. 

TG  acknowledges the  funding from the European Union’s H2020 research and innovation programme under the Marie Sk\l{}odowska–Curie grant No. 778010 IPaDEGAN,  the research project "Mathematical Methods in Non Linear Physics" (MMNLP), Gruppo 4-Fisica Teorica of INFN and the support of GNFM-INDAM group.   
KM was supported in part by the National Science Foundation under grant DMS-1733967.  MG acknowledges the support of the Natural Sciences and Enginee\-ring Research Council of Canada (NSERC) grant No. RGPIN-2022-04106 and the partial support of the Simons Foundation Fellowship while visiting the Isaac Newton Institute. RJ acknowledges the support of the Simons Foundation under grant 853620. 

The authors thank Iryna Egorova for interesting and useful discussions regarding decay rate of the potentials to the elliptic background for $x \to + \infty$, and Percy Deift for many insightful suggestions.

\printbibliography

%---------------END OF DOCUMENT----------------------

\end{document}